\newtheorem{theorem}{Theorem}[section]
\newtheorem{lemma}[theorem]{Lemma}
\newtheorem{RHproblem}[theorem]{RH problem}
\newenvironment{rhproblem}{\begin{RHproblem}\rm}{\end{RHproblem}}
\DeclareMathOperator{\Tr}{Tr}
\DeclareMathOperator{\Res}{Res}
\DeclareMathOperator{\Ai}{Ai}
\DeclareMathOperator{\area}{area}
\DeclareMathOperator{\diag}{diag}
\DeclareMathOperator{\supp}{supp}
\renewcommand{\Re}{\mathop{\mathrm{Re}}}
\renewcommand{\Im}{\mathop{\mathrm{Im}}}
\newcommand{\ds}{\displaystyle}
\theoremstyle{definition}
\newtheorem{definition}[theorem]{Definition}
\newtheorem{remark}[theorem]{Remark}
\numberwithin{equation}{section}
\begin{document}
\pagenumbering{arabic}
\title{Orthogonal polynomials in the normal matrix model with a cubic potential}
\author{Pavel M. Bleher\footnote{Department of Mathematical Sciences, Indiana University-Purdue University Indianapolis, 402 N. Blackford
St., Indianapolis, IN 46202, U.S.A. email: bleher@math.iupui.edu.} \ and \  
Arno B.J. Kuijlaars\footnote{Department of Mathematics, Katholieke Universiteit Leuven, Celestijnenlaan 200B bus 2400, 3001~Leuven, Belgium.
email: arno.kuijlaars@wis.kuleuven.be.}}
\date{\today}
\maketitle

\begin{abstract}
We consider the 
normal matrix model with a cubic potential. The model is ill-defined, 
and in order to reguralize it, Elbau and Felder introduced a model with a 
cut-off and corresponding system of orthogonal polynomials with respect to a varying exponential
weight on the {\it cut-off region} on the complex plane. 
In the present paper we show how to define  orthogonal polynomials 
on a specially chosen {\it system of infinite contours} on the complex plane, 
 without any cut-off, which satisfy the same
recurrence algebraic identity  that is asymptotically valid
for the orthogonal polynomials of Elbau and Felder. 

The main goal of this paper is to develop the Riemann-Hilbert (RH) approach to the 
orthogonal polynomials under consideration
and to obtain their asymptotic behavior on the complex plane as the 
degree $n$ of the polynomial goes to infinity. 
As the first step in the RH approach, we introduce an auxiliary vector equilibrium 
problem for a pair of measures $(\mu_1,\mu_2)$ on the complex plane. We then 
formulate a $3\times 3$ matrix valued RH problem for the orthogonal polynomials 
in hand, and we apply the nonlinear steepest descent method of Deift-Zhou to 
the asymptotic analysis of the RH problem. The central steps in our study 
are a sequence of transformations of the RH problem, based on the 
equilibrium vector measure $(\mu_1,\mu_2)$, and the construction of a global parametrix.

The main result of this paper is a derivation of the large $n$ asymptotics of 
the orthogonal polynomials on the whole complex plane.  
We prove that the distribution of zeros of the orthogonal polynomials converges to 
the measure $\mu_1$, the first component of the equilibrium measure. We also 
obtain analytical results for the measure $\mu_1$ relating it to the 
distribution of eigenvalues in the normal matrix model which is uniform 
in a domain bounded by a simple closed curve.

\end{abstract}

\section{Introduction}

The normal matrix model is a probability measure on the space of
$n \times n$ normal matrices $M$ of the form
\begin{equation} \label{NMM1} 
	\frac{1}{Z_n} \exp \left( - n \Tr \mathcal{V}(M) \right) dM.
\end{equation}
A typical form for $\mathcal V$ is
\begin{equation} \label{NMM2} 
	\mathcal V(M) = \frac{1}{t_0} \left(M M^* - V(M) - \overline{V}(M^*)) \right), \qquad t_0 > 0,
	\end{equation}
where $V$ is a polynomial and $\overline{V}$ is the polynomial obtained from $V$ by conjugating
the coefficients. In this case, the model may alternatively be defined
on general $n \times n$ complex matrices $M$, see e.g.\ \cite{Zab2}. In this paper we study in particular
\eqref{NMM1}--\eqref{NMM2} with a cubic potential 
\begin{equation} \label{cubicV}  
	V(M) = \frac{t_3}{3} M^3,  \qquad t_3 > 0.
	\end{equation}

The main feature of the normal matrix model is that the eigenvalues of $M$ fill out a 
bounded two-dimensional domain $\Omega$ as $n \to \infty$ with a uniform density. 
Wiegmann and Zabrodin \cite{WieZab} showed that if
\begin{equation} \label{generalV} 
	V(M) = \sum_{k=1}^{\infty} \frac{t_k}{k} M^k
	\end{equation}
then $\Omega = \Omega(t_0; t_1, t_2, \ldots)$ is such that 
\begin{equation} \label{harmonicmoments} 
	t_0 = \frac{1}{\pi} \area(\Omega), \qquad t_k = - \frac{1}{\pi} \iint_{\mathbb C \setminus \Omega} \frac{dA(z)}{z^k}, \quad k=1,2,3, \ldots. 
	\end{equation}
	where $dA$ denotes the two-dimensional Lebesgue measure in the complex plane. The relations \eqref{harmonicmoments}
characterize the domain $\Omega$ by means of its area $\pi t_0$ and its exterior harmonic moments $t_k$ for $k \geq 1$
(it is assumed that $0 \in \Omega$, and the integrals in \eqref{harmonicmoments} 
need to be regularized for $k\leq 2$). 
An important fact, first shown in \cite{KKMWZ}, is that the boundary of $\Omega$ as a function of $t_0 > 0$ (which is 
seen as a time parameter) evolves  according to the model of Laplacian growth. The Laplacian growth is unstable 
and singularities such as boundary cusps may appear in finite time. 
See also \cite{MiPuTe,TeBeAgZaWi,Zab1,Zab2} for related work and surveys.

However, from a mathematical point of view, the model \eqref{NMM1}--\eqref{NMM2} is
not well-defined if $V$ is a polynomial of degree $\geq 3$, since then the integral
\begin{equation} \label{Zn} 
	Z_n = \int \exp(-n \Tr \mathcal V(M)) dM 
	\end{equation}
diverges, and one cannot normalize the measure \eqref{NMM1} to make it a probability measure.
To make the model well-defined, Elbau and Felder \cite{Elb,ElbFel} propose
to use a cut-off procedure. Instead of considering all normal matrices $M$, they restrict
to normal matrices with spectrum in a fixed bounded domain $D$. The integral \eqref{Zn} 
restricted to all such matrices is convergent and the model is well-defined.  
As it is the case for unitary random matrices, the eigenvalues of $M$ are then
distributed according to a determinantal point process with a kernel that is built out of
orthogonal polynomials with respect to the scalar product 
\begin{equation} \label{Dscalarproduct} 
	\langle f, g \rangle_D = \iint_D f(z) \overline{g(z)}  e^{-n \mathcal V(z)} dA(z) 
	\end{equation}
(which depends on $n$), with 
\begin{equation} \label{weightV} 
	\mathcal V(z) = \frac{1}{t_0} \left( |z|^2 - V(z) - \overline{V(z)}\right),
	\end{equation}
see \cite{Elb}. 
For each $n$, we have the sequence $(P_{k,n})_{k=0}^{\infty}$ of monic polynomials (i.e., $P_{k,n}(z) = z^k + \cdots$) 
such that
\[ \langle P_{k,n}, P_{j,n} \rangle_{D} = h_{k,n} \delta_{j,k}, \]
and then the correlation kernel for the determinantal point process is
\[ K_n(w,z) = e^{-\frac{n}{2}(\mathcal V(w) + \mathcal V(z))} \sum_{k=0}^{n-1} \frac{P_{k,n}(z) \overline{P_{k,n}(w)}}{h_{k,n}}. \]
Elbau and Felder \cite{Elb,ElbFel} prove that for a polynomial $V$ as in \eqref{generalV} with $t_1 =0$, $|t_2| < 1$,
and for $t_0$ small enough it is possible to find a suitable domain $D$ such that indeed the
eigenvalues of the normal matrix model with cut-off $D$ accumulate on a domain $\Omega$ as $n \to \infty$.
The domain $\Omega$ is characterized by \eqref{harmonicmoments} and so in particular evolves according to Laplacian
growth in the time parameter $t_0$. Note that the cut-off approach works fine for $t_0$ small enough but
fails to capture important features of the normal matrix model such as the formation
of cusp singularities at a critical value of $t_0$.

Elbau \cite{Elb} also discusses the zeros of the orthogonal polynomials $P_{n,n}$ as $n \to \infty$. 
For the cubic potential \eqref{cubicV} and again for $t_0$ sufficiently small, 
he shows that these zeros accumulate on a starlike set
\begin{equation} \label{Sigma1Elbau} 
	\Sigma_1 = [0,x^*] \cup [0, \omega x^*] \cup [0, \omega^2 x^*], \qquad \omega = e^{2\pi i/3},
	\end{equation}
for some explicit value of $x^* >0$. The set $\Sigma_1$ is contained in $\Omega$. 
The limiting distribution of zeros is a probability measure $\mu_1^*$ on $\Sigma_1$ 
satisfying
\begin{equation} \label{mu1balayage} 
	\int \log|z-\zeta| \, d\mu_1^*(\zeta) = \frac{1}{\pi t_0} \iint_{\Omega} \log |z-\zeta| \, dA(\zeta),
	\qquad z \in \mathbb C \setminus \Omega. 
	\end{equation}
For general polynomial $V$ and $t_0$ small enough, Elbau conjectures
that the zeros of the orthogonal polynomials
accumulate on a tree-like set strictly contained in $\Omega$ with a limiting distribution whose
logarithmic potentials outside of $\Omega$ agrees with that of the normalized Lebesgue measure on $\Omega$.

In this paper we want to analyze the orthogonal polynomials for the normal matrix model without
making a cut-off. We cannot use the scalar product
\begin{equation} \label{Cscalarproduct} 
	\langle f, g \rangle = \iint_{\mathbb C} f(z) \overline{g(z)} e^{-n \mathcal V(z)} dA(z) 
	\end{equation}
defined on $\mathbb C$ since the integral \eqref{Cscalarproduct} diverges if
$f$ and $g$ are polynomials and  $V$ is a polynomial of degree $\geq 3$.
Our approach is to replace \eqref{Cscalarproduct} by a Hermitian form defined on
polynomials that is a priori not given by any integral, but that should satisfy
the relevant algebraic properties of the scalar product \eqref{Cscalarproduct}. 
We define and classify these Hermitian forms. See the next section for precise
statements.

It turns out that there is more than one possibility for such a Hermitian form.
We conjecture that for any polynomial $V$ it is possible to choose the Hermitian
form in such a way that that the corresponding orthogonal 
polynomials have the same asymptotic behavior as the 
orthogonal polynomials in the cut-off approach of Elbau and Felder. That is,
for small values of $t_0$ the zeros of the polynomials accumulate on a tree like set $\Sigma_1$
with a limiting distribution $\mu_1^*$ satisfying \eqref{mu1balayage}.

We are able to establish this for the cubic case \eqref{cubicV} and this is the
main result of the paper. We recover
the same set $\Sigma_1$ as in \eqref{Sigma1Elbau} and also the domain $\Omega$ that
evolves according to the Laplace growth, as we will show. In our approach we do not
have to restrict to $t_0$ sufficiently small. We can actually take any $t_0$ up to 
the critical time $t_{0,crit}$. For this value the endpoints of $\Sigma_1$ come to the boundary
of $\Omega$. Then three cusps are formed on the boundary and the Laplacian growth
breaks down. An asymptotic analysis at the critical time would involve the
Painlev\'e I equation, which is what we see in our model and that we will address
in a future paper.

We note that Ameur, Hedenmalm and Makarov \cite{AmHeMa, HedMak} do not use a cut-off. Instead 
they consider cases where $\mathcal V(z)$ tends to $+\infty$ as $z \to \infty$ in all directions of
the complex plane. This approach does not include cases \eqref{weightV} with a 
polynomial $V$ of degree $\geq 3$.

\section{Statement of results} \label{statement}

\subsection{Hermitian forms} \label{subsecHermitian}

We propose to consider orthogonal polynomials with respect to Hermitian forms
that share the algebraic properties of the scalar product \eqref{Cscalarproduct}
with $\mathcal V(z)$ given by \eqref{weightV}.
The Hermitian form is a sesquilinear form $\langle \cdot, \cdot \rangle$ defined
on the vector space of complex polynomials in one variable that satisfies
the Hermitian condition
\begin{equation} \label{hermitian} 
	\langle f, g \rangle = \overline{ \langle g, f \rangle }.
	\end{equation}
We use the convention that $\langle \cdot, \cdot \rangle$ is linear in the
first argument and conjugate-linear in the second.

To see what kind of condition we want to put on the Hermitian form
we look at the scalar product \eqref{Dscalarproduct} on the cut-off region $D$.
From the complex version of Green's theorem we obtain for polynomials $f$ and $g$
\begin{multline}
	\frac{t_0}{2i} \oint_{\partial D} f(z) \overline{g(z)} e^{-n \mathcal V(z)} dz = 
		t_0 \iint_D \frac{\partial}{\partial \overline{z}} \left[f(z) \overline{g(z)} e^{-n \mathcal V(z)} \right] dA(z)  \\
		= t_0 \iint_D f(z) \overline{g'(z)} e^{-n \mathcal V(z)} dA(z) 
	  -  n \iint_D z f(z) \overline{g(z)} e^{-n \mathcal V(z)} dA(z) \\
		 			+  n  \iint_D f(z) \overline{V'(z) g(z)} e^{-n \mathcal V(z)} dA(z),
			\end{multline}
which can be written as 
\begin{equation} \label{structureD0} t_0 \langle f, g' \rangle_D - n  \langle zf, g \rangle_D 
	+ n  \langle f, V' g \rangle_D
	= \frac{t_0}{2i} \oint_{\partial D} f(z) \overline{g(z)} e^{-n \mathcal V(z)} dz. 
\end{equation}
The cut-off approach works if the domain $D$ is chosen such that 
the effect of the boundary $\partial D$ becomes small when $n$ is large. 
This means that the boundary integral in \eqref{structureD0} is exponentially small 
compared to the  other terms in \eqref{structureD0} in the large $n$ limit.

Inspired by \eqref{structureD0} our idea is to ignore the right-hand side of \eqref{structureD0} 
and require that the Hermitian form  satisfies the identity
\begin{equation} \label{structure} 
	t_0 \langle f, g' \rangle - n \langle zf, g \rangle + n \langle f, V' g \rangle = 0 
	\end{equation}
for all polynomials $f$ and $g$. We call \eqref{structure} the structure relation. 
Combining \eqref{structure} with \eqref{hermitian} we also have the dual structure relation
\begin{equation} \label{structuredual} 
	t_0 \langle f' , g \rangle - n \langle f, zg \rangle + n \langle V' f,  g \rangle = 0. 
	\end{equation}
We do not impose the condition that the Hermitian form is positive definite, and therefore
it may not be a scalar product.

Our first result is a classification of Hermitian forms satisfying \eqref{hermitian}
and \eqref{structure}.
Sesquilinear forms that satisfy \eqref{structure} can be found as follows.
We assume $V$ is a polynomial and 
\[ d = \deg V - 1. \]
There are $d+1$ directions in the complex plane, given by $\arg z = \theta_j$, $j=0,1, \ldots, d$,
such that 
\[ \arg(- V(z)) \to 0 \qquad \text{as } |z| \to \infty \text{ with } \arg z = \theta_j. \]
We choose the directions in such a way that   
\begin{equation} \label{thetaj} 
	\theta_j = \theta_0 + \frac{2\pi j}{d+1}, \qquad j=0,1,\ldots, d+1.
	\end{equation}
Let $\Gamma_j$ for $j=0, \ldots, d$, be a simple smooth curve in the complex plane, starting at infinity 
in the direction $\arg z = \theta_j$ and ending at infinity in the direction $\arg z = \theta_{j+1}$.
See Figure \ref{fig:contoursGamma} for possible contours $\Gamma_j$ in case $d =2$ and $t_3 > 0$.
Then define for polynomials $f$ and $g$, and for $j,k=0,1, \ldots, d$,
\begin{equation} \label{PhijkV} \Phi_{j,k}(f,g) = \int_{\Gamma_j} dz \int_{\overline{\Gamma}_k} dw
	f(z) \overline{g}(w)  e^{- \frac{n}{t_0} \left(wz - V(z) - \overline{V}(w) \right)},
		\end{equation}
where $\overline{g}$ is the polynomial obtained from $g$ by conjugating the coefficients.
The integrals in \eqref{PhijkV} are convergent because of our choice of contours.
	
It is an easy exercise, based on integration by parts, that the sesquilinear forms $\Phi_{j,k} (\cdot, \cdot)$
indeed satisfy \eqref{structure} and \eqref{structuredual}. Then so does any linear combination.
We prove the following.

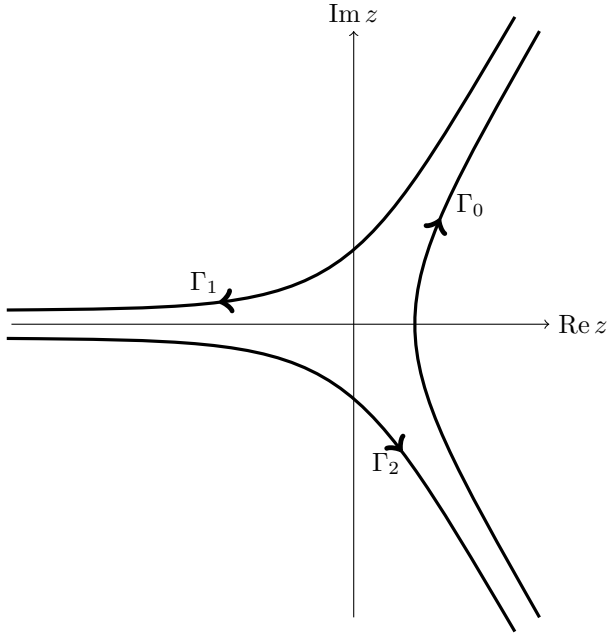
\begin{figure}[t]
\begin{center}
\begin{tikzpicture}[scale=1.3,decoration={markings,mark=at position .67 with {\arrow[black,line width=0.8mm]{>};}}]
\draw[->](-3.5,0)--(2,0) node[right]{$\Re z$};
\draw[->](0,-3)--(0,3) node[above]{$\Im z$};

\draw[postaction={decorate}, very thick] (1.9,-3)..controls (0.2,0)..(1.9,3) node[near end, right]{$\Gamma_0$};
\draw[postaction={decorate}, very thick, rotate around={120:(0,0)}] (1.9,-3)..controls (0.2,0)..(1.9,3) node[near end, above]{$\Gamma_1$};
\draw[postaction={decorate}, very thick, rotate around={-120:(0,0)}] (1.9,-3)..controls (0.2,0)..(1.9,3) node[near end, left]{$\Gamma_2$};
\end{tikzpicture}
\end{center}
\caption{Contours $\Gamma_0$, $\Gamma_1$, $\Gamma_2$ for the case of a cubic potential ($d=2$)}
\label{fig:contoursGamma}
\end{figure}

\begin{theorem} \label{theorem1}
For every $n \in \mathbb N$, $t_0 > 0$, and polynomial $V$ of degree $\deg V = d+1$, 
the real vector space of Hermitian forms satisfying \eqref{hermitian} and
\eqref{structure}  is $d^2$ dimensional.

Every such Hermitian form can be uniquely represented as
\begin{equation} \label{HformV}
 \langle f, g \rangle = \sum_{j=0}^d \sum_{k=0}^d C_{j,k} \Phi_{j,k}(f,g)
 \end{equation}
where $\Phi_{j,k}$ is given by \eqref{PhijkV} and
where $C = \begin{pmatrix} C_{j,k} \end{pmatrix}_{j,k=0}^d$ is a Hermitian matrix having  zero row and column sums.
\end{theorem}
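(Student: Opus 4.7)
The plan is to realize the theorem as a combination of three ingredients: (i) checking that the $\Phi_{j,k}$'s satisfy the required algebraic relations, (ii) identifying explicit linear relations among them that match the $(2d+1)$-real-dimensional indeterminacy of Hermitian matrices modulo zero-sum ones, and (iii) bounding the dimension of the target space from above via the structure relation viewed as a recurrence on moments. I first verify that every $\Phi_{j,k}$ satisfies both \eqref{structure} and its dual \eqref{structuredual}. An integration by parts in $w$ of the defining integral \eqref{PhijkV}, using $\partial_w \exp[-\frac{n}{t_0}(wz - V(z) - \overline V(w))] = -\frac{n}{t_0}(z - \overline V'(w)) \exp[\cdots]$, produces \eqref{structure}; boundary terms vanish because $\overline \Gamma_k$ ends at infinity in directions where $\Re \overline V(w) \to -\infty$. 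An analogous integration by parts in $z$ gives \eqref{structuredual}. The symmetry $\overline{\Phi_{j,k}(f,g)} = \Phi_{k,j}(g,f)$ follows by taking complex conjugates in \eqref{PhijkV} and changing variables $(z,w) \mapsto (\bar w, \bar z)$, which interchanges $\Gamma_j \times \overline\Gamma_k$ with $\Gamma_k \times \overline\Gamma_j$. Consequently $\sum C_{j,k}\Phi_{j,k}$ is a Hermitian form if and only if $C$ is a Hermitian matrix.

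Next, I derive the trivial relations $\sum_{j=0}^d\Phi_{j,k} \equiv 0 \equiv \sum_{k=0}^d\Phi_{j,k}$. Deforming each $\Gamma_j$ to two rays from the origin at angles $\theta_j$ and $\theta_{j+1}$, in the formal sum $\sum_j \Gamma_j$ every ray at angle $\theta_i$ is traversed once inward by $\Gamma_i$ and once outward by $\Gamma_{i-1}$, so the cycles cancel. Since the integrand is entire in $z$, the inner $z$-integral at any fixed $w$ therefore vanishes, which gives $\sum_j \Phi_{j,k}(f,g) = 0$; the other relation is analogous. These identities exhibit a subspace $\mathcal{K}$ of the kernel of $\Psi : C \mapsto \sum C_{j,k}\Phi_{j,k}$, consisting of Hermitian matrices of the form $C_{j,k} = \alpha_k + \overline{\alpha_j}$ with $\alpha \in \mathbb C^{d+1}$. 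This subspace has real dimension $2(d+1) - 1 = 2d+1$ (the $-1$ reflecting that the translation $\alpha_j \mapsto \alpha_j + it$ has no effect), and a direct check shows that $\mathcal K \cap \{\text{zero-row-and-column-sum Hermitian matrices}\} = \{0\}$. Hence Hermitian matrices split as the direct sum of $\mathcal K$ and the $(d+1)^2 - (2d+1) = d^2$-dimensional space of zero-sum Hermitian matrices.

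On the other side, the moments $m_{j,k} := \langle z^j, z^k\rangle$ of any Hermitian form satisfying \eqref{structure} obey $m_{k,j} = \overline{m_{j,k}}$ together with
\begin{equation*}
	n m_{j+1,k} = t_0 k\, m_{j,k-1} + n\sum_{\ell=1}^{d+1}\overline{t_\ell}\, m_{j,k+\ell-1}.
\end{equation*}
Since $t_{d+1} \ne 0$, solving for $m_{j,k+d}$ and iterating this recurrence together with its dual from \eqref{structuredual} expresses every $m_{j,k}$ as a linear combination of the initial block $\{m_{j,k}: 0 \le j, k \le d-1\}$, which carries $d^2$ real parameters under Hermiticity. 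Hence the target space has real dimension at most $d^2$. It remains to verify that $\Psi$ restricted to zero-row-and-column-sum Hermitian matrices is injective, for then the image of $\Psi$ has dimension exactly $d^2$ and therefore fills the whole target.

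The main obstacle is precisely this injectivity: showing that a zero-sum Hermitian combination $\sum C_{j,k}\Phi_{j,k}$ vanishing identically as a form must have $C = 0$. A direct attack via moments produces an opaque infinite linear system. A more promising route is to freeze the second argument $g$ and interpret the $d+1$ functions $h_k(z) := e^{\frac{n}{t_0}V(z)} \int_{\overline\Gamma_k} \overline g(w) e^{-\frac{n}{t_0}(wz - \overline V(w))}\, dw$ as an overcomplete basis of solutions to a single linear ODE in $z$, with $\sum_k h_k = 0$ the unique linear relation; pairing against polynomials $f$ would then force any nontrivial zero-sum $C$ to contradict the independence of $h_0, \ldots, h_d$ modulo this one relation. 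A subsidiary technical care is required in the contour deformation of paragraph two: one must avoid closing $\sum_j \Gamma_j$ with an arc at infinity, since the integrand does not decay uniformly across sectors; deforming to rays through the origin sidesteps this.
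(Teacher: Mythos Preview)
Your overall architecture matches the paper's exactly: the row/column sum relations among the $\Phi_{j,k}$, the Hermitian-matrix parametrization, and the upper bound via the moment recurrence are all as in the paper. The genuine gap is the injectivity step, which you yourself flag as ``the main obstacle'' and leave as a sketch. Your heuristic (freeze $g$, view the $h_k$ as an overcomplete ODE basis with only the relation $\sum_k h_k=0$) is the right idea and is essentially what the paper does, but as written it is not a proof: you have not explained how ``pairing against polynomials $f$'' actually isolates the individual $h_k$'s, nor how this forces each row of $C$ to be constant.

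Here is what is missing. The paper takes $g=1$ and deforms each $\Gamma_j$ to the two rays $\arg z=\theta_j$ and $\arg z=\theta_{j+1}$. This rewrites $\langle f,1\rangle$ as a sum $\sum_{j=0}^d \int_0^\infty f(e^{i\theta_j}x)\,e^{-cx^{d+1}}\phi_j(x)\,dx$, where $\phi_j$ is (up to harmless factors) the combination $\sum_k (C_{j-1,k}-C_{j,k})\,h_k$. The crucial trick is then to take $f(x)=x^{(d+1)l+r}$: the factors $e^{i\theta_j\cdot((d+1)l+r)}$ separate by the residue $r$ modulo $d+1$, and letting $l$ range over $\mathbb N$ together with the growth bound $\phi_j(x)=O(e^{c_2 x^d})$ gives a determinate moment problem, forcing $\phi_j\equiv 0$ for each $j$ individually. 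Only then does the ``unique linear relation'' argument apply to conclude that $C_{j-1,k}-C_{j,k}$ is independent of $k$, whence the zero row sums give $C_{j-1,k}=C_{j,k}$ and the zero column sums give $C=0$. Your sketch conflates these two stages (separating the $j$'s via specific polynomials and a density argument, then invoking independence in $k$) into a single sentence; both require work.
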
 
The proof of Theorem \ref{theorem1} is in Section \ref{proofTheo1}.

Now the following obvious question arises. Given a polynomial $V$, is there a Hermitian matrix
$C = (C_{j,k})$ such that the Hermitian form \eqref{HformV} captures the main
features of the normal matrix model? We answer this question for the cubic case. 

\subsection{The cubic potential and rotational symmetry}

We consider from now on the case
\begin{equation} \label{cubicV2} 
	V(z) = \frac{t_3}{3} z^3, \qquad t_3 > 0. 
	\end{equation}
Then $d = \deg V - 1 = 2$ and so by Theorem \ref{theorem1} the
space of Hermitian forms \eqref{HformV} is four-dimensional.
For the case \eqref{cubicV2} it is natural to require 
an additional rotational symmetry
\begin{equation} \label{rotation} 
	\langle f(\omega z), g(\omega z) \rangle = \langle f, g \rangle, \qquad \omega = e^{2\pi i/3}.
	\end{equation}
The condition \eqref{rotation} corresponds to the fact that $V(\omega z) = V(z)$ for the cubic potential \eqref{cubicV2}
so that the integral in \eqref{Cscalarproduct}
is invariant under the change of variables $z \mapsto \omega z$.
Thus \eqref{rotation} is specific for the cubic potential and will have to be modified for other potentials.

One may verify from the definition \eqref{PhijkV} with $V$ given by \eqref{cubicV2} that
\[ \Phi_{j,k} (f(\omega \cdot), g(\omega \cdot)) = \Phi_{j+1,k+1} (f,g), \]
where the indices are taken modulo $3$. 
Thus
\[ \sum_{j=0}^2 \sum_{k=0}^2 C_{j,k} \Phi_{j,k}(f(\omega \cdot), g(\omega \cdot))
	= \sum_{j=0}^2 \sum_{k=0}^2 C_{j-1,k-1} \Phi_{j,k}(f,g), \]
where again the indices are taken modulo $3$ so that for example $C_{-1,-1} = C_{2,2}$.
It follows that \eqref{HformV} satisfies the symmetry condition \eqref{rotation} if and only if
\begin{equation} \label{circulant} 
	C_{j,k} = C_{j-1,k-1}, \qquad \text{(indices modulo $3$)}. 
	\end{equation}	
The condition \eqref{circulant} means that $C = (C_{j,k})$ is a circulant matrix. 
As dictated by Theorem \ref{theorem1}, we also require that $C$ is Hermitian with zero row 
and column sums.

The real vector space of circulant Hermitian matrices of size $3 \times 3$ with zero row
and column sums is two-dimensional. A basis is given by the two matrices
\begin{equation} \label{basismatrix} 
	\begin{pmatrix} 2 & - 1 & - 1 \\ - 1 & 2 & -1 \\ -1 & - 1 & 2 \end{pmatrix},
	\qquad \begin{pmatrix} 0 & i & -i \\ -i & 0 & i \\ i & -i & 0 \end{pmatrix}. 
	\end{equation}
It turns out that we are able to do asymptotic analysis on the orthogonal polynomials
only if we choose for $C$ a multiple of the second basis matrix in \eqref{basismatrix}.
However, we have no a priori reason to prefer this matrix above the other one, or above 
a linear combination of the two. It is only because of our ability to do 
large $n$ asymptotics that we choose
\begin{equation} \label{Cmatrix} 
	C = \frac{1}{2\pi i} \begin{pmatrix} 0 & -1 & 1 \\ 1 & 0 & -1 \\ -1 & 1 & 0 \end{pmatrix}. 
	\end{equation}
This leads to the following definition of the Hermitian form and the corresponding orthogonal
polynomials.	
\begin{definition}
Given $n$, $t_0, t_3 > 0$ we define the Hermitian form $\langle \cdot, \cdot \rangle$ on
the vector space of polynomials by
\begin{equation} \label{Hform3} 
	\langle f, g \rangle = \frac{1}{2\pi i}
	\sum_{j=0}^2 \sum_{k=0}^2 \epsilon_{j,k} 
		\int_{\Gamma_j} dz \int_{\overline{\Gamma}_k} dw 
			f(z)  \overline{g}(w)  e^{- \frac{n}{t_0} \left( wz -  \frac{t_3}{3}(w^3 + z^3) \right)}
			\end{equation}  
where
\begin{equation} \label{epsilonjk}
	\left( \epsilon_{j,k} \right)_{j,k=0}^2 = \begin{pmatrix} 0 & -1 & 1 \\ 1 & 0 & -1 \\ -1 & 1 & 0 \end{pmatrix}.
	\end{equation}

Note that the Hermitian form \eqref{Hform3} depends on $n$ even though we do not emphasize
this in the notation. 
For each $n$, we denote by $(P_{k,n})_{k=0,1,2,\ldots}$ the sequence of monic orthogonal polynomials for
the Hermitian form \eqref{Hform3}. That is, $P_{k,n}(z) = z^k + \cdots$ is a polynomial of
degree $k$ such that
\begin{equation} \label{orthogonal} 
	\langle P_{k,n}, z^j \rangle = 0, \qquad \text{for } j = 0, 1, \ldots, k-1. 
	\end{equation}
\end{definition}

\subsection{Existence of orthogonal polynomials $P_{n,n}$}

The Hermitian form $\langle \cdot, \cdot \rangle$ may not be positive definite,
and therefore the  existence and uniqueness 
of the orthogonal polynomials is not guaranteed. This will be our next
main result. We focus on the diagonal polynomials $P_{n,n}$
and for these polynomials we also determine the limiting behavior
of the zeros.

\begin{theorem} \label{theorem2}
Let $t_3 > 0$ and define
\begin{equation} \label{t0crit} 
	t_{0,crit} = \frac{1}{8 t_3^2}.
	\end{equation}
Then for every $t_0 \in (0, t_{0,crit})$ the orthogonal polynomials $P_{n,n}$ 
for the Hermitian form \eqref{Hform3} exist 
if $n$ is sufficiently large.
In addition, the zeros of $P_{n,n}$ 
accumulate as $n \to \infty$ on the set
\begin{equation} \label{Sigma1}
	\begin{aligned}
	\Sigma_1 & = \{ z \in \mathbb C \mid z^3 \in [0, (x^*)^3] \} =  \bigcup_{j=0}^2 [0, \omega^j x^*],
		 \end{aligned}
		 \end{equation}
where $\omega = e^{2\pi i/3}$ and 
\begin{equation} \label{xstar} 
	x^* = \frac{3}{4 t_3}  \left( 1- \sqrt{1-8 t_0 t_3^2}\right)^{2/3}.
	\end{equation}
\end{theorem}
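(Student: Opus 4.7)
The plan is to set up a matrix Riemann--Hilbert (RH) problem for $P_{n,n}$ and to analyze it by the Deift--Zhou nonlinear steepest descent method. Because the Hermitian form \eqref{Hform3} is a double contour integral over three unbounded contours $\Gamma_j$ (and their conjugates $\overline{\Gamma}_k$), I expect a $3\times 3$ RH characterization of $P_{n,n}$ in the spirit of the Fokas--Its--Kitaev setup, with the three columns reflecting the three contours and the $w$-integrations playing the role of Cauchy transforms. Solvability of this RH problem for a given $n$ is equivalent to the existence (and uniqueness) of $P_{n,n}$, so the existence claim in Theorem 2.2 will follow at the very end from small-norm solvability of the final RH problem.

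The central new ingredient is a vector equilibrium problem for a pair of positive measures $(\mu_1,\mu_2)$, with $\mu_1$ to be identified as the limiting zero distribution. Using the rotational symmetry \eqref{rotation}, I would look for $\mu_1$ supported on a three-fold symmetric set and pass to the variable $\zeta = z^3$, reducing the search for $\mu_1$ to a scalar equilibrium problem on an interval $[0,s^*]\subset\mathbb{R}_{\geq 0}$. The variational Euler--Lagrange equations together with regularity at the endpoints (a soft edge at $s^* = (x^*)^3$ and a generic interior-type behavior at $0$) pin down $s^*$ as a root of a polynomial equation in $t_0,t_3$; solving that equation yields exactly \eqref{xstar}. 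The square root in \eqref{xstar} shows that a real positive solution exists precisely when $8t_0t_3^2 < 1$, i.e.\ $t_0 < t_{0,crit}$, and the endpoints collide with the boundary of $\Omega$ as $t_0\uparrow t_{0,crit}$, which is consistent with the announced cusp formation and the role of $t_{0,crit}$.

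With $(\mu_1,\mu_2)$ in hand I would carry out the usual chain of transformations $Y\mapsto T\mapsto S\mapsto R$. The step $Y\mapsto T$ normalizes $Y$ at infinity using $g$-functions $g_j(z)=\int\log(z-s)\,d\mu_j(s)$; the step $T\mapsto S$ opens lenses around the three rays $[0,\omega^j x^*]$ so that oscillatory jumps become exponentially close to constant matrices; and a global parametrix, built on a three-sheeted Riemann surface dictated by the $3\times 3$ structure and the $\mathbb{Z}_3$ symmetry, produces $S\mapsto R$ with $R$ uniformly close to the identity. Local parametrices are needed at the three soft edges $\omega^j x^*$ (Airy-type) and at the origin, where the three lenses meet; the origin parametrix will be the most delicate piece of explicit construction, but the threefold symmetry reduces it to a single model problem at one of the rays.

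The hardest step is the vector equilibrium analysis itself: one must prove that the measures are positive, that $\supp\mu_1 = \Sigma_1$ with $\Sigma_1$ as in \eqref{Sigma1} (rather than some larger tree-like set), and that the associated $g$-functions satisfy the strict inequalities off $\Sigma_1$ needed to justify the opening of lenses. Once this potential-theoretic input is secured, the small-norm RH problem for $R$ is solvable for all sufficiently large $n$, giving both the existence of $P_{n,n}$ and uniform asymptotics of the form $P_{n,n}(z)\sim e^{n g_1(z)}\cdot(\text{prefactor})$ on compact subsets of $\mathbb{C}\setminus\Sigma_1$. This yields $\frac{1}{n}\log|P_{n,n}(z)|\to \int\log|z-\zeta|\,d\mu_1(\zeta)$ off $\Sigma_1$, and standard lower-envelope arguments from logarithmic potential theory then force the normalized zero counting measures of $P_{n,n}$ to converge weakly to $\mu_1$, whose support is exactly the starlike set $\Sigma_1$ with outer endpoints $\omega^j x^*$.
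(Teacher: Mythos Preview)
Your high-level plan is correct and matches the paper's: a $3\times 3$ RH characterization of $P_{n,n}$, a vector equilibrium problem for $(\mu_1,\mu_2)$, Deift--Zhou steepest descent, and then the asymptotic formula $P_{n,n}(z)=M_{1,1}(z)e^{ng_1(z)}(1+O(1/n))$ on compacts of $\mathbb{C}\setminus\Sigma_1$, from which existence and zero accumulation follow by standard potential theory. Two technical points, however, differ from what you anticipate.

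First, you expect a delicate local parametrix at the origin; in fact none is needed. The jump matrices of the global parametrix around $0$ multiply to the identity, so $M$ is bounded there, and the paper designs the lenses so that the lips of $L_1$ around $\Sigma_1$ do \emph{not} pinch at $0$ but instead land on $\Sigma_2$ at positive distance from the origin (and similarly for the lens $L_2$ around $\Sigma_2$). Only Airy parametrices at the three soft edges $\omega^j x^*$ are required. Second, your three-step chain $Y\mapsto T\mapsto S\mapsto R$ underestimates the preprocessing: because the weights \eqref{weightw0jn}--\eqref{weightw1jn} are built from Airy functions, the paper first performs two transformations $Y\mapsto X\mapsto V$ that exploit the Airy structure to strip the jumps down to elementary exponentials \emph{before} introducing the $g$-functions ($V\mapsto U$). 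There is also a nonstandard ``global opening of lenses'' step $U\mapsto T$ (in the spirit of \cite{ApBlKu}) needed to kill a $(1,3)$ entry in the jump on $\Sigma_2$ that is not exponentially small near $0$; only after that can one open local lenses around both $\Sigma_1$ and the unbounded set $\Sigma_2$. So the actual chain is $Y\mapsto X\mapsto V\mapsto U\mapsto T\mapsto S\mapsto R$. Finally, the paper does not reduce the equilibrium problem to a scalar one via $\zeta=z^3$; it works directly with the vector problem and derives the spectral curve \eqref{spectralcurve}, whose discriminant analysis produces the value \eqref{xstar}.
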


Theorem \ref{theorem2} will follow from a strong asymptotic formula for
the orthogonal polynomials, see Lemma \ref{lemmaRH}.

\subsection{Limiting zero distribution}

For a polynomial $P$ of degree $n$ with zeros $z_1, \ldots, z_n$ in the
complex plane, we use
\begin{equation} \label{zerocounting} 
	\nu(P) = \frac{1}{n} \sum_{j=1}^n  \delta_{z_j} 
	\end{equation}
to denote the normalized zero counting measure. 

The normalized zero counting measures $\nu(P_{n,n})$ of the 
orthogonal polynomials $P_{n,n}$ have a limit that we 
characterize in terms of the solution of a vector equilibrium problem
from logarithmic potential theory.
We use the following standard notation. 
For a measure $\mu$, we define the logarithmic energy
\begin{equation} \label{energy} 
	I(\mu) = \iint \log \frac{1}{|x-y|} d\mu(x) d\mu(y)
	\end{equation}
and for two measures $\mu$ and $\nu$, we define the mutual energy
\begin{equation} \label{mutualenergy} 
	I(\mu,\nu) = \iint \log \frac{1}{|x-y|} d\mu(x) d\nu(y). 
	\end{equation}
In addition to the set $\Sigma_1$ from \eqref{Sigma1} we also need
\begin{equation} \label{Sigma2}
	\begin{aligned}
	\Sigma_2 & = \{ z \in \mathbb C \mid z^3 \in \mathbb R^- \} =  \bigcup_{j=0}^2 [0, -\omega^j \infty). 
		 \end{aligned}
		 \end{equation}
		 
\begin{definition}
Given $t_0, t_3 > 0$,  we define the energy functional
\begin{multline} \label{energyE}
	E(\mu_1, \mu_2) = 
	 I(\mu_1) + I(\mu_2) - I(\mu_1, \mu_2) \\
	 + \frac{1}{t_0} \int 
	 \left( \frac{2}{3\sqrt{t_3}} |z|^{3/2} - \frac{t_3}{3} z^3 \right) d\mu_1(z) \end{multline}
The vector equilibrium problem is to minimize \eqref{energyE} 	 
among all measures $\mu_1$ and $\mu_2$ such that
\begin{align} \label{mu1mu2}
	 \int d\mu_1 & = 1, \quad  \supp(\mu_1) \subset \Sigma_1, \qquad
	 \int d\mu_2 = \frac{1}{2}, \quad \supp(\mu_2) \subset \Sigma_2,
	 \end{align}	 		
where $\Sigma_1$ and $\Sigma_2$ are given by \eqref{Sigma1} and \eqref{Sigma2}.
\end{definition}

We prove the following.

\begin{theorem} \label{theorem3}
Let $t_3 > 0$, $0 < t_0 \leq t_{0,crit}$ and let $x^* > 0$ be given by \eqref{xstar}.
Then there is a unique minimizer $(\mu_1^*, \mu_2^*)$ for $E(\mu_1, \mu_2)$
among all vectors of measures $(\mu_1, \mu_2)$ satisfying \eqref{mu1mu2}.
If $t_0 < t_{0,crit}$ then the first component $\mu_1^*$ of the minimizer is the weak limit of 
the normalized zero counting measures  of the polynomials $P_{n,n}$ as $n \to \infty$.
\end{theorem}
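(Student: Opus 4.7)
The plan is to split the argument into two parts: existence and uniqueness of the minimizer $(\mu_1^*,\mu_2^*)$, and the identification of $\mu_1^*$ with $\lim_{n\to\infty}\nu(P_{n,n})$.

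For the equilibrium problem, I first observe that the quadratic part of $E(\mu_1,\mu_2)$ has the form $\sum_{j,k} a_{jk} I(\mu_j,\mu_k)$ with symmetric interaction matrix
\[ A = \begin{pmatrix} 1 & -\tfrac{1}{2} \\ -\tfrac{1}{2} & 1 \end{pmatrix}, \]
of determinant $3/4$, hence strictly positive definite. This gives strict convexity of $E$ on the affine set of admissible pairs, so a minimizer is unique whenever it exists. On $\Sigma_1$ the external field $\frac{1}{t_0}\bigl(\tfrac{2}{3\sqrt{t_3}}|z|^{3/2}-\tfrac{t_3}{3}z^3\bigr)$ is real and bounded because $\Sigma_1$ is compact and $z^3$ is non-negative along the three rays; together with the continuity and lower semi-continuity of $I$, this handles the $\mu_1$ side in the standard way. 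For $\mu_2$ the effective external field is supplied solely by $-I(\mu_1,\mu_2)$, whose contribution behaves like $|\mu_1|\log|z|=\log|z|$ at infinity. Combined with the self-energy of a measure of total mass $1/2$, a truncation and Helly-selection argument then shows that any minimizing sequence is tight and yields a minimizer.

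For the identification of $\mu_1^*$ with the limit of $\nu(P_{n,n})$, the argument rests on the strong asymptotic formula for $P_{n,n}$ produced by the Deift--Zhou steepest descent analysis of the $3\times 3$ Riemann--Hilbert problem (Lemma \ref{lemmaRH} as referenced in the excerpt). From this formula one extracts, uniformly on compact subsets of $\mathbb{C}\setminus\Sigma_1$,
\[ \frac{1}{n}\log|P_{n,n}(z)| \;=\; \int \log|z-w|\,d\mu_1^*(w) + o(1) \qquad \text{as } n\to\infty. \]
Together with the Euler--Lagrange conditions for $\mu_1^*$ verified during the construction of the global parametrix, which preclude zeros from accumulating away from $\Sigma_1$, the uniqueness of a probability measure on $\Sigma_1$ with a prescribed logarithmic potential on $\mathbb{C}\setminus\Sigma_1$ forces the weak convergence $\nu(P_{n,n})\to\mu_1^*$.

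The main obstacle is twofold. First, existence for $\mu_2$ is delicate because the effective external field grows only logarithmically at infinity, which is the borderline case where mass can a priori leak to $\infty$. The specific normalizations $\int d\mu_1=1$ and $\int d\mu_2=\tfrac{1}{2}$ are exactly what prevent this, and making this rigorous requires careful control of $U^{\mu_2}$ at $\infty$ via the positivity of $A$ applied to rescaled mass distributions. Second, translating the strong asymptotics into weak convergence of $\nu(P_{n,n})$ requires ruling out stray zeros throughout $\mathbb{C}\setminus\Sigma_1$, so the RH analysis must yield uniform, not merely pointwise, control, with effective error bounds that persist as $z$ approaches $\Sigma_1$ from outside.
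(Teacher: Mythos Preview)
Your proposal is correct and, for the identification of $\mu_1^*$ with the weak limit of $\nu(P_{n,n})$, follows the same route as the paper: extract $\tfrac{1}{n}\log|P_{n,n}(z)|\to\int\log|z-s|\,d\mu_1^*(s)$ from Lemma~\ref{lemmaRH} uniformly on compacts of $\mathbb{C}\setminus\Sigma_1$ and invoke standard potential theory (the paper cites \cite[Theorem~III.4.1]{SafTot}). One small correction: what rules out zeros accumulating off $\Sigma_1$ is not the Euler--Lagrange conditions as such, but the non-vanishing of $M_{1,1}$ on $\mathbb{C}\setminus\Sigma_1$, which in the paper is read off from the explicit formula $M_{1,1}=e^{u_1}$ in the global-parametrix construction (Remark~\ref{remarkM11}).

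On existence and uniqueness of the minimizer your route differs from the paper's. Your abstract argument---positive definiteness of the $2\times 2$ interaction matrix $\bigl(\begin{smallmatrix}1 & -1/2\\ -1/2 & 1\end{smallmatrix}\bigr)$ for uniqueness, and a tightness argument exploiting the borderline mass normalization $(1,\tfrac12)$ for existence on the unbounded $\Sigma_2$---is legitimate and self-contained. The paper, by contrast, spends only one line on the abstract side (``since $\Sigma_1$ is compact'') and instead constructs $(\mu_1^*,\mu_2^*)$ explicitly via the cubic spectral curve \eqref{spectralcurve} and its three-sheeted Riemann surface, recovering the densities as boundary values of the $\xi$-functions in \eqref{densitymu1}--\eqref{densitymu2} (Lemma~\ref{lemma41} and Lemma~\ref{lem:cubic}). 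Your approach buys a clean existence/uniqueness statement without the algebraic machinery; the paper's buys explicit formulas for the minimizer, which it needs anyway since the entire steepest-descent proof of Lemma~\ref{lemmaRH} is built on those formulas.
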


\subsection{Two dimensional domain}

We finally make the connection to the domain $\Omega$ that contains
the eigenvalues in the normal matrix model, and that is characterized
by the relations \eqref{harmonicmoments}.
In the cubic model that we are considering we are able to construct 
the domain $\Omega$ in terms of the solution $(\mu_1^*, \mu_2^*)$ 
of the vector equilibrium problem for the 
energy functional \eqref{energyE} as follows.

\begin{figure}
\centering
\includegraphics[width=6cm]{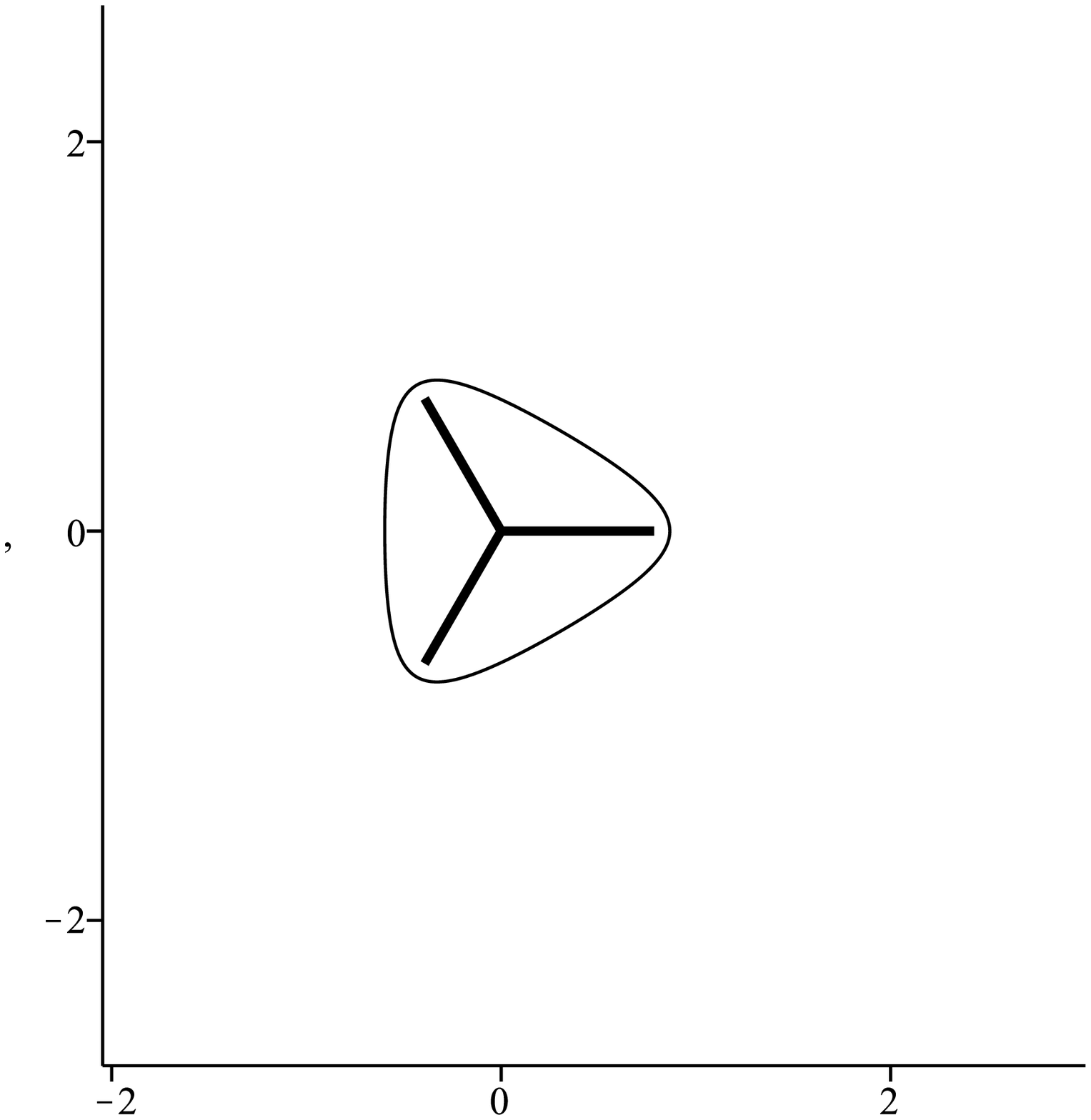} 
\includegraphics[width=6cm]{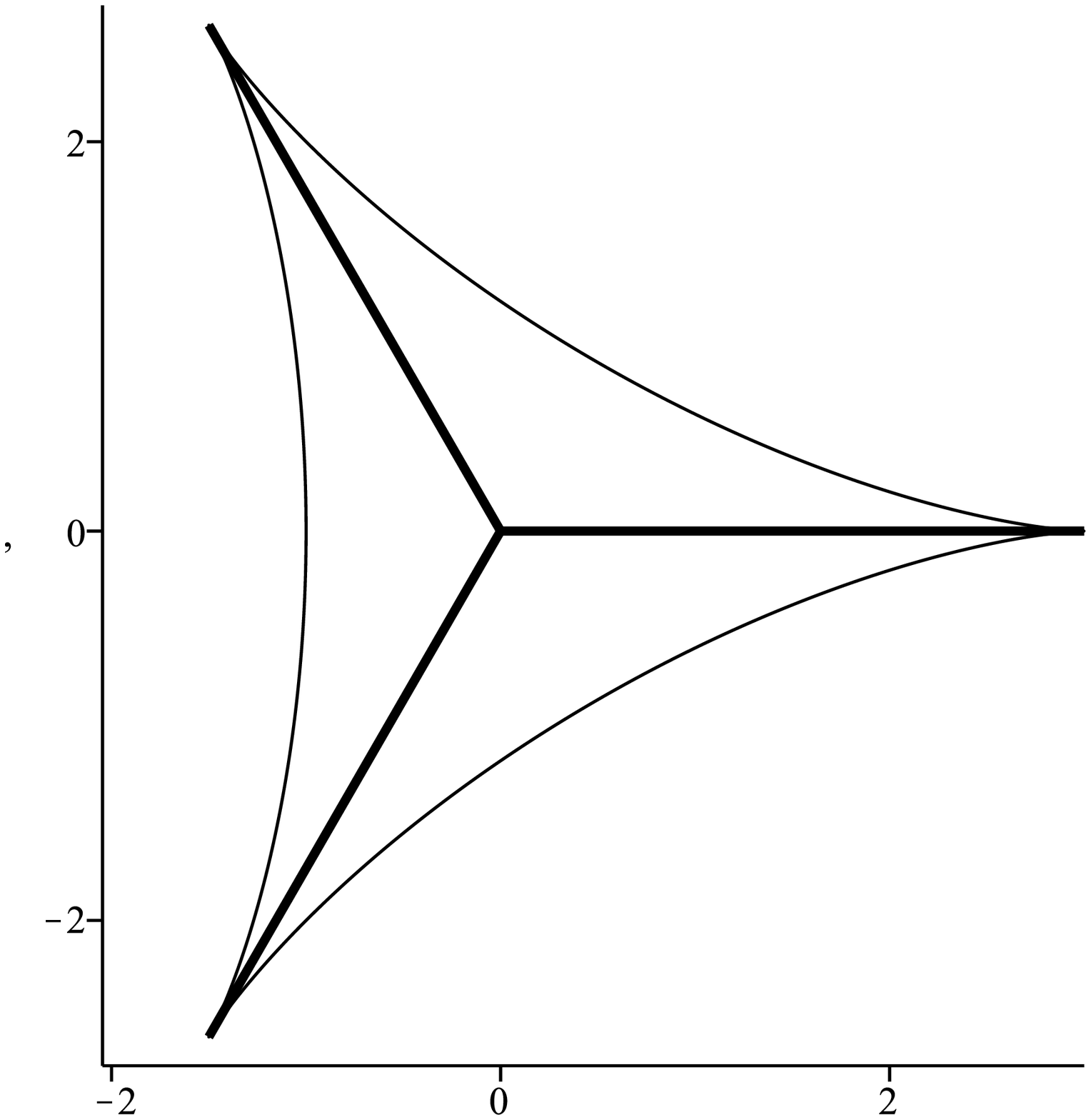}
\caption{The set $\Sigma_1$ where the zeros of $P_{n,n}$ accumulate and the 
boundary of the domain $\Omega$ for the values $t_0 = 1/2$ and $t_3 =1/4$ (left) 
and for the critical values $t_0=2$ and $t_3 =1/4$ (right). }
\label{Domains}
\end{figure}

\begin{theorem} \label{theorem4}
Let $t_3 > 0$ and $0 < t_0 \leq t_{0,crit}$. Let $\mu_1^*$ be the first
component of the minimizer of the vector equilibrium problem, as described
in Theorem~\ref{theorem3}. 
Then the equation
\begin{equation} \label{Schwarz}
	t_3 z^2 + t_0 \int \frac{d\mu_1^*(\zeta)}{z-\zeta} = \overline{z} 
	\end{equation}
defines a simple curve $\partial \Omega$ that is the boundary of a
domain $\Omega$ containing $\Sigma_1$.

The domain $\Omega$ satisfies \eqref{harmonicmoments}
and is such that
\begin{equation} \label{potential} 
	\int \frac{d\mu_1^*(\zeta)}{z-\zeta} = \frac{1}{\pi t_0} \iint_{\Omega} \frac{dA(\zeta)}{z-\zeta}.
	\qquad z \in \mathbb C \setminus \overline{\Omega}
	\end{equation}
\end{theorem}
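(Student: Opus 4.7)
My plan is to treat the left-hand side of \eqref{Schwarz} as a candidate Schwarz function for $\partial\Omega$ and to derive both \eqref{potential} and \eqref{harmonicmoments} from Green's formula together with residue and Plemelj contour-contraction arguments. Write $C_1(z) = \int d\mu_1^*(\zeta)/(z-\zeta)$ and $S(z) = t_3 z^2 + t_0 C_1(z)$; by the $\omega$-symmetry of $\mu_1^*$ inherited from \eqref{rotation} via Theorem \ref{theorem3}, one verifies directly that $C_1(\omega z) = \overline{\omega}\,C_1(z)$ and $S(\omega z) = \overline{\omega}\,S(z)$, so the candidate set $\{z : S(z) = \bar z\}$ is $\omega$-invariant and $S$ has a Laurent expansion at infinity involving only the powers $z^2, z^{-1}, z^{-4}, \ldots$.

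The first step, and the principal obstacle, is to identify $\partial\Omega := \{z : S(z) = \bar z\}$ as an $\omega$-symmetric simple closed Jordan curve bounding a domain $\Omega \supset \Sigma_1$. The Euler--Lagrange equations for the vector equilibrium problem of Theorem \ref{theorem3} furnish an algebraic equation for $S$, obtained by eliminating the Cauchy transform of $\mu_2^*$ and reducing, via the $\omega$-symmetry, to an algebraic function of $z^3$. From this explicit structure I would verify, by a continuity/deformation argument as $t_0$ grows from $0$ to $t_{0,crit}$, that the level set is a smooth Jordan curve at subcritical times, with three cusps forming at the tips $\omega^j x^*$, $j=0,1,2$, in the critical limit $t_0 \to t_{0,crit}$. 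Strict containment $\Sigma_1 \subset \Omega$ for $t_0 < t_{0,crit}$ is then immediate because $C_1$ (and hence $S$) has a nontrivial Plemelj jump $-2\pi i\rho(s)$ across the relative interior of $\Sigma_1$ with $\rho > 0$, so no such $s$ can satisfy $S(s) = \bar s$ on both boundary values simultaneously.

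Once $\partial \Omega$ is in place, \eqref{potential} follows from Green's formula: for $z \in \mathbb C \setminus \overline\Omega$,
\[
 \iint_\Omega \frac{dA(\zeta)}{z-\zeta} \;=\; \frac{1}{2i}\oint_{\partial\Omega}\frac{\bar\zeta\,d\zeta}{z-\zeta} \;=\; \frac{1}{2i}\oint_{\partial\Omega}\frac{t_3\zeta^2 + t_0 C_1(\zeta)}{z-\zeta}\,d\zeta,
\]
using $\bar\zeta = S(\zeta)$ on $\partial\Omega$. The $t_3\zeta^2$ term vanishes because $\zeta^2$ is analytic in $\overline\Omega$ and the pole at $\zeta=z$ is exterior, while the $C_1$ term is evaluated by contracting $\partial\Omega$ onto a loop around the unique interior branch locus $\Sigma_1$; the Plemelj jump formula yields exactly $2\pi i t_0\, C_1(z)$, proving \eqref{potential}. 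The harmonic-moment identities \eqref{harmonicmoments} then follow by Laurent expansion of \eqref{potential} at infinity: the area relation $|\Omega| = \pi t_0$ comes from the $z^{-1}$ coefficient, and for $k \geq 3$ the exterior-Green identity $t_k = \frac{1}{2\pi i}\oint_{\partial\Omega}\zeta^{-k}\bar\zeta\,d\zeta$ combined with $\bar\zeta = S(\zeta)$ yields $t_k = t_3 \delta_{k,3}$, since the $C_1$ contribution vanishes by deformation to a large circle (where $\zeta^{-k}C_1(\zeta)$ has no residue at infinity) and only $k=3$ produces a nontrivial residue at $0$ from $t_3 \zeta^{2-k}$; the remaining $t_1 = t_2 = 0$ are forced by the $\omega$-symmetry of $\Omega$. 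After the curve analysis of the first step, the residue computations of the last step are essentially mechanical consequences familiar from quadrature-domain theory.
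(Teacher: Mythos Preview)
Your Green's-formula computations for \eqref{potential} and \eqref{harmonicmoments} are correct and match the paper's argument in spirit; the only cosmetic difference is that you contract $\partial\Omega$ inward onto $\Sigma_1$ and invoke the Plemelj jump of $C_1$, whereas the paper deforms $\partial\Omega$ outward to infinity and reads off residues from the expansion $\xi_1(z)=t_3z^2+t_0z^{-1}+O(z^{-4})$. Both routes yield the same answers.

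The genuine gap is the first step. You correctly flag the Jordan-curve claim as the principal obstacle, but your ``continuity/deformation in $t_0$'' sketch is not a proof: you would need to control the topology of the real-analytic set $\{S(z)=\bar z\}$ uniformly in $t_0$, rule out the birth of extra components or self-intersections, and handle the endpoints $\omega^j x^*$ separately. The paper bypasses all of this with an explicit rational uniformisation of the genus-zero spectral curve,
\[
z=h(w)=rw+aw^{-2},\qquad \xi=h(w^{-1})=aw^{2}+rw^{-1},
\]
with $r,a>0$ given in closed form by \eqref{ra}. On $|w|=1$ one has $1/w=\bar w$ and, since $h$ has real coefficients, $\xi_1(z)=h(\bar w)=\overline{h(w)}=\bar z$; hence $\partial\Omega=h(|w|=1)$. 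Simplicity of this curve is immediate because the critical points of $h$ lie on $|w|=\rho_{crit}=(2a/r)^{1/3}<1$, so $h$ is locally injective on $|w|\ge 1$ and the image of each circle $|w|=\rho$, $\rho\ge 1$, remains a simple closed curve as $\rho$ decreases from $\infty$. This parametrisation is the key idea you are missing; once it is in hand, the containment $\Sigma_1\subset\Omega$ also follows (your Plemelj-jump remark shows only that interior points of $\Sigma_1$ are off $\partial\Omega$, not on which side they lie), since $h$ maps $\{|w|>1\}$ conformally onto the exterior of $\Omega$ and the branch points $\omega^j x^*$ are the critical values of $h$ coming from $|w|=\rho_{crit}<1$.
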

The identity \eqref{Schwarz} means that the left-hand side of \eqref{Schwarz}
is the Schwarz function of $\partial \Omega$. See Figure \ref{Domains} for
the domain $\Omega$ 

The proof of Theorem \ref{theorem4} is in section \ref{proofTheo4}.
It follows from an analysis of the vector equilibrium problem and an associated
three sheeted Riemann surface.
This Riemann surface will also be important in the proofs of 
Theorems \ref{theorem2} and \ref{theorem3} that are based on a steepest
descent analysis of a $3 \times 3$ matrix valued Riemann-Hilbert problem. The
connection to the RH problem is explained in  section \ref{preliminary} and we 
refer to the discussion in subsection \ref{subsec:discussion} for motivation 
and connections with previous work. The actual steepest descent analysis is done
in section \ref{RHanalysis}. It is quite involved and ultimately leads to a
strong asymptotic formula for the polynomials $P_{n,n}$ as $n \to \infty$,
see Lemma \ref{lemmaRH}, from which the Theorems \ref{theorem2} and \ref{theorem3}
will follow. 

\section{Proof of Theorem \ref{theorem1}} \label{proofTheo1}

We have already seen that any linear combination \eqref{HformV} of the basic
forms \eqref{PhijkV} gives a sesquilinear 
form that satisfies \eqref{structure}. 
We first show that any such sesquilinear form is characterized by a unique
matrix $C = (C_{j,k})$ with zero row and column sums. 

\begin{lemma}  \label{formunique}
Any linear combination of the basic forms \eqref{PhijkV} can be written
in the form \eqref{HformV} with a unique matrix $C = \begin{pmatrix} C_{j,k} \end{pmatrix}_{j,k=0}^d$ 
having zero row and column sums. 
\end{lemma}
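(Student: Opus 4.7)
The proof splits into identifying the "trivial" relations among the basic forms $\Phi_{j,k}$, an existence step from linear algebra, and a uniqueness step that is the only substantive point.

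First I would establish the column-sum and row-sum identities
\[
\sum_{j=0}^d \Phi_{j,k}(f,g)=0, \qquad \sum_{k=0}^d \Phi_{j,k}(f,g)=0,
\]
for every polynomial pair $(f,g)$ and every index $k$ (resp.\ $j$). By Fubini the first reduces to showing that for each fixed $w\in\overline{\Gamma}_k$,
\[
\sum_{j=0}^d \int_{\Gamma_j} f(z)\, e^{-\frac{n}{t_0}(wz-V(z))}\,dz = 0.
\]
This is a Stokes-sector contour deformation. Truncate each $\Gamma_j$ at $|z|=R$ and close the chain $\Gamma_0^R+\cdots+\Gamma_d^R$ by short arcs on $|z|=R$ joining consecutive endpoints near the Stokes rays $\arg z=\theta_{j+1}$. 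The integrand is entire in $z$, and the defining property $\Re V(z)\to-\infty$ along each Stokes ray (with $\deg V\ge 2$ dominating the linear term $wz$) makes it decay super-exponentially there, so the closing arcs contribute $o(1)$ as $R\to\infty$. Cauchy's theorem on the closed chain then gives the identity, and the second identity follows by the same argument for the $w$-integration on $\overline{\Gamma}_k$.

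Existence is then a linear-algebra consequence. Set $T\colon C\mapsto\sum_{j,k}C_{j,k}\Phi_{j,k}$. The identities above show that every matrix of the form $A_{j,k}=a_j+b_k$ lies in $\ker T$. The subspace $K_0$ of such matrices has dimension $2d+1$, while the subspace $W$ of matrices with zero row and column sums has dimension $(d+1)^2-2(d+1)+1=d^2$. A direct check gives $W\cap K_0=\{0\}$: zero row sums force the $a_j$'s constant in $j$, zero column sums then force the $b_k$'s constant, and zero total sum finally forces both constants to vanish. Hence $\mathbb{C}^{(d+1)\times(d+1)}=W\oplus K_0$, and the explicit projection
\[
C_{j,k}=D_{j,k}-\tfrac{1}{d+1}\sum_{k'}D_{j,k'}-\tfrac{1}{d+1}\sum_{j'}D_{j',k}+\tfrac{1}{(d+1)^2}\sum_{j',k'}D_{j',k'}
\]
takes an arbitrary matrix $D$ to a $C\in W$ with $D-C\in K_0\subset\ker T$, so $T(C)=T(D)$.

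The substantive remaining point is uniqueness, which amounts to $W\cap\ker T=\{0\}$, equivalently $\ker T=K_0$. My plan is to exhibit $d^2$ polynomial test pairs $(f_a,g_b)$ such that the block $\bigl(\Phi_{j,k}(f_a,g_b)\bigr)_{(a,b),(j,k)}$ is nondegenerate as a functional on the quotient $\mathbb{C}^{(d+1)\times(d+1)}/K_0$. The natural tool is a large-$n$ steepest-descent analysis: the double integral is dominated by the saddle points of $V(z)+\overline{V}(w)-wz$ determined by $w=V'(z)$ and $z=\overline{V}'(w)$, and different contour pairs $(\Gamma_j,\overline{\Gamma}_k)$ descend through different subsets of these saddles. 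This should yield leading-order contributions spanning a $d^2$-dimensional complement to $K_0$, forcing injectivity of $T|_W$. The main obstacle will be the bookkeeping needed to verify this asymptotic nondegeneracy, since one must track precisely which saddles are picked up on each contour pair and rule out accidental cancellations among the attendant Gaussian factors beyond the row and column sum relations.
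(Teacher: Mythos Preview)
Your treatment of the row/column sum identities and of existence is correct and in fact more explicit than the paper's; the direct-sum decomposition $\mathbb{C}^{(d+1)\times(d+1)}=W\oplus K_0$ with the centering projection is a clean way to package the existence step.

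The uniqueness part, however, is a genuine gap. You only sketch a plan based on large-$n$ steepest descent and explicitly flag the bookkeeping as an obstacle; nothing is actually proved. Beyond being unexecuted, the approach is heavier than necessary and a priori only addresses a large-parameter regime, whereas the statement is for every fixed $n$, $t_0$, and $V$. One could try to recover the general case by an analyticity-in-parameters argument, but that adds another layer you have not supplied.

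The paper's argument is far more direct and avoids asymptotics entirely. Assume $C\in W$ with $T(C)=0$ and test only against $g\equiv 1$. Deform each $\Gamma_j$ to the union of the two rays $\arg z=\theta_j$ and $\arg z=\theta_{j+1}$, so that $\langle f,1\rangle$ becomes a sum over rays, each weighted by a difference $C_{j-1,k}-C_{j,k}$. After the substitution $z=xe^{i\theta_j}$ one obtains
\[
\sum_{j=0}^d \int_0^\infty f(e^{i\theta_j}x)\,e^{-c x^{d+1}}\,\phi_j(x)\,dx=0,
\]
with explicit entire functions $\phi_j$. Taking $f(z)=z^{(d+1)l+r}$ separates the $j$-dependence by a finite Fourier transform in $r$ and a moment problem in $l$, forcing each $\phi_j\equiv 0$. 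But $\phi_j$ is a linear combination (with coefficients $C_{j-1,k}-C_{j,k}$) of the generalized Airy-type functions $x\mapsto\int_{\overline{\Gamma}_k} e^{-\frac{n}{t_0}(e^{i\theta_j}wx-\overline{V}(w))}\,dw$, whose only linear relation is that they sum to zero. Hence $C_{j-1,k}-C_{j,k}$ is independent of $k$, and the zero row and column sums then force $C=0$. This replaces your proposed saddle-point nondegeneracy by two elementary facts: completeness of monomials against $e^{-cx^{d+1}}$ on $[0,\infty)$, and the one-relation linear independence of the contour integrals in $w$.
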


\begin{proof}
The basic forms  \eqref{PhijkV} are such that
\[ \sum_{k=0}^d \Phi_{j,k}(f,g) = 0, \qquad \text{for every } j \]
and 
\[ \sum_{j=0}^d \Phi_{j,k}(f,g) = 0, \qquad \text{for every } k. \]
Thus one may add a constant to a row or to a column of $C$
and obtain the same form \eqref{HformV}. Hence we may assume
that $C$ has zero row sums
\begin{equation} \label{Crowsum} 
	\sum_{k=0}^d C_{j,k} = 0, \qquad \text{for every } j 
	\end{equation}
and column sums
\begin{equation} \label{Ccolumnsum} 
	\sum_{j=0}^d C_{j,k} = 0, \qquad \text{for every } k. 
	\end{equation}

To show that the conditions \eqref{Crowsum} and \eqref{Ccolumnsum} 
determine the sesquilinear form \eqref{HformV}, we suppose that 
\begin{equation} \label{formunique1}
	\forall f, g : \, \langle f, g \rangle = \sum_{j=0}^d \sum_{k=0}^d C_{j,k} \Phi_{j,k}(f,g) = 0
	\end{equation}
where $C = \begin{pmatrix} C_{j,k} \end{pmatrix}$ has zero row and column sums,
and we prove that $C=O$.

Recall that $V$ is a polynomial of degree $d+1$ which we write as
\begin{equation} \label{V1z}
	V(z) = \frac{t_{d+1}}{d+1} z^{d+1} + V_1(z), \qquad \deg V_1 \leq d. 
	\end{equation}
The directions $\theta_j$ from \eqref{thetaj} are such that
\[ \frac{t_{d+1}}{d+1} z^{d+1} \in \mathbb R^-, \qquad \arg z = \theta_j. \]

We deform the contour $\Gamma_j$ in the definition \eqref{PhijkV} of $\Phi_{j,k}(f,g)$
so that it consists of the two rays $\arg z = \theta_j$ and $\arg z = \theta_{j+1}$.
Then it follows from \eqref{PhijkV} and \eqref{formunique1} that for every polynomial $f$,
\begin{align} \nonumber 
	0 & =  \langle f, 1 \rangle  \\
	& = \label{formunique2}
	\sum_{j=0}^d \int_0^{\infty e^{i \theta_j}} dz  f(z) e^{\frac{n}{t_0} V(z)} 
	\sum_{k=0}^d (C_{j-1,k} -C_{j,k}) \int_{\overline{\Gamma}_k} dw e^{-\frac{n}{t_0} (wz - \overline{V}(w))}.
		\end{align}
where $C_{-1,k} = C_{d,k}$.
Making the substitution $z = x e^{i \theta_j}$ in the integral over $z$ in \eqref{formunique2} 
and using \eqref{V1z} we obtain
\begin{align} \label{formunique3}
	 \sum_{j=0}^d \int_0^{\infty} f(e^{i \theta_j} x) e^{- c x^{d+1}} \phi_j(x) dx = 0
	\end{align}
where $c = \frac{n}{t_0} \frac{|t_{d+1}|}{d+1} > 0$, and
\begin{equation} \label{formunique4} 
	\phi_j(x) = e^{i \theta_j}    e^{\frac{n}{t_0} V_1(e^{i \theta_j} x)}
	\sum_{k=0}^d (C_{j-1,k} -C_{j,k}) \int_{\overline{\Gamma}_k} dw e^{-\frac{n}{t_0} ( e^{i \theta_j} w x - \overline{V}(w))}.
\end{equation}

Taking $f(x) = x^{(d+1)l+r}$ with non-negative integers $l$ and $r$, we find from \eqref{thetaj} and \eqref{formunique3}
\begin{equation} \label{formunique5} \sum_{j=0}^d \omega_{d+1}^{jr} \int_0^{\infty} x^{(d+1) l + r} e^{- c x^{d+1}}   \phi_j(x)  dx = 0, \qquad
	\omega_{d+1} = e^{\frac{2\pi i}{d+1}}. 
	\end{equation}
Since this holds for every $l =0,1,2 \ldots$, and since the functions $\phi_j$ grow at most
as $O(e^{c_2 x^d}))$ as $ x \to +\infty$ for some $c_2 > 0$, we find from \eqref{formunique5}
that
\[ \sum_{j=0}^d  \omega_{d+1}^{jr} \phi_j(x) \equiv 0, \qquad r = 0, 1, \ldots, d \]
which in turn implies that
\begin{equation} \label{formunique6} 
	\phi_j(x) \equiv 0, \qquad j=0, \ldots, d. 
	\end{equation}
From \eqref{formunique4} we know that the function $x \mapsto \phi_j(x) e^{-\frac{n}{t_0} V_1(e^{i \theta_j} x)}$ is 
a linear combination of the entire functions
\[ x \mapsto \int_{\overline{\Gamma}_k} dw e^{-\frac{n}{t_0} ( e^{i \theta_j} w x - \overline{V}(w))},
\qquad k=0, \ldots, d. \]
The only linear relation between these function is that they add up to zero.
Because of \eqref{formunique4} and \eqref{formunique6} it follows that 
$C_{j-1,k} - C_{j,k}$ is independent of $k$. Summing over $k=0,1, \ldots, d$ and using
the fact that the matrix $C$ has zero row sums, we find that
\[ C_{j-1,k} - C_{j,k} = 0, \qquad \text{for } j,k = 0,1, \ldots, d, \]
i.e., $C$ has constant columns. Since the column sums are zero as well, we get $C = O$ as claimed.
\end{proof}

Now we give the proof of Theorem \ref{theorem1}.

\begin{proof}[Proof of Theorem \ref{theorem1}]
Using the rule that
\[ \overline{\int_{\Gamma} f(z) dz} = \int_{\overline{\Gamma}} \overline{f(\overline{z})} dz \]
it is easy to obtain from \eqref{PhijkV} that
\[ \Phi_{j,k}(f,g) = \overline{\Phi_{k,j}(g,f)}. \]
It follows that \eqref{HformV} with the normalization \eqref{Crowsum}--\eqref{Ccolumnsum}
satisfies the Hermitian condition \eqref{hermitian} if and only if  $C_{j,k} = \overline{C_{k,j}}$ for every $j,k$,
that is, if and only if 
\[ C = C^*	 \qquad \text{($C$ is Hermitian matrix)}. \]
By Lemma \ref{formunique} any sesquilinear form \eqref{HformV} is characterized by a unique
matrix $C$ with zero row and column sums. 
The space of all $(d+1) \times (d+1)$ Hermitian matrices with the zero row/column sum property
is isomorphic to the space of all $d \times d$ Hermitian matrices. Indeed, the restriction of $C$
to the first $d$ rows and columns provides an isomorphism. The dimension of this real vector space is $d^2$.
Hence the real dimension of the vector space of Hermitian forms satisfying
\eqref{structure} and \eqref{hermitian} is at least $d^2$.

To complete the proof we show that the dimension is at most $d^2$.
To that end, we  consider the moments
\[ \mu_{j,k} = \langle z^j, z^k \rangle \]
where $\langle \cdot, \cdot \rangle$ is a Hermitian form satisfying \eqref{structure} and \eqref{hermitian}.
Suppose $V(z) = \sum\limits_{l=1}^{d+1} \frac{t_l}{l} z^l$ with $t_{d+1} \neq 0$. Then \eqref{structure}
with $f(z) = z^j$ and $g(z) = z^k$ implies that
\begin{equation} \label{mustructure} 
	k t_0 \mu_{j,k-1} - n \mu_{j+1,k} + n \sum_{l=0}^d \overline{t_{l+1}} \mu_{j,k+l} = 0.
	\end{equation}
and by \eqref{hermitian}
\begin{equation} \label{muhermitian} 
	\mu_{j,k} = \overline{\mu_{k,j}}.
	\end{equation}
From \eqref{mustructure} and \eqref{muhermitian} it is easy to see that all moments are determined by the moments
$\mu_{j,k}$ with $j,k=0,\ldots, d-1$. This block of moments yields a Hermitian matrix of
size $d \times d$, which is determined by $d^2$ real parameters. Therefore the vector space
 of Hermitian forms satisfying
\eqref{structure} and \eqref{hermitian} is at most $d^2$ dimensional.
\end{proof}

\section{Proof of Theorem \ref{theorem4}}  \label{proofTheo4}.

\subsection{Vector equilibrium problem}

We begin by analyzing the vector equilibrium problem for the energy
functional \eqref{energyE}. For every choice of $x^* > 0$ there is
a unique minimizer $(\mu_1^*, \mu_2^*)$, since $\Sigma_1$ is compact. 
Both measures are symmetric under $2\pi/3$ rotations.

Given $\mu_1^*$, the measure $\mu_2^*$ minimizes the functional
\[ \mu_2 \mapsto I(\mu_2) - I(\mu_1^*, \mu_2) \]
among all measures on $\Sigma_2$ with $\int d\mu_2 = 1/2$. This
means that $\mu_2^*$ is half of the balayage
of $\mu_1^*$ onto $\Sigma_2$. Hence $\mu_2^*$ always has full support, $\supp(\mu_2^*) = \Sigma_2$,
and it is characterized by the property that
\begin{equation} \label{EL2}
	2 \int \log |z-\zeta| d\mu_2^*(\zeta) = \int \log |z-\zeta| d\mu_1^*(\zeta),
	\qquad z \in \Sigma_2. 
	\end{equation}

Given $\mu_2^*$, the measure $\mu_1^*$ minimizes the functional
\[ \mu_1 \mapsto I(\mu_1) - I(\mu_1, \mu_2^*) + \frac{1}{t_0} 
	\int \left( \frac{2}{3 \sqrt{t_3}} |z|^{3/2}  - \frac{t_3}{3} z^3\right) d\mu_1(z)
	\]
among all probability measures on $\Sigma_1$. Thus $\mu_1^*$ is a a minimizer
for an energy functional with external field, see \cite{Dei,SafTot}, and it
is characterized by the condition that there exists a constant $\ell \in \mathbb R$
such that	
\begin{multline} \label{EL1}
	2 \int \log |z-\zeta| d\mu_1^*(\zeta) - \int \log |z-\zeta| d\mu_2^*(\zeta) \\
	-
		 \frac{1}{t_0} \left( \frac{2}{3\sqrt{ t_3}} |z|^{3/2} - \frac{t_3}{3} z^3 \right) 
		 \begin{cases} = \ell, & \quad z \in \supp(\mu_1^*) \\
		 \leq \ell, & \quad z \in \Sigma_1 \setminus \supp(\mu_1^*).
		 \end{cases}
\end{multline} 
The support of $\mu_1^*$ consists of a finite union of intervals, in general. 

The number $x^*$ is at our disposal. We want to choose it an optimal
way as described in the following lemma.
\begin{lemma} \label{lemma41}
	Let $0 < t_0 \leq t_{0,crit}$. Then there is a unique value for
$x^* >0 $ such that
\begin{itemize}
\item $\mu_1^*$ has full support, i.e., $\supp(\mu_1^*) = \Sigma_1$, and
\item the density of $\mu_1^*$ vanishes at the endpoints $\omega^j x^*$, $j=0,1,2$.
\end{itemize}
If $t_0 < t_{0,crit}$ the density of $\mu_1^*$ vanishes like a square root at the endpoints,
while for $t_0 = t_{0,crit}$ it vanishes with an exponent $3/2$.
\end{lemma}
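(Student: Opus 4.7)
The plan is to exploit the $\mathbb{Z}_3$-rotational symmetry to reduce the vector equilibrium problem to a one-dimensional problem on intervals. Let $\tau_i$ denote the pushforward of $\mu_i^*$ under $z \mapsto z^3$, so that $\tau_1$ is a probability measure on $[0, X^*]$ and $\tau_2$ a measure of mass $1/2$ on $(-\infty, 0]$, where $X^* = (x^*)^3$. The factorization $\prod_{j=0}^{2}(z - \omega^j \zeta) = z^3 - \zeta^3$ together with the $\omega$-invariance of each $\mu_i^*$ gives
\[
3\int \log|z-\zeta|\, d\mu_i^*(\zeta) = \int \log|z^3 - s|\, d\tau_i(s),
\]
so the Euler--Lagrange relations \eqref{EL1}--\eqref{EL2} reduce to those of a scalar vector equilibrium problem for $(\tau_1, \tau_2)$, with effective external field $V_1(s) = \tfrac{2}{t_0 \sqrt{t_3}} s^{1/2} - \tfrac{t_3}{t_0} s$ acting on $\tau_1$ and with $\tau_2$ equal to half the balayage of $\tau_1$ onto $(-\infty, 0]$.

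Next I would introduce the Cauchy transforms $G_i(z) = \int (z - s)^{-1}\, d\tau_i(s)$ and differentiate the reduced EL relations to obtain the jump conditions
\[
G_{1,+}(s) + G_{1,-}(s) = G_2(s) + V_1'(s), \quad s \in (0, X^*),
\]
\[
G_{2,+}(s) + G_{2,-}(s) = G_1(s), \quad s \in (-\infty, 0).
\]
These identify $(G_1, G_2)$ with combinations of three branches of a cubic algebraic function --- the same spectral curve that will drive the steepest-descent analysis in Section \ref{RHanalysis}. Because $V_1$ is polynomial in $\sqrt{s}$, the elementary symmetric functions of the three branches are rational in $s$ and are determined by matching singularities at $s = 0$, $s = X^*$, and $s = \infty$. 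A direct computation then produces the closed-form density
\[
\frac{d\tau_1}{ds}(s) = \frac{1}{\pi}\bigl(A + B s^{-1/2}\bigr)\sqrt{X^* - s}, \quad 0 < s < X^*,
\]
with $A, B$ algebraic in $X^*$, $t_0$, $t_3$.

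Imposing the soft-edge condition $\tfrac{d\tau_1}{ds}(X^*) = 0$ together with the mass normalization $\int d\tau_1 = 1$ and the asymptotic $G_1(z) \sim 1/z$ at infinity reduces everything to a single algebraic equation on $X^*$ whose unique positive admissible root is $X^* = (x^*)^3$ with $x^*$ as in \eqref{xstar}; the discriminant factor $\sqrt{1 - 8 t_0 t_3^2}$ is precisely what forces real solutions to exist only for $t_0 \le t_{0,crit}$. Uniqueness of $x^*$ follows from this being a single monotone constraint, and positivity of $A + B s^{-1/2}$ on $(0, X^*)$ in the admissible range yields $\supp(\tau_1) = [0, X^*]$, equivalently $\supp(\mu_1^*) = \Sigma_1$. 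The vanishing exponents are then immediate: for $t_0 < t_{0,crit}$ one has $A + B (X^*)^{-1/2} \neq 0$, so $d\tau_1/ds \asymp (X^* - s)^{1/2}$, which under the local biholomorphism $z \mapsto z^3$ near $z = x^* > 0$ transports to square-root vanishing of $\mu_1^*$ at each $\omega^j x^*$; at $t_0 = t_{0,crit}$ the factor $A + B(X^*)^{-1/2}$ itself vanishes, creating a coalescence that lifts the exponent to $3/2$. The main obstacle is the algebraic bookkeeping that pins down $(A, B, X^*)$ and verifies that the soft-edge equation indeed matches \eqref{xstar} --- this is tractable thanks to the $\sqrt{s}$-polynomial structure of $V_1$, and the same identities will later underpin the construction of $\Omega$ in Section \ref{proofTheo4}.
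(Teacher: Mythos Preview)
Your reduction via the pushforward $z\mapsto s=z^3$ is sound and, once completed correctly, is essentially a change of coordinates on the paper's own argument: the paper builds the three $\xi$-functions \eqref{xi1}--\eqref{xi3}, shows they are the branches of the cubic \eqref{spectralcurve}, pins down the constant $A$ by a discriminant analysis (Lemma~\ref{lem:cubic}), and then reverses to define $(\mu_1^*,\mu_2^*)$ from the curve. Under $s=z^3$ and $\eta=\xi/z^2$ this same spectral curve becomes $s^2\eta^3 - t_3 s^2\eta^2 - (t_0 t_3 + t_3^{-1})\,s\,\eta + s + A = 0$, and a short computation (using the $\omega$-invariance of $\mu_1^*$ to kill the moments $\int\zeta^{-1}d\mu_1^*$ and $\int\zeta^{-2}d\mu_1^*$) gives $G_1(s) = (\eta_1(s)-t_3)/t_0$, so your $G_i$ really are branch combinations of this cubic.

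The genuine gap is your density ansatz
\[
\frac{d\tau_1}{ds}(s)=\frac{1}{\pi}\bigl(A+Bs^{-1/2}\bigr)\sqrt{X^*-s}.
\]
This is wrong at $s=0$. From the cubic above, setting $s\to 0$ forces $s^2\eta^3\sim -A$, so every branch satisfies $\eta\sim c\,s^{-2/3}$; hence $G_1$ and the density $d\tau_1/ds$ blow up like $s^{-2/3}$, not $s^{-1/2}$. Equivalently, $\mu_1^*$ has a strictly positive bounded density at the junction point $0$ (the paper uses this in Section~\ref{subsecfourth}), and the Jacobian of $z\mapsto z^3$ converts that into an $s^{-2/3}$ singularity for $\tau_1$. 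Your heuristic ``$V_1$ is polynomial in $\sqrt{s}$, so the density inherits $s^{-1/2}$ structure'' fails precisely because it ignores the three-sheet confluence over $s=0$. A second, smaller issue: once you write the density with the factor $\sqrt{X^*-s}$, the ``soft-edge condition $\tfrac{d\tau_1}{ds}(X^*)=0$'' is already satisfied and is not an independent equation you can use to solve for $X^*$; what actually fixes $X^*$ is the discriminant condition on the cubic (i.e., requiring that $X^*$ be the only genuine branch point with the remaining zeros of the discriminant being nodes), exactly as in Lemma~\ref{lem:cubic}. With the correct $s^{-2/3}$ behavior and the discriminant constraint in place, your program reproduces the paper's proof in $s$-coordinates.
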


To prove Lemma \ref{lemma41} we first assume that we choose $x^*$
satisfying the conditions in the lemma. 
From that assumption we will find explicit expressions for the measures,
from which we can indeed check that the conditions are satisfied.
The proof of Lemma \ref{lemma41} is given in subsection \ref{subsec:proofLemma41}.
The proof will also give that $x^*$ is given by \eqref{xstar}.

Theorem \ref{theorem4} is proved in subsection \ref{subsec:proofTheorem4}.

\subsection{Riemann surface} \label{subsec:RiemannSurface}
The construction of the measures is based on the consideration of
a three sheeted Riemann surface $\mathcal R$ with sheets
\begin{align} \label{sheets}
	\mathcal R_1 = \overline{\mathbb C} \setminus \Sigma_1, \qquad
	\mathcal R_2 = \overline{\mathbb C} \setminus (\Sigma_1 \cup \Sigma_2), \qquad
	\mathcal R_3 = \overline{\mathbb C} \setminus \Sigma_2.
	\end{align}
For $k=1,2$, the sheet $\mathcal R_k$ is connected to $\mathcal R_{k+1}$ along $\Sigma_k$
in the usual crosswise manner. Then $\mathcal R$ is a compact Riemann surface of
genus zero.

Define the Cauchy transforms of the measures $\mu_1^*$ and $\mu_2^*$
\begin{equation} \label{Cauchytransforms} 
	F_k(z) = \int \frac{d\mu_k^*(\zeta)}{z-\zeta}, \qquad z \in \mathbb C \setminus \Sigma_k, \qquad k=1,2.
	\end{equation}	
These functions have the symmetry property
\[ F_k(\omega z) = \omega^2 F_k(z), \qquad z \in \mathbb C \setminus \Sigma_k, \qquad k=1,2. \]
The conditions \eqref{EL2}--\eqref{EL1} together with the symmetry properties, lead to the following 
relations for the Cauchy transforms 
\begin{equation} \label{ELforF} 
\begin{aligned}
	F_{1,+}(z) + F_{1,-}(z) - F_2(z) & =  \omega^{2j} \frac{1}{t_0} \left(\frac{1}{ \sqrt{t_3}} |z|^{1/2} - t_3 |z|^2\right), &&
		 z \in \Sigma_1, \\ 
	F_{2,+}(z) + F_{2,-}(z) - F_1(z) & = 0, && z \in \Sigma_2.
	\end{aligned}
	\end{equation}
These functions are used to define the $\xi$-functions that play a major role.
Throughout the paper we use the principal arguments of fractional powers, that
is, with a branch cut along the negative real axis. 
 
\begin{definition} 
We define
\begin{align} \label{xi1}
	\xi_1(z) & = t_3 z^2 + t_0 F_1(z), &&  z \in \mathcal R_1, \\
	\xi_2(z) & = \label{xi2} \begin{cases} 
		\frac{1}{\sqrt{t_3}} z^{1/2} + t_0( F_2(z) - F_1(z)), \\
		-\frac{1}{\sqrt{t_3}} z^{1/2} + t_0( F_2(z) - F_1(z)),
		\end{cases} \hspace*{-1cm} &&
		\begin{aligned} & z \in \mathcal R_2 \cap S_0, \\
		  & z \in \mathcal R_2 \cap (S_1 \cup S_2),
		  \end{aligned} \\
	\xi_3(z) & = \label{xi3} \begin{cases}
		 -\frac{1}{\sqrt{t_3}} z^{1/2} - t_0 F_2(z), \\
		 \frac{1}{\sqrt{t_3}} z^{1/2} - t_0 F_2(z),
		 \end{cases} && 
		 \begin{aligned} & z \in \mathcal R_3 \cap S_0, \\
				&  z \in \mathcal R_3 \cap (S_1 \cup S_2).
				\end{aligned}
	\end{align}
where $S_0, S_1, S_2$ denote the sectors
\begin{equation} \label{sectors}
	\begin{aligned} 
		S_0 : & \quad  - \pi/3 < \arg z < \pi/3, \\
		S_1 : & \quad \pi/3 < \arg z < \pi, \\
		S_2 : & \quad -\pi < \arg z < -\pi/3. 
		\end{aligned}
		\end{equation}
\end{definition}

Note that by \eqref{Cauchytransforms} and \eqref{xi1} we can express the densities of the measure
$\mu_1^*$ in terms of $\xi_1$.  Indeed, by \eqref{Cauchytransforms} we have by the Sokhotskii-Plemelj
formula that
\[ d\mu_1^*(x) = \frac{1}{2 \pi i} \left(F_{1,-}(x) - F_{1,+}(x) \right) dx, \qquad x \in \supp(\mu_1^*), \]
which by \eqref{xi1} leads to
\begin{align} \label{densitymu1}
	d\mu_1^*(x) & = \frac{1}{2 \pi i t_0 } \left(\xi_{1,-}(x) - \xi_{1,+}(x) \right) dx, \\
	& = \frac{1}{2 \pi i t_0} \left( \xi_{2,+}(x) - \xi_{1,+}(x) \right) dx, \qquad  x \in \supp(\mu_1^*). 
	\end{align}
Similarly
\begin{align} \label{densitymu2}
	d \mu_2^*(z) =  \frac{1}{2 \pi i t_0} \left( \pm \frac{2z^{1/2}}{\sqrt{t_3}}  + \xi_{3,+}(z) - \xi_{2,+}(z) \right) dz, \qquad  z \in \supp(\mu_2^*),
	\end{align}
	with appropriate choice of signs $\pm$ and square roots.

\begin{lemma} \label{lem:cubic}
Suppose that $\supp(\mu_1^*) = \Sigma_1$ and that the density of $\mu_1^*$ vanishes
at the endpoints $\omega^j x^*$ for $j=0,1,2$. Then
\begin{equation} \label{t0bound} 
	t_0 \leq t_{0,crit} = \frac{1}{8t_3^2}.
	\end{equation}
and the following hold.
\begin{enumerate}
\item[\rm (a)] The three functions $\xi_1, \xi_2, \xi_3$
given by \eqref{xi1}--\eqref{xi3} define a meromorphic function on the Riemann surface $\mathcal R$
whose only poles are at the points at infinity.
\item[\rm (b)]
The functions $\xi_j$, $j=1,2,3$ are the three solutions of the cubic equation
\begin{equation} \label{spectralcurve} 
	\xi^3 - t_3 z^2 \xi^2 - \left(t_0 t_3 + \frac{1}{t_3} \right) z \xi + z^3 + A = 0. 
	\end{equation}
where 
\begin{equation} \label{Aconstant} 
	A =  \frac{1 + 20 t_0 t_3^2 - 8 t_0^2 t_3^4  - (1 - 8 t_0 t_3^2)^{3/2}}{32 t_3^3}.
	\end{equation}
\item[\rm (c)] 
The algebraic equation \eqref{spectralcurve} has branch points at $z = \omega^j x^*$, $j=0,1,2$, where $x^*$ is given by \eqref{xstar}
and nodes at the values $z = \omega^j \widehat{x}$, $j=0,1,2$, where
\begin{equation} \label{xhat} 
	\widehat{x} = \frac{3 + \sqrt{1-8t_0 t_3^2}}{4t_3} > x^*.
	\end{equation}
The values $z = \omega^j x^*$ and $z = \omega^j \widehat{x}$, $j=0,1,2$, are the only zeros of the discriminant
of \eqref{spectralcurve}. 
\item[\rm (d)] We have $\widehat{x} \geq x^* > 0$ with equality only if $t_0 = t_{0,crit}$, and
\begin{equation} \label{widehatx} 
	\xi_1(\omega^j \widehat{x}) = \xi_2(\omega^j \widehat{x}) = \omega^{2j} \widehat{x}, \qquad j=0,1,2. 
\end{equation}
\end{enumerate}  
\end{lemma}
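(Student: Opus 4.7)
My plan is to establish (a) by verifying analytic gluing across the cuts using the Euler-Lagrange conditions \eqref{ELforF}, and then extract (b)-(d) by exploiting that $\xi$ becomes a meromorphic function on a genus-zero surface. For (a), I would check that boundary values of $\xi_1, \xi_2, \xi_3$ match crosswise on each cut. On $\Sigma_1$, write $z = \omega^j x$, so $|z|^2 = x^2$ and $z^2 = \omega^{2j} x^2$; then the first line of \eqref{ELforF}, after multiplication by $t_0$ and addition of $2 t_3 z^2$, rearranges to
\begin{equation*}
\xi_{1,+}(z) + \xi_{1,-}(z) = t_3 z^2 + \omega^{2j} t_3^{-1/2} x^{1/2} + t_0 F_2(z),
\end{equation*}
which, matched against the definition \eqref{xi2} of $\xi_2$ on the appropriate sector (the sector-dependent sign converts $\omega^{2j} x^{1/2}$ into $\pm t_3^{-1/2} z^{1/2}$), gives $\xi_{1,\pm}(z) = \xi_{2,\mp}(z)$ on $\Sigma_1$. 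Using the second line of \eqref{ELforF}, together with the sign choices in \eqref{xi2}--\eqref{xi3} designed to cancel the jump of $z^{1/2}$ on the arms of $\Sigma_2$ lying past the principal branch cut, one similarly obtains $\xi_{2,\pm} = \xi_{3,\mp}$ on $\Sigma_2$. Since each $\xi_j$ is analytic on its sheet away from the cuts and grows only polynomially at infinity, this assembles a meromorphic function on $\mathcal R$ with poles only over $z=\infty$.

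For (b), because $\mathcal R$ has genus zero and the $\xi_j$ grow polynomially, the elementary symmetric functions $e_1, e_2, e_3$ of $\xi_1, \xi_2, \xi_3$ are polynomials in $z$. The leading behaviors $\xi_1 \sim t_3 z^2$ and $\xi_{2,3} \sim \pm t_3^{-1/2} z^{1/2}$ cap their degrees, and the rotational symmetry $\xi_j(\omega z) = \omega^2 \xi_j(z)$ (inherited from $F_k(\omega z) = \omega^2 F_k(z)$) forces each $e_k$ to be a monomial of specific degree, producing the cubic \eqref{spectralcurve} with a single free constant $A$. To fix $A$, I would impose that the density \eqref{densitymu1} vanishes at $\omega^j x^*$, which equivalently says $\xi_1 = \xi_2$ there; combined with the normalization $\int d\mu_1^* = 1$, read off from the $1/z$ coefficient of $F_1$, this yields an algebraic system whose solution is precisely \eqref{xstar} and \eqref{Aconstant}. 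The bound $t_0 \leq t_{0,crit}$ emerges as the condition $8 t_0 t_3^2 \leq 1$ needed for reality of $\sqrt{1 - 8 t_0 t_3^2}$.

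For (c) and (d), compute the $z$-discriminant of \eqref{spectralcurve}; by $\omega$-symmetry its zeros fall into the two orbits $\{\omega^j x^*\}$ and $\{\omega^j \widehat x\}$. The former are the endpoints built into the construction and correspond to simple (or, at $t_0 = t_{0,crit}$, triply degenerate) branch points of the cover; the latter are verified to be nodes by substituting $\xi = \omega^{2j} \widehat x$, $z = \omega^j \widehat x$ directly into \eqref{spectralcurve}, using \eqref{xhat} and \eqref{Aconstant}, and checking that this yields a double root (hence \eqref{widehatx}) while the third root remains simple. The comparison $\widehat x \geq x^*$ with equality exactly at $t_0 = t_{0,crit}$ is immediate from \eqref{xstar}, \eqref{xhat}. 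I expect the principal difficulty to be part (a): the relations \eqref{ELforF} contain the non-analytic expressions $|z|^{1/2}$ and $|z|^2$, and converting them into analytic matching identities between the $\xi_j$ requires careful bookkeeping of the sector-dependent sign conventions in \eqref{xi2}--\eqref{xi3}, particularly across the three arms of $\Sigma_2$ where the principal branch of $z^{1/2}$ flips. The remaining algebraic steps are routine once the meromorphic function on $\mathcal R$ is in place.
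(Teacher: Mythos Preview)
Your treatment of part (a) and the extraction of the cubic \eqref{spectralcurve} from the elementary symmetric functions is the same as the paper's. The gap is in how you determine the constant $A$ (and with it the value of $x^*$).

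Your stated system --- ``$\xi_1(x^*)=\xi_2(x^*)$ together with the normalization $\int d\mu_1^*=1$'' --- is underdetermined. The normalization $F_1(z)=z^{-1}+O(z^{-4})$ is already absorbed into the coefficient $-(t_0t_3+1/t_3)z$ of \eqref{spectralcurve}; it is not an independent equation for $A$. The condition $\xi_1(x^*)=\xi_2(x^*)$ only says that $\zeta^*=(x^*)^3$ is \emph{a} root of the discriminant \eqref{discrim}, which is cubic in $\zeta=z^3$. That single relation does not fix both $A$ and $x^*$. What actually pins them down is the hypothesis $\supp(\mu_1^*)=\Sigma_1$: it forces the cover $\mathcal R\to\mathbb C$ to have no finite branch points other than $\omega^j x^*$, so the remaining two roots of the cubic discriminant in $\zeta$ must coincide (a node, not a branch point). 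The paper exploits this by first using positivity of the density of $\mu_1^*$ on $(0,x^*)$ to argue that $\xi_2-\xi_1$ changes sign on $(x^*,\infty)$, hence a node $\widehat x>x^*$ exists; then the $(z,\xi)\leftrightarrow(\xi,z)$ symmetry of \eqref{spectralcurve} forces $\xi_1(\widehat x)=\widehat x$, so $\widehat x$ is a double root of the quartic $2z^3-t_3z^4-(t_0t_3+1/t_3)z^2+A$. Solving for the critical points of this quartic gives the two candidates \eqref{hatx}, and the inequality $\widehat x>x^*$ selects the correct one --- yielding \eqref{Aconstant}, \eqref{xstar}, \eqref{xhat} simultaneously. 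The bound $t_0\le t_{0,crit}$ is then the reality condition on $\widehat x$, not on an expression you already have in hand.

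Your plan for (d), verifying \eqref{widehatx} by direct substitution of \eqref{xhat} and \eqref{Aconstant} into \eqref{spectralcurve}, would work \emph{as a check once $A$ is known}, but it cannot serve to determine $A$ in the first place. In the paper's logic the $z\leftrightarrow\xi$ symmetry is the engine that finds $A$, not a corollary.
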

\begin{proof}
The conditions \eqref{ELforF} imply that the three functions $\xi_1$, $\xi_2$, $\xi_3$
given in \eqref{xi1}--\eqref{xi3} define a meromorphic function on the Riemann surface. 
There is a double pole at infinity on the first
sheet, and a simple pole at the other point at infinity.
There are no poles at the endpoints $\omega^j x^*$ due to the assumption that the density 
of $\mu_1^*$ vanishes at these points. This proves part (a).

It follows that any symmetric function 
of the three $\xi_j$ functions is a polynomial in $z$. From \eqref{xi1}--\eqref{xi3} and the fact that
\begin{equation} \label{F12atinfty} 
	F_1(z) = z^{-1} + O(z^{-4}), \qquad F_2(z) = \tfrac{1}{2} z^{-1} + O(z^{-5/2}) 
	\end{equation}
as $z \to \infty$, it is then easy to see that
\begin{equation} \label{symmetricxis}
\begin{cases} 
	\xi_1(z) + \xi_2(z) + \xi_3(z) =  t_3 z^2 , \\
	\xi_1(z) \xi_2(z) + \xi_1(z) \xi_3(z) + \xi_2(z) \xi_3(z)  = - \left(t_0 t_3 + \frac{1}{t_3}\right) z, \\
	\xi_1(z) \xi_2(z) \xi_3(z)  = - z^3 - A,
	\end{cases}
\end{equation}
where $A$ is some real constant.
Thus $\xi_1, \xi_2, \xi_3$ are indeed the solutions of the cubic equation \eqref{spectralcurve}.
The algebraic equation has branch points at $z = \omega^j x^*$, $j=0,1,2$, and no other
finite branch points. 
This property allows us to determine the constant $A$. 
 
The discriminant of \eqref{spectralcurve} with respect to $\xi$ is a cubic equation 
in $\zeta = z^3$ that we calculated with the help of Maple. The result is
\begin{multline} \label{discrim} 	
		4 t_3^3 \zeta^3 + 
		(t_0^2 t_3^4 + 4 A t_3^3 + 12 t_0 t_3^2 - 8) \zeta^2 \\
		+ (4 t_0^3 t_3^3 + 18 A t_0 t_3^2 + 12 t_0^2 t_3 - 36 A  + 12 t_0 t_3^{-1} + 4 t_3^{-3}) \zeta 
		- 27 A^2.
	\end{multline}	
The discriminant \eqref{discrim} has a root for the value
$\zeta^* = (x^*)^3$ that corresponds to the branch points $z= \omega^j x^*$. The fact that
there are no other finite branch points implies that either $\zeta^*$ is a triple root
of \eqref{discrim} or $\zeta^*$ is a simple root and \eqref{discrim} has a double root
as well.  The case of a triple root happens for the values (calculated with Maple)
\begin{equation} \label{critAandcritt0} 
	t_0 = \frac{1}{8 t_3^2}, \qquad A = \frac{27}{256 t_3^3}, \qquad  x^* = \frac{3}{4t_3},
	\end{equation}
and this is the  only possible combination of values with $t_0 > 0$ (one combination with $t_0  < 0$ also gives rise
to a triple root).
The values \eqref{critAandcritt0} are the values for the critical case.

We may from now on assume that \eqref{discrim} has one simple root $\zeta^* > 0$
and one double root.  Since $x^*$ is a branch point that connects the
first and second sheets, we then have that $\xi_2(z) - \xi_1(z)$ vanishes
as a square root as $z \to x^*$. For $x > x^*$, we have that $\xi_2(x) - \xi_1(x)$ is real,
while according to \eqref{densitymu1} we must have $\xi_{2,+}(x) - \xi_{1,+}(x) \in i \mathbb R^+$ for $x \in [0, x^*]$,
since the density of $\mu_1^*$ is positive on $[0, x^*]$.
This implies that $\xi_2(x) - \xi_1(x) > 0$ for $x$ in some interval $(x^*, x^* + \delta)$ to the right of $x^*$.
The definitions \eqref{xi1}--\eqref{xi2} imply that $\xi_1(x) > \xi_2(x)$ for $x > x^*$ large enough.
Therefore there is a value 
\begin{equation} \label{xhatandxstar}
	 \widehat{x} > x^* 
	 \end{equation}
such that $\xi_1(\widehat{x}) = \xi_2(\widehat{x})$.  
Then clearly $\widehat{\zeta} = (\widehat{x})^3$ is
the double root of \eqref{discrim}, and $z= \widehat{x}$, $\xi = \xi_1(\widehat{x})$ is a 
node of the spectral curve \eqref{spectralcurve}.

Because of the symmetry of the cubic equation \eqref{spectralcurve} in the variables $z$ and $\xi$, 
we can interchange the values of $z$ and $\xi$, and
we get that $z = \xi_1(\widehat{x})$, $\xi = \widehat{x}$ is also a node. Thus  $(\xi_1(\widehat{x}))^3$
is also a double root of \eqref{discrim}, which because of the uniqueness of the double root implies 
\eqref{widehatx} for $j=0$. Because of $2\pi/3$ rotational symmetry we also have \eqref{widehatx} for $j=1,2$. 
This proves part (d) of the lemma.

Taking $\xi = z$ in \eqref{spectralcurve} we find from \eqref{widehatx} that $z = \widehat{x}$ is a double root of
the quartic polynomial
\begin{equation} \label{SCondiag} 2 z^3 - t_3z^4 - \left(t_0 t_3+ \frac{1}{t_3}\right) z^2 + A. 
\end{equation}
Thus $\widehat{x}$ is one of the critical points of the polynomial \eqref{SCondiag}, 
which are easily calculated to be
$z=0$ and the two solutions of $ 2 t_3 z^2 - 3z + \left(t_0 t_3+ \frac{1}{t_3}\right) = 0$.
Since $\widehat{x} > x^* > 0$ we discard $z =0$, and find two possible values for $\widehat{x}$ 
\begin{equation} \label{hatx} 
	\widehat{x}_1 = \frac{3 + \sqrt{1-8t_0 t_3^2}}{4t_3}, \qquad
		\widehat{x}_2 = \frac{3 - \sqrt{1-8t_0 t_3^2}}{4t_3}.
	\end{equation}
This value must be real and so we  conclude that $8 t_0 t_3^2 < 1$. This proves \eqref{t0bound}.
	
The corresponding values of $A$ we now find by substituting $z = \widehat{x}_j$  into \eqref{SCondiag}
and equating to $0$. The results are
\begin{align*} 
	A_1 & = \frac{1 +20t_0 t_3^2 - 8t_0^2 t_3^4 - (1-8t_0 t_3^2)^{3/2}}{32t_3^3}, \\
	A_2 & = \frac{1 +20t_0 t_3^2 - 8t_0^2 t_3^4 + (1-8t_0 t_3^2)^{3/2}}{32t_3^3}. 
	\end{align*}
For the value $A = A_j$, the simple root of \eqref{discrim} comes out as $\zeta^* = (x_j^*)^3$ with
\begin{equation} \label{starx}  
	x_1^* = \frac{3}{4t_3} \left( 1 - \sqrt{1-8t_0 t_3^2}\right)^{2/3}, \qquad
	x_2^* = \frac{3}{4t_3} \left( 1 + \sqrt{1-8t_0 t_3^2}\right)^{2/3},
	\end{equation}	 
which is again calculated with Maple.
From \eqref{hatx} and \eqref{starx} it follows that
\[ x_1^* < \frac{3}{4t_3} < \widehat{x}_1, \qquad x_2^* > \frac{3}{4t_3} > \widehat{x}_2. \]
The inequality \eqref{widehatx} is only satisfied in the first case, and so
$x^* = x_1^*$, $A = A_1$, and $\widehat{x} = \widehat{x}_1$. This proves
the formulas  \eqref{xstar}, \eqref{Aconstant}, and \eqref{xhat}, which
establishes the  parts (b) and (c)  and completes the proof of the lemma.
\end{proof}

\subsection{Proof of Lemma \ref{lemma41}} \label{subsec:proofLemma41}

We now prove Lemma \ref{lemma41} by reversing the arguments given above.

For $t_0 \leq t_{0,crit}$, we start from the spectral curve \eqref{spectralcurve}
with the value of $A$ as in \eqref{Aconstant}. The spectral curve defines
a Riemann surface with sheet structure as in \eqref{sheets}. This defines
in particular the value of $x^*$. There is one solution of \eqref{spectralcurve}
that satisfies
\[ \xi_1(z) = t_3 z^2 + t_0 z^{-1} + O(z^{-4}) \qquad \text{as } z \to \infty \]
and $\xi_1$ is defined on the first sheet. The analytic continuations
onto the second and third sheets are then denoted by $\xi_2$ and $\xi_3$,
respectively. 

\begin{figure}
\includegraphics[width=6cm]{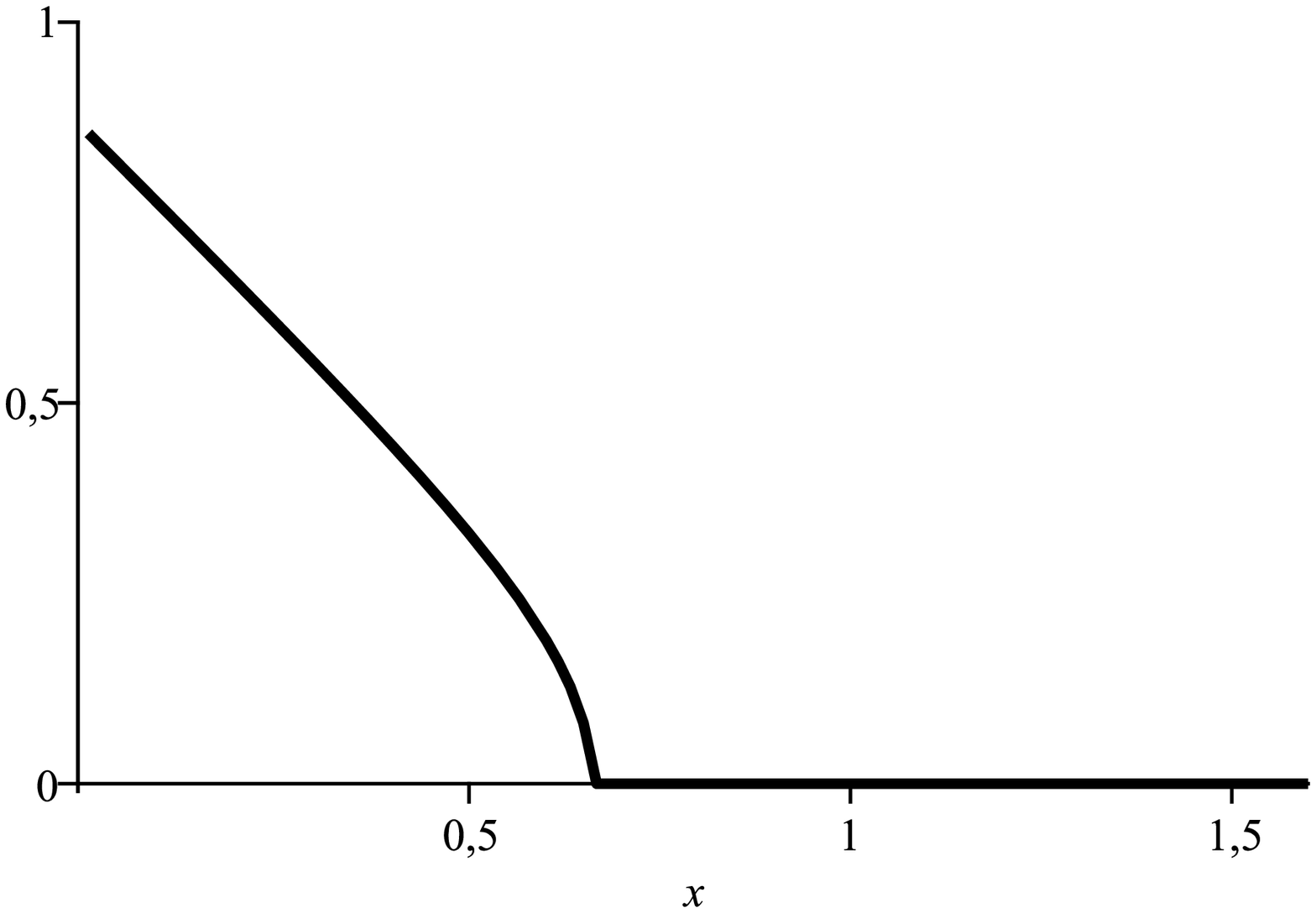} 
\includegraphics[width=6cm]{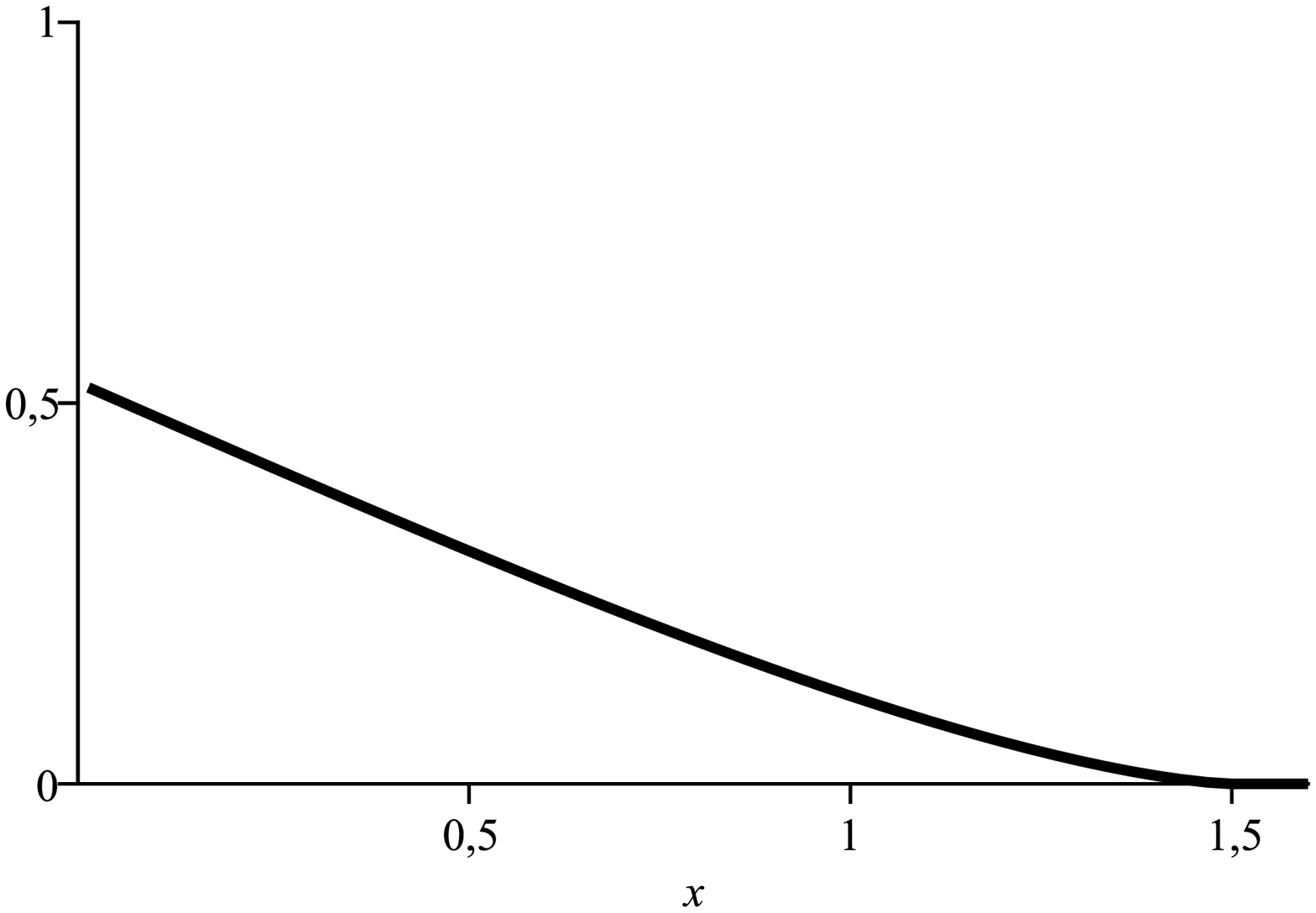}
\caption{Density of $\mu_1^*$ on $[0,x^*]$ for the values $t_0 = 1/4$ and $t_3 =1/2$ (left) 
and for the critical values $t_0=1/2$ and $t_3 =1/2$ (right).
In the non-critical case the density behaves as $c (x^*-x)^{1/2}$ as $x \to x^*$,  while in the
critical case it vanishes like $c(x^*-x)^{3/2}$.  } \label{fig:density}
\end{figure}

Then we define a measure $\mu_j^*$ on $\Sigma_j$ for $j=1,2$, by the formulas 
\eqref{densitymu1} and \eqref{densitymu2}.
All parts of $\Sigma_1$ and $\Sigma_2$ are oriented away from the origin. 
The vector of measures $(\mu_1^*, \mu_2^*)$ is the minimizer of the 
vector equilibrium problem, and $\mu_1^*$ satisfies the conditions of Lemma \ref{lemma41},
since it is positive on $[0, x^*)$ and vanishes at $x^*$, see also Figure \ref{fig:density}
for the density of $\mu_1^*$ for a non-critical value $t_0$ and for the critical value $t_{0,crit}$.
In a non-critical case the density of $\mu_1^*$ vanishes as a square root at $x^*$,
while in the critical case it vanishes like $(x^*-x)^{3/2}$ as $x \to x^*-$. 
See Figure \ref{fig:density} for plots of the density of $\mu_1^*$ in the non-critical
and critical cases.

\subsection{Proof of Theorem \ref{theorem4}} \label{subsec:proofTheorem4}

The spectral curve \eqref{spectralcurve} with $A$ given by \eqref{Aconstant}
has genus zero. The curve has a remarkable rational parametrization
\begin{equation} \label{parametrization}
   z  = h(w) = rw + aw^{-2}, \qquad \xi  = h(w^{-1}) = a w^2 + r w^{-1} 
\end{equation}
with
\begin{equation} \label{ra} 
	r = \frac{\sqrt{1-\sqrt{1-8 t_0 t_3^2}}}{2 t_3}, \qquad
	 a = \frac{1-\sqrt{1-8 t_0 t_3^2}}{4 t_3}. 
	 \end{equation}
This can easily be checked by plugging \eqref{parametrization}--\eqref{ra} into
\eqref{spectralcurve}. The calculations show that it only works if $A$ has the value 
as in \eqref{Aconstant}.

Assuming $t_0 < t_{0, crit}$ we see that $r$ and $a$ are both real and positive.
We view $z=h(w)$ as a mapping from the $w$-plane to the $z$-plane.
The mapping is conformal around infinity. For large enough $\rho$ we have
that the circle $|w| = \rho$ is mapped to a simple closed curve.
This will continue to be the case if we decrease $\rho$ until the circle
$|w| = \rho$ contains a critical point, that is, a solution of
\[ h'(w) = r - 2a w^{-3} = 0. \]
The critical points are on the circle with radius
\[ \rho_{crit} = \left(\frac{2a}{r}\right)^{1/3} = \left(1-\sqrt{1-8 t_0 t_3^2} \right)^{1/6} < 1. \]

Since $\rho_{crit} < 1$, we see that $h(|w|=1)$ is therefore a simple closed curve
which is smooth for $t_0 < t_{0,crit}$ and has three cusp points for the
critical value $t_{0,crit}$. Let $\Omega$ be the bounded domain that
is enclosed by $h(|w|=1)$. 

For $z \in \partial \Omega$, we have $z = h(w)$ for some $w$ with $|w| = 1$.
Then $\xi_1(z) = h(1/w)$, see \eqref{parametrization}, and so since $1/w = \overline{w}$,
and since the coefficient of $h$ are real numbers 
\begin{equation} \label{xi1Schwarz} 
	\xi_1(z) = h(\overline{w}) = \overline{h(w)} = \overline{z}, \qquad z \in \partial \Omega. 
	\end{equation}
This shows that \eqref{Schwarz} indeed defines a curve $\partial \Omega$, 
by virtue of  \eqref{xi1} and \eqref{Cauchytransforms}.

We compute the area of $\Omega$ by means of Green's formula
\[ \area(\Omega) =  \iint_\Omega dA = \frac{1}{2 i} \oint_{\partial \Omega} \overline{z} \, dz =
	\frac{1}{2i} \oint_{\partial \Omega} \xi_1(z) \, dz.  \]
Here we used \eqref{xi1Schwarz}. Now recall that $\xi_1$ is analytic
in the exterior of $\Omega$ with 
\begin{equation} \label{xi1atinfty} 
	\xi_1(z) = t_3 z^2 + t_0 z^{-1} + O(z^{-4}), \qquad \text{as } z \to \infty, 
	\end{equation}
see \eqref{xi1} and \eqref{F12atinfty}.
Thus we can move the contour to infinity and by doing
so we only pick up a residue contribution at $\infty$. This proves 
the formula for $t_0$ in \eqref{harmonicmoments}.

The exterior harmonic moments \eqref{harmonicmoments} are computed  in a similar way.
We have by Green's formula applied to the exterior domain
\[ - \frac{1}{\pi} \iint_{\mathbb C \setminus \overline{\Omega}} \frac{d A(z)}{z^k} = 
	\frac{1}{2\pi i} \oint_{\partial \Omega} \frac{\overline{z}}{z^k} \, dz
	= \frac{1}{2\pi i} \oint_{\partial \Omega} \frac{\xi_1(z)}{z^k} \, dz. \]
By contour deformation and using \eqref{xi1atinfty} we find that this
is $t_3$ for $k=3$ and $0$ for other $k \geq 2$. Thus \eqref{harmonicmoments} follows.

We finally prove \eqref{potential} with yet again similar arguments.
For $z \in \mathbb C \setminus \overline{\Omega}$ we have by Green's formula and \eqref{xi1Schwarz}
\[ \frac{1}{\pi} \iint_{\Omega} \frac{d A(\zeta)}{z-\zeta} = 
	\frac{1}{2\pi i} \oint_{\partial \Omega} \frac{\overline{\zeta}}{z-\zeta} d\zeta
		= 
	\frac{1}{2\pi i} \oint_{\partial \Omega} \frac{\xi_1(\zeta)}{z-\zeta} d\zeta
	\]
Moving the contour to infinity we pick up a residue contribution at $\zeta = z$, which is $\xi_1(z)$,
and at $\zeta = \infty$, which is $-t_3 z^2$ because of \eqref{xi1atinfty}.
In total we find that
\[ \frac{1}{\pi} \iint_{\Omega} \frac{d A(\zeta)}{z-\zeta} = \xi_1(z) - t_3 z^2 \]
which is equal to $t_0 F_1(z)$ by \eqref{xi1}. Then dividing by $t_0$ we obtain
\eqref{potential}. This completes the proof of Theorem \ref{theorem4}.

\section{Preliminary steps towards the proofs of Theorems \ref{theorem2} and \ref{theorem3}}
\label{preliminary}

\subsection{Discussion} \label{subsec:discussion}
The orthogonality induced by the Hermitian form \eqref{Hform3} is similar to the biorthogonality
that plays a role in the two-matrix model, see \cite{Ber}. The two-matrix model is a model for two random matrices
with two potentials $V$ and $W$ on $\mathbb R$ with sufficient increase at $\pm \infty$. 
The biorthogonal polynomials in this model 
are two sequences $(p_{j,n}(x))_j$, $\deg p_{j,n} = j$ and $(q_{k,n})_k$, $\deg q_{k,n} = k$ of monic polynomials satisfying
\begin{equation} \label{biorthogonal} 
	\int_{-\infty}^{\infty} \int_{-\infty}^{\infty} p_{j,n}(x) q_{k,n}(y) e^{-n (V(x) + W(y) - \tau xy)} dx dy = h_{k} \delta_{j,k} 
	\end{equation}
where $\tau $ is the coupling constant.
Comparing \eqref{biorthogonal} with \eqref{Hform3} one sees that \eqref{Hform3} is like the biorthogonality
in the two matrix model with equal cubic potentials $V(x) = W(x) = - \frac{t_3}{3 t_0} x^3$ and coupling
constant $\tau = - \frac{1}{t_0}$. The main difference is that integrals in \eqref{biorthogonal} are over
the real line, while integrals in \eqref{Hform3} are over combinations of the contours $\Gamma_j$, $j=0,1,2$.

This does not play a role on a formal level. The main algebraic properties that are known 
for the biorthogonal polynomials, see e.g.\ \cite{Ber, BeEyHa1, BeEyHa2},  also hold for the orthogonal polynomials
with respect to the Hermitian form \eqref{Hform3}. This includes the existence of differential and
difference equations, and an integrable structure of $\tau$-functions and Toda equations. 
However, analytic and asymptotic properties depend crucially on the precise definition of
contours. For example, it is known that the
biorthogonal polynomials characterized by \eqref{biorthogonal} have real and simple zeros  \cite{ErcMcL}, 
which is not the case for the orthogonal polynomials for \eqref{Hform3}.

Recently \cite{DuiKui,DuKuMo,Mo}, the biorthogonal polynomials $p_{n,n}$ from \eqref{biorthogonal} with an even polynomial $V$ and 
a quartic potential $W(y) = \frac{1}{4} y^4 + \frac{\alpha}{2} y^2$ were successfully analyzed in the large $n$ limit
by means of a steepest descent analysis of a Riemann-Hilbert  problem.  The Riemann-Hilbert problem 
was obtained earlier in \cite{KuiMcL} from a reformulation of the biorthogonality \eqref{biorthogonal} as
multiple orthogonality, since for multiple orthogonal polynomials a Riemann-Hilbert problem is known \cite{VAGeKu}
as a generalization of the well-known Riemann-Hilbert problem
for orthogonal polynomials \cite{FokItsKit}. 
Because of the formal similarity the same setup works in the case of orthogonal polynomials for the Hermitian
form \eqref{Hform3}. We can reformulate the orthogonality as multiple orthogonality and it leads to
a $3 \times 3$ matrix valued Riemann-Hilbert problem with jumps on the contours $\Gamma_j$.
In order to prepare for the asymptotic analysis we first adjust the contours in a suitable way.
This will be done in this section.

The systematic asymptotic analysis of Riemann-Hilbert problems is due to Deift and Zhou \cite{DeiZho} who 
developed their steepest descent analysis first in the context of  large time asymptotics of integrable systems. 
It was applied to orthogonal polynomials and random matrices in \cite{BleIts,DKMVZ1,DKMVZ2}. These papers
also emphasized the use of equilibrium measures in the asymptotic analysis. The Riemann-Hilbert problem
for orthogonal polynomials \cite{FokItsKit} is of size $2 \times 2$. Extensions of the steepest
descent analysis to larger size Riemann-Hilbert problems were first discussed in \cite{BleKui,KVAW}.
In the next section we will build on these and later works (see \cite{Kui} for an overview) 
and apply the Deift/Zhou steepest descent analysis
to the $3 \times 3$ matrix RH problem \ref{RHforY2}. A crucial role is played by the minimizer 
$(\mu_1^*, \mu_2^*)$ of the vector equilibrium problem.
This is  inspired by \cite{BlDeKu,DuiKui,DuKuMo} where the steepest descent analysis also depended
crucially on a vector equilibrium problem.

\subsection{Multiple orthogonality and Airy functions}

The orthogonality with respect to the Hermitian form \eqref{Hform3} is very 
similar to the biorthogonality 
that plays a role in the two-matrix model. We will use ideas that were developed
for the asymptotic analysis of the two-matrix model  with a quartic potential \cite{DuiKui, DuKuMo}
and apply these to the
orthogonal polynomials for the Hermitian form \eqref{Hform3}.

First of all we identify the orthogonal polynomials as multiple orthogonal polynomials.
We define 	for $j=0,1,2$ and $n \in \mathbb N$  the entire functions 
\begin{align} \label{weightw0jn} 
	w_{0,j,n}(z) & = \frac{1}{2\pi i} \sum_{k=0}^2 \epsilon_{j,k} \int_{\overline{\Gamma}_k} 
	 e^{-\frac{n}{t_0}(wz - \frac{t_3}{3} (w^3 + z^3))} dw, \\ \label{weightw1jn}
	 w_{1,j,n}(z) & = \frac{1}{2\pi i} \sum_{k=0}^2 \epsilon_{j,k} \int_{\overline{\Gamma}_k}
	 w e^{-\frac{n}{t_0}(wz - \frac{t_3}{3} (w^3 + z^3))} dw. 
	 \end{align}

\begin{lemma}
The monic orthogonal polynomial $P_{n,n}$ is characterized by the properties
\begin{equation} \label{MOPrelations}
\begin{aligned} 
	 \sum_{j=0}^2 \int_{\Gamma_j} P_{n,n}(z) z^k w_{0,j,n}(z) dz & = 0, \qquad k = 0,1, \ldots, \lceil\tfrac{n}{2}\rceil - 1, \\
	 \sum_{j=0}^2 \int_{\Gamma_j} P_{n,n}(z) z^k w_{1,j,n}(z) dz & = 0, \qquad k=  0,1, \ldots, \lfloor\tfrac{n}{2}\rfloor -1.
	 \end{aligned}
	 \end{equation}
\end{lemma}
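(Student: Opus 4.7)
The plan is to show that the orthogonality conditions \eqref{orthogonal} are equivalent to the multiple orthogonality conditions \eqref{MOPrelations}, by reducing the $w$-integration in the Hermitian form \eqref{Hform3} to the two ``lowest'' weights $w_{0,j,n}$ and $w_{1,j,n}$ via integration by parts in $w$. Since $z^k$ has real coefficients one has $\overline{(z^k)}(w) = w^k$, and \eqref{Hform3} gives
\[
\langle P_{n,n}, z^k\rangle = \sum_{j=0}^{2} \int_{\Gamma_j} P_{n,n}(z)\, I_{j,k}(z)\, dz,
\quad I_{j,k}(z) := \frac{1}{2\pi i}\sum_{l=0}^{2}\epsilon_{j,l}\int_{\overline{\Gamma}_l} w^k \Psi(w,z)\,dw,
\]
where $\Psi(w,z) = \exp\!\bigl(-\tfrac{n}{t_0}(wz - \tfrac{t_3}{3}(w^3+z^3))\bigr)$. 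My goal is to rewrite each $I_{j,k}$ as $p_k(z)\,w_{0,j,n}(z) + q_k(z)\,w_{1,j,n}(z)$ for suitable polynomials $p_k,q_k\in\mathbb{C}[z]$ of controlled degree.

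The key observation is that $\partial_w\Psi = -\tfrac{n}{t_0}(z - t_3 w^2)\Psi$. Integrating $\partial_w(w^{k-1}\Psi)$ along $\overline{\Gamma}_l$, and noting that the boundary terms at infinity vanish because the cubic exponent $\tfrac{nt_3}{3t_0}w^3\to -\infty$ along the tails of $\overline{\Gamma}_l$ (by the very choice \eqref{thetaj} of directions), one obtains the recursion
\[
t_3 \int_{\overline{\Gamma}_l} w^{k+1}\Psi\,dw \;=\; z\int_{\overline{\Gamma}_l} w^{k-1}\Psi\,dw \;-\; \frac{(k-1)\,t_0}{n}\int_{\overline{\Gamma}_l} w^{k-2}\Psi\,dw, \qquad k\geq 1,
\]
together with the seed $\int w^2\Psi\,dw = (z/t_3)\int\Psi\,dw$ coming from the case $k=1$. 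Iterating this yields $\int w^k\Psi\,dw = p_k(z)\int\Psi\,dw + q_k(z)\int w\Psi\,dw$, and a straightforward induction produces the degree bounds
\[
\deg p_{2l}\le l,\quad \deg p_{2l+1}\le l-1,\quad \deg q_{2l}\le l-1,\quad \deg q_{2l+1}\le l,
\]
with leading coefficient $t_3^{-l}$ in the two extremal cases.

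Summing against $\epsilon_{j,l}$ and comparing with \eqref{weightw0jn}--\eqref{weightw1jn} gives $I_{j,k}(z) = p_k(z) w_{0,j,n}(z) + q_k(z) w_{1,j,n}(z)$, hence
\[
\langle P_{n,n}, z^k\rangle = \sum_{j=0}^{2}\int_{\Gamma_j} P_{n,n}(z)\bigl[\,p_k(z)\,w_{0,j,n}(z) + q_k(z)\,w_{1,j,n}(z)\,\bigr]\,dz.
\]
The degree bounds above are precisely such that, for $k=0,1,\ldots,n-1$, the pairs $(p_k,q_k)$ form a triangular (hence linearly independent) system of $n$ vectors in the $n$-dimensional space of polynomial pairs $(p,q)$ with $\deg p\le \lceil n/2\rceil - 1$ and $\deg q\le \lfloor n/2\rfloor - 1$. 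Thus the $n$ orthogonality conditions $\langle P_{n,n}, z^k\rangle = 0$, for $k=0,\ldots,n-1$, are equivalent to the $n$ multiple orthogonality conditions \eqref{MOPrelations}, and combined with the monic normalization this gives the claimed characterization. I expect the main obstacle to be the bookkeeping: matching the parity-dependent degree bounds on $p_k,q_k$ precisely with the ceiling/floor split in \eqref{MOPrelations}, together with a clean justification of the vanishing of the boundary terms at infinity in the integration by parts.
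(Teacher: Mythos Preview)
Your argument is correct, and at its core it is the same integration-by-parts that drives the paper's proof, but you organize it differently. The paper works at the level of the Hermitian form: it invokes the structure relation $n\langle zf,g\rangle = t_0\langle f,g'\rangle + nt_3\langle f,z^2 g\rangle$ (which is precisely your identity $\partial_w\Psi=-\tfrac{n}{t_0}(z-t_3w^2)\Psi$ integrated) and, starting from $\langle P_{n,n},z^j\rangle=0$ for $j\le n-1$, proves by induction on $k$ that $\langle z^kP_{n,n},z^j\rangle=0$ for all $j\le n-2k-1$; specializing to $j=0,1$ then yields \eqref{MOPrelations}. You instead carry out the integration by parts explicitly in the $w$-integral to build the polynomials $p_k,q_k$ and exhibit the change of basis between the two families of $n$ linear conditions as a triangular matrix. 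What your route buys is an explicit, symmetric proof of the \emph{equivalence} of the two sets of conditions (the paper only sketches the converse direction with ``we can similarly show''), at the cost of a bit more bookkeeping with the parity-dependent degree bounds. The paper's route is shorter because the structure relation has already been isolated as a standalone property of the form, so it can simply be quoted.
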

\begin{proof}
This follows as in  \cite{KuiMcL}, but for convenience to the reader we present the argument.
Recall that $P_{n,n}$ satisfies the orthogonality conditions \eqref{orthogonal} with $k=n$.

We prove that for every $k$,
\begin{equation} \label{MOPrelations1} 
	\langle z^k P_{n,n}(z), z^j \rangle = 0, \qquad j = 0, \ldots, n-2k-1.
\end{equation} 
For $k = 0$, the condition \eqref{MOPrelations1} reduces to the orthogonality
condition \eqref{orthogonal}.

Assume that \eqref{MOPrelations1} holds for certain $k \geq 0$.
The structure relation \eqref{structure} with cubic potential \eqref{cubicV2} gives  
\begin{equation} \label{structurecubic} 
	n \langle z f, g \rangle = t_0 \langle f, g' \rangle + n t_3 \langle f, z^2 g \rangle.
	\end{equation}
Taking $f(z) = z^k P_{n,n}(z)$ and $g(z) = z^j$	
we find that both terms in the right-hand side of \eqref{structurecubic} vanish 
for $j + 2 \leq n - 2k-1 $, because of the induction hypothesis. Then the
left-hand side vanishes as well, and this gives \eqref{MOPrelations1}
with $k+1$. Thus \eqref{MOPrelations1} follows by induction.

Taking $j=0$ and $j=1$ in \eqref{MOPrelations1}, we find  
\begin{equation} \label{MOPrelations2} 
	 \begin{aligned}
	\langle z^k P_{n,n}, 1 \rangle = 0,  & \qquad  k=0, \ldots, \lceil \tfrac{n}{2} \rceil - 1, \\
	\langle z^k P_{n,n}, z \rangle = 0,  & \qquad k=0, \ldots, \lfloor \tfrac{n}{2} \rfloor -1,
	\end{aligned}
	\end{equation}
and these conditions are equal to the conditions \eqref{MOPrelations} because of the representation
\eqref{Hform3} of the Hermitian form and the definition \eqref{weightw0jn}--\eqref{weightw1jn}
of the weight functions.

The conditions \eqref{MOPrelations} in fact characterize the monic orthogonal polynomial
since we can similarly show that the conditions \eqref{MOPrelations2}
are in fact equivalent to the orthogonality conditions \eqref{orthogonal}
(again by using the structure relation \eqref{structurecubic}).
\end{proof}

The functions \eqref{weightw0jn} and \eqref{weightw1jn} can be expressed in terms
of Airy functions and their derivatives. The classical Airy differential equation  $y'' = zy$ 
has the three solutions
\begin{equation} \label{y0y1y2}
\begin{aligned} 
	y_0(z) & = \Ai(z) = \frac{1}{2\pi i} \int_{\Gamma_0} e^{\frac{1}{3}s^3 - zs} ds
		= - \frac{1}{2\pi i} \int_{\overline{\Gamma}_0} e^{\frac{1}{3} s^3 - zs} ds, \\
	y_1(z) & = \omega \Ai(\omega z)  =  \frac{1}{2\pi i} \int_{\Gamma_1} e^{\frac{1}{3}s^3 - zs} ds
		= - \frac{1}{2\pi i} \int_{\overline{\Gamma}_2} e^{\frac{1}{3} s^3 - zs} ds, \\
  y_2(z) & = \omega^2 \Ai(\omega^2 z)  =  \frac{1}{2\pi i} \int_{\Gamma_2} e^{\frac{1}{3}s^3 - zs} ds
		= - \frac{1}{2\pi i} \int_{\overline{\Gamma}_1} e^{\frac{1}{3} s^3 - zs} ds,
  \end{aligned}
\end{equation} 
that are related by the identity $y_0 + y_1 + y_2 = 0$.
Then we get from \eqref{weightw0jn} and \eqref{y0y1y2} after  the change of variables $w = \left(\frac{t_0}{nt_3}\right)^{1/3} s$, 
\begin{equation} \label{Airyw0jn}
\begin{aligned} 
	  w_{0,0,n}(z) & = d_n (y_2(c_n z) - y_1(c_n z)) e^{\frac{nt_3}{3t_0} z^3}, \\
	  w_{0,1,n}(z) & = d_n (y_1(c_n z) - y_0(c_n z)) e^{\frac{nt_3}{3t_0} z^3}, \\
	  w_{0,2,n}(z) & = d_n (y_0(c_n z) - y_2(c_n z)) e^{\frac{nt_3}{3t_0} z^3},
\end{aligned}
\end{equation}
with constants 
\begin{equation} \label{constantcn} 
	c_n = \frac{n^{2/3}}{t_0^{2/3} t_3^{1/3}} > 0, \qquad d_n = \left(\frac{t_0}{nt_3} \right)^{1/3}.
	\end{equation}
Similarly from \eqref{weightw1jn} and \eqref{y0y1y2}
\begin{equation} \label{Airyw1jn}
\begin{aligned} 
	  w_{1,0,n}(z) & = -d_n^2 (y_2'(c_n z) - y_1'(c_n z)) e^{\frac{nt_3}{3t_0} z^3}, \\
	  w_{1,1,n}(z) & = -d_n^2 (y_1'(c_n z) - y_0'(c_n z)) e^{\frac{nt_3}{3t_0} z^3}, \\
	  w_{1,2,n}(z) & = -d_n^2 (y_0'(c_n z) - y_2'(c_n z)) e^{\frac{nt_3}{3t_0} z^3}.
\end{aligned}
\end{equation}

\subsection{Riemann-Hilbert problem} 
Observe  that there is no complex conjugation in the multiple orthogonality
conditions \eqref{MOPrelations}.
Following \cite{VAGeKu} we then find a characterization of $P_{n,n}$ in terms of 
a  $3 \times 3$ matrix valued Riemann-Hilbert (RH) problem. We assume that the 
contours $\Gamma_j$ are disjoint,
and have the orientation as shown in Figure \ref{fig:contoursGamma}.
The orientation induces a $+$ and $-$ side on $\Gamma_j$, where the $+$ side ($-$ side)
is on the left (right) while traversing the contour according to its orientation.

\begin{rhproblem} \label{RHforY1}
Let $\Gamma = \bigcup_{j=0}^2 \Gamma_j$. 
We look for $Y : \mathbb C \setminus \Gamma \to \mathbb C^{3\times 3}$ satisfying
\begin{itemize}
\item $Y$ is analytic in $\mathbb C \setminus \Gamma$,
\item $Y_+(z) = Y_-(z) J_Y(z)$ for $z \in \Gamma$ with jump matrix $J_Y$ given by
\begin{equation} \label{JYdisjoint} 
	J_Y = \begin{pmatrix} 1 & w_{0,j,n} & w_{1,j,n} \\ 0 & 1 & 0 \\ 0 & 0 & 1 \end{pmatrix} \qquad \text{on } \Gamma_j, 
	\end{equation}
	(here $Y_+$ and $Y_-$ denote the limiting values of $Y$, when approaching $\Gamma$ on
	the $+$ side and $-$ side, respectively),
\item $Y(z) = (I + O(1/z)) \diag\left(z^n, z^{-\lceil n/2 \rceil}, z^{-\lfloor n/2 \rfloor} \right)$ as $z \to \infty$.	
\end{itemize}
\end{rhproblem}
The jump condition \eqref{JYdisjoint} has to be adjusted in the case of overlapping contours.

Provided that the orthogonal polynomial $P_{n,n}$ uniquely exists, the RH problem \ref{RHforY1}
has a unique solution, see \cite{VAGeKu}. The first row of $Y$ is given by
\begin{align*} 
	Y_{1,1}(z) & = P_{n,n}(z), \\
	Y_{1,2}(z) & = \frac{1}{2\pi i} \sum_{j=0}^2 \int_{\Gamma_j} \frac{P_{n,n}(x) w_{0,j,n}(x)}{x-z} \, dx, \\
	Y_{1,3}(z) & = \frac{1}{2\pi i} \sum_{j=0}^2 \int_{\Gamma_j} \frac{P_{n,n}(x) w_{1,j,n}(x)}{x-z} \, dx. 
	\end{align*}
The remaining two rows of $Y$ are built in a similar way out of certain polynomials of degrees $n-1$
(one of which is proportional to $P_{n-1,n}$, if this orthogonal polynomial exists.)

\subsection{Deformation of contours} \label{subsec:deformation}
In order to prepare for the steepest descent analysis of the RH problem, we first adjust
the contours $\Gamma_j$. Because the weights \eqref{Airyw0jn}, \eqref{Airyw1jn}
in the multiple orthogonality conditions \eqref{MOPrelations} are entire
functions, we have the freedom to make arbitrary deformation for each contour $\Gamma_j$, 
as long as the contours start and end at the same asymptotic angles at infinity. 
We are going to deform the contours such that their union contains
the set $\Sigma_1 = \bigcup_{j=0}^2 [0, \omega^j x^*]$ where the zeros of the polynomials 
are going to accumulate. The deformed contours are shown in Figure \ref{fig:deformedGamma}.

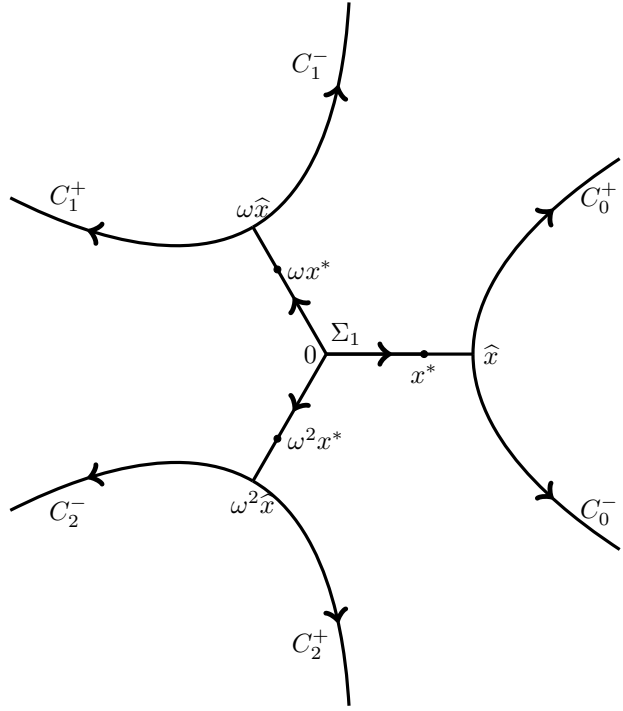
\begin{figure}[t]
\begin{center}
\begin{tikzpicture}[scale=1.3,decoration={markings,mark=at position .67 with {\arrow[black,line width=0.8mm]{>};}}]

\draw[postaction={decorate}, very thick] (0,0)--(1,0) node[below]{$x^*$};
\draw[very thick] (0,0)--(1.5,0) node[right]{$\widehat{x}$};
\draw[postaction={decorate}, very thick, rotate around={120:(0,0)}] (0,0)--(1,0) node[right]{$\omega x^*$};
\draw[very thick, rotate around ={120:(0,0)}] (1,0)--(1.5,0) node[above]{$\omega \widehat{x}$};
\draw[postaction={decorate}, very thick, rotate around={-120:(0,0)}] (0,0)--(1,0) node[right]{$\omega^2 x^*$};
\draw[very thick, rotate around ={-120:(0,0)}] (1,0)--(1.5,0) node[below]{$\omega^2 \widehat{x}$};

\draw[postaction={decorate},very thick,rotate around={-90:(1.5,0)}] (1.5,0) parabola (3.5,1.5);
\draw[postaction={decorate},very thick,rotate around={90:(1.5,0)}] (1.5,0) parabola (3.5,-1.5);

\draw[postaction={decorate},very thick,rotate around={30:(-0.75,1.3)}] (-0.75,1.3) parabola (1.25,2.8);
\draw[postaction={decorate},very thick,rotate around={210:(-0.75,1.3)}] (-0.75,1.3) parabola (1.25,-0.2);
\draw[postaction={decorate},very thick,rotate around={-210:(-0.75,-1.3)}] (-0.75,-1.3) parabola (1.25,0.2);
\draw[postaction={decorate},very thick,rotate around={-30:(-0.75,-1.3)}] (-0.75,-1.3) parabola (1.25,-2.8);

\draw(2.5,1.6) node[right]{$C_0^+$}; 
\draw(2.5,-1.6) node[right]{$C_0^-$};

\draw[rotate around={120:(0,0)}] (2.5,1.6) node[above]{$C_1^+$}; 
\draw[rotate around={120:(0,0)}] (2.5,-1.6) node[left]{$C_1^-$};

\draw[rotate around={-120:(0,0)}] (2.5,1.6) node[left]{$C_2^+$}; 
\draw[rotate around={-120:(0,0)}] (2.5,-1.6) node[below]{$C_2^-$};

\draw(0.2,0) node[above]{$\Sigma_1$};
\draw(0,0) node[left]{$0$};

\filldraw(1,0) circle (1pt);
\filldraw[rotate around={120:(0,0)}](1,0) circle (1pt);
\filldraw[rotate around={-120:(0,0)}](1,0) circle (1pt);
\end{tikzpicture}
\end{center}
\caption{Deformed contours $\Gamma_0$, $\Gamma_1$, $\Gamma_2$. All contours are oriented away from~$0$.}
\label{fig:deformedGamma}
\end{figure}

We recall that $\widehat{x} > x^*$, see \eqref{xhat}. We deform $\Gamma_0$ to contain the intervals $[0,\omega^2 \widehat{x}]$ 
and $[0, \omega \widehat{x}]$  
and we continue $\Gamma_0$ with an unbounded contour from $\omega \widehat{x}$ to infinity at asymptotic angle $5 \pi/12$,
and its mirror image in the real axis which goes from $\omega^2 \widehat{x}$ to infinity at angle $-5 \pi/12$.
The number $5 \pi/12$ could be replaced with any number between $\pi/3$ and $\pi/2$.  
We obtain $\Gamma_1$ and $\Gamma_2$ by rotating $\Gamma_0$
over angles $2\pi/3$ and $-2\pi/3$, respectively. 

It will be convenient to give new names to the unbounded contours. We define
\begin{equation} \label{contoursCjpm}
\begin{aligned}
	 C_0^+  & = (\Gamma_1 \cap S_0) \setminus [0, \widehat{x}], &  C_0^- & = (\Gamma_2 \cap S_0) \setminus [0, \widehat{x}], \\
	 C_1^+  & = (\Gamma_2 \cap S_1) \setminus [0, \omega \widehat{x}], &   C_1^- & = (\Gamma_0 \cap S_1) \setminus [0, \omega \widehat{x}], \\
	 C_2^+  & = (\Gamma_0 \cap S_2) \setminus [0, \omega^2 \widehat{x}], \qquad &   C_2^- & = (\Gamma_1 \cap S_2) \setminus [0, \omega^2 \widehat{x}],
\end{aligned}
\end{equation}
and
\begin{equation} \label{contoursCj}
	 C_j = C_j^+ \cup C_j^-, \qquad j=0,1,2.
	 \end{equation}
Then, see also Figure~\ref{fig:deformedGamma},
\begin{equation} \label{contoursGammaandC}
	\begin{aligned}
		\Gamma_0 & = C_2^+ \cup [0, \omega^2 \widehat{x}] \cup [0,\omega \widehat{x}] \cup C_1^-, \\
		\Gamma_1 & = C_0^+ \cup [0,\widehat{x}] \cup [0,\omega^2 \widehat{x}] \cup C_2^-, \\
		\Gamma_2 & = C_1^+ \cup [0,\omega \widehat{x}] \cup [0, \widehat{x}] \cup C_0^-, 
	\end{aligned}
\end{equation}	
and
\begin{equation} 
		\Gamma  = \bigcup_{j=0}^2 \left( [0,\omega^j \widehat{x}] \cup C_j \right).
\end{equation}
We redefine the orientation on the contours such that all parts are from now on oriented away from $0$ and towards  
$\infty$, as shown in Figure~\ref{fig:deformedGamma}. 
We have a freedom in the
choice of the precise location of the 
contours $C_j$ and we make use of this freedom later on. 

After this deformation of contours and partial reversion of orientation we find new expressions
for the weights. Indeed, for $x \in [0, \widehat{x}]$, we find by \eqref{Airyw0jn}
\begin{multline*} 
	- w_{0,1,n}(x) + w_{0,2,n}(x) \\
	= d_n \left(-(y_1(c_n x) - y_0(c_n x)) + (y_0(c_n x)-y_2(c_nx))\right) e^{\frac{n t_3}{3t_0} x^3} 
	\end{multline*}
Using the identity $y_0 + y_1 + y_2 =  0$ and $y_0 = \Ai$, this reduces to
\begin{equation} \label{combinedweight0} 
	-w_{0,1,n}(x) + w_{0,2,n}(x) = 3 d_n \Ai(c_n x) e^{\frac{nt_3}{3t_0} x^3}. 
	\end{equation}
Similarly we get from \eqref{Airyw1jn}
\begin{equation} \label{combinedweight1} 
	-w_{1,1,n}(x) + w_{1,2,n}(x)  = - 3 d_n^2 \Ai'(c_n x) e^{\frac{nt_3}{3t_0} x^3}.
	\end{equation}

It is thanks to our choice \eqref{Cmatrix} for the matrix $C$, that we obtain
the Airy function $\Ai$ and its derivative in \eqref{combinedweight0}--\eqref{combinedweight1},
and not some other solution of the Airy differential equation.
The Airy function $\Ai$ is  a special solution of $y'' = zy$ because of its
asymptotic behavior, see \cite[formulas 10.4.59, 10.4.61]{AbrSte},
\begin{equation} 	\label{Airyasymptotics}
	\begin{aligned} 
	\Ai(z) &= \frac{z^{-1/4}}{2\sqrt{\pi}} e^{-\frac{2}{3} z^{3/2}} (1 + O(z^{-3/2})), \\
	\Ai'(z) &= -\frac{z^{1/4}}{2\sqrt{\pi}} e^{-\frac{2}{3} z^{3/2}} (1 + O(z^{-3/2})),   
	\end{aligned}
	\end{equation}
as $z \to \infty$, $- \pi < \arg z < \pi$, which is decaying for $z$ going to infinity
on the positive real axis. Any other linearly independent solution of the Airy differential
equation increases on the positive real axis. We use \eqref{Airyasymptotics} in 
the steepest descent analysis of the RH problem, and this explains why we chose the matrix $C$
as we did in \eqref{Cmatrix}.

We now redefine the weights on the new contour $\Gamma$ by combining the
contributions of overlapping contours. We also rescale the weights by dropping
the irrelevant constant prefactor $3 d_n$ in \eqref{combinedweight0} and $-3d_n^2$ in \eqref{combinedweight1}.

\begin{definition}
Let $\Gamma = \bigcup_{j=0}^2 \left([0, \omega_j \widehat{x}] \cup C_j\right)$ be as in Figure \ref{fig:deformedGamma}.
We define two functions $w_{0,n}$ and $w_{1,n}$ on $\Gamma$ by
first defining them in the sector $S_0 : |\arg z| < \pi/3$ by putting
\begin{align} \label{weightsdef1}
	& \begin{cases}
	w_{0,n}(x) = \Ai(c_n x) e^{\frac{nt_3}{3t_0} x^3}, \\
	w_{1,n}(x) = \Ai'(c_n x) e^{\frac{nt_3}{3t_0} x^3},
	\end{cases}
	&&  \text{for } x \in [0, \widehat{x}], \\
	 \label{weightsdef2}
	& \begin{cases}
	w_{0,n}(z) = \frac{1}{3}(y_0(c_n z) - y_1(c_n z)) e^{\frac{n t_3}{3t_0} z^3}, \\
	w_{1,n}(z) = \frac{1}{3}(y_0'(c_n z) - y_1'(c_n z)) e^{\frac{n t_3}{3t_0} z^3},
	\end{cases}
	&& \text{for } z \in C_0^+, \\
	 \label{weightsdef3}
	& \begin{cases}
	w_{0,n}(z) = \frac{1}{3}(y_0(c_n z) - y_2(c_n z)) e^{\frac{n t_3}{3t_0} z^3}, \\
	w_{1,n}(z) = \frac{1}{3}(y_0'(c_n z) - y_2'(c_n z)) e^{\frac{n t_3}{3t_0} z^3}, 
	\end{cases}
	&& \text{for } z \in C_0^-,
	\end{align}
and then by extending to the  parts of $\Gamma$ in the other sectors by the property
\begin{equation} \label{weightsdef4}
	\begin{cases} 
	w_{0,n}(\omega z) = \omega^2 w_{0,n}(z), \\
	w_{1,n}(\omega z) = \omega w_{1,n}(z), 
	\end{cases}
	 \qquad \text{for } z \in \Gamma.
	\end{equation} 
\end{definition}

With the definition \eqref{weightsdef1} we have by \eqref{Airyasymptotics} that 
for $x \in (0,\widehat{x}]$,
\begin{equation} \label{w0w1asymptotics}
	\begin{cases} \ds w_{0,n}(x) = \frac{(c_n x)^{-1/4}}{2 \sqrt{\pi}}
	  \exp\left(- \frac{n}{t_0} \left(\frac{2}{3 \sqrt{t_3}} x^{3/2} - \frac{t_3}{3} x^3\right) \right) \left(1 + O(n^{-1})\right),
	\\
		\ds w_{1,n}(x) = - \frac{(c_n x)^{1/4}}{2 \sqrt{\pi}}
	  \exp\left(- \frac{n}{t_0} \left(\frac{2}{3 \sqrt{t_3}} x^{3/2} - \frac{t_3}{3} x^3\right) \right) \left(1 + O(n^{-1})\right),
	  \end{cases}
\end{equation}
as $n \to \infty$. The exponential part in the asymptotic behavior \eqref{w0w1asymptotics} of the weight functions 
is reflected in the term
\[ \frac{1}{t_0} \int \left(  \frac{2}{3 \sqrt{t_3}} |x|^{3/2} - \frac{t_3}{3} x^3\right) d\mu_1(x) \]
that appears in the energy functional \eqref{energyE}.

\subsection{Riemann-Hilbert problem after deformation}

After the contour deformation we find that the multiple orthogonality conditions \eqref{MOPrelations}
satisfied by the  monic orthogonal polynomial can also be stated as 
\begin{equation} \label{MOPrelationsdeformed}
	\begin{aligned}
	\int_{\Gamma} P_{n,n}(z) z^k w_{0,n}(z) dz &= 0, \qquad k = 0, \ldots, \lceil \tfrac{n}{2} \rceil - 1, \\
	\int_{\Gamma} P_{n,n}(z) z^k w_{1,n}(z) dz &= 0, \qquad k = 0, \ldots, \lfloor \tfrac{n}{2} \rfloor - 1.
	\end{aligned}
\end{equation}
The corresponding RH problem is as follows.
\begin{rhproblem} \label{RHforY2} We look for $Y : \mathbb C \setminus \Gamma \to \mathbb C^{3 \times 3}$ satisfying
\begin{itemize}
\item $Y : \mathbb C \setminus \Gamma \to \mathbb C^{3 \times 3}$ is analytic,
\item $Y_+(z) = Y_-(z) \begin{pmatrix} 1 & w_{0,n}(z) & w_{1,n}(z) \\ 0 & 1  & 0 \\ 0 & 0 & 1 \end{pmatrix}$ for $z \in \Gamma$
\item $Y(z) = (I + O(1/z)) \diag\left(z^n, z^{-\lceil n/2 \rceil}, z^{-\lfloor n/2 \rfloor} \right)$ as $z \to \infty$.	
\end{itemize}
\end{rhproblem}
The RH problems \ref{RHforY1} and \ref{RHforY2} are equivalent. Also RH problem \ref{RHforY2}
has a solution if and only if the monic multiple orthogonal polynomial uniquely exists and in that
case one has 
\begin{equation} \label{Y11} 
	Y_{11}(z) = P_{n,n}(z),
	\end{equation}
as before.

\section{Riemann-Hilbert steepest descent analysis and the proofs of Theorems \ref{theorem2} and \ref{theorem3}}
\label{RHanalysis} 

In this section we perform an asymptotic analysis of the RH problem \ref{RHforY2}.
For convenience we assume that $n$ is even, although this is not essential.
The asymptotic analysis will lead to the proof of the following result.

\begin{lemma} \label{lemmaRH}
Let $t_3 > 0$ and $0 < t_0 < t_{0,crit}$. Then for $n$ large enough the orthogonal
polynomial $P_{n,n}$ exists and satisfies
\begin{equation} \label{Pnnasymptotics} 
	P_{n,n}(z) = M_{1,1}(z) e^{ng_1(z)} (1 + O(1/n)), \qquad \text{ as } n \to \infty 
	\end{equation}
uniformly for $z$ in compact subsets of $\overline{\mathbb C} \setminus \Sigma_1$. Here $g_1$ is defined by
\[ g_1(z) = \int \log (z-s) d\mu_1^*(s), \qquad z \in \mathbb C \setminus \Sigma_1, \]
see also \eqref{gfunctions} below, and $M_{1,1}$ is an analytic function with no zeros 
in $\mathbb C \setminus \Sigma_1$.
\end{lemma}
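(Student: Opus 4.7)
The plan is to execute a Deift--Zhou steepest descent analysis on the $3\times 3$ RH problem \ref{RHforY2}, using the vector equilibrium measure $(\mu_1^*,\mu_2^*)$ constructed in Section \ref{proofTheo4} to provide the correct $g$-functions. First I would introduce the three $g$-functions
\begin{equation*}
 g_k(z) = \int \log(z-s)\,d\mu_k^*(s), \qquad k=1,2,
\end{equation*}
together with an auxiliary $g_3$ adapted to $\Sigma_2$, chosen so that the combinations $g_1+g_2-g_3$ etc.\ capture the $\xi$-functions from Lemma \ref{lem:cubic}. Concretely, using the identification $\xi_j'(z) = \frac{d}{dz}(\text{appropriate combination of }g_k)$ one verifies that the $\xi_j$ produced by the spectral curve \eqref{spectralcurve} give exactly the right ``phase functions'' to normalize the RH problem at infinity and to create jumps that become (after opening of lenses) purely oscillatory on $\Sigma_1\cup\Sigma_2$ and exponentially small elsewhere.

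Next I would perform the standard sequence of transformations $Y\mapsto T\mapsto S\mapsto R$. The first transformation $T(z)=\diag(e^{-n\ell_1},e^{-n\ell_2},e^{-n\ell_3})\,Y(z)\,\diag(e^{-ng_1(z)},e^{ng_2(z)},e^{ng_3(z)})$ normalizes $Y$ to $I$ at infinity and, thanks to the Euler--Lagrange identities \eqref{EL2}--\eqref{EL1}, converts the jump matrices into a form in which the diagonal jumps on $\Sigma_1$ and $\Sigma_2$ are oscillatory while the off-diagonal entries on the unbounded contours $C_j^{\pm}$ are exponentially small, using the precise asymptotics \eqref{w0w1asymptotics} for the weights together with the Airy asymptotics off $C_0^{\pm}$. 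The second transformation $T\mapsto S$ opens lenses around each of the three arms of $\Sigma_1$ (and along $\Sigma_2$), factoring the oscillatory jump $\begin{pmatrix}\ast&1\\-1&\ast\end{pmatrix}$-type block in the standard way so that on the lens boundaries the jump is $I+$ (exponentially small), provided the lenses are chosen inside the region where $\Re(\xi_{1}-\xi_{2})>0$ and $\Re(\xi_{2}-\xi_{3})>0$; the inequalities follow from the sign analysis of $\xi_j$ in Lemma \ref{lem:cubic} and the bound $x^*<\widehat x$.

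Then I would construct the global parametrix $M$ by solving the model RH problem with jumps only on $\Sigma_1\cup\Sigma_2$. Because the Riemann surface $\mathcal R$ has genus zero with the explicit rational parametrization $z=h(w)=rw+aw^{-2}$, $\xi=h(w^{-1})$ from \eqref{parametrization}--\eqref{ra}, the model problem can be solved explicitly: the three columns of $M$ are given by the three branches of an appropriate product of rational functions of the uniformizing variable $w$, and one reads off $M_{1,1}(z)$ as an explicit analytic, nonvanishing function on $\overline{\mathbb C}\setminus\Sigma_1$ (its analyticity across the cuts follows from the permutation $\xi_1\leftrightarrow\xi_2$ on $\Sigma_1$, and its nonvanishing from the absence of zeros of the uniformizer). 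Next, local parametrices are built in small disks around the six branch/node points $\omega^j x^*$ and $\omega^j\widehat x$, $j=0,1,2$, plus the origin. Near each endpoint $\omega^j x^*$ the branch point of $\xi_1,\xi_2$ is of square-root type (since $t_0<t_{0,crit}$), so the standard Airy parametrix from \cite{DKMVZ1} embedded in the $3\times 3$ framework suffices; near each $\omega^j\widehat x$ the two sheets $\mathcal R_2$ and $\mathcal R_3$ meet at a node away from the support of the measures, so after the lens opening the jump there is exponentially small and no parametrix is needed beyond the global one; near the origin, the three arms meet and one handles this by a direct computation using the fact that $\mu_1^*$ has strictly positive density there.

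After matching, the final transformation $S\mapsto R$ produces a RH problem for $R$ with jumps uniformly $I+O(1/n)$ on a fixed contour, so by the standard small-norm theorem $R(z)=I+O(1/n)$ uniformly on closed sets. Unraveling the transformations gives
\begin{equation*}
 Y_{11}(z) = e^{ng_1(z)}\bigl(M_{1,1}(z)+O(1/n)\bigr),
\end{equation*}
which together with \eqref{Y11} yields \eqref{Pnnasymptotics}; the existence and uniqueness of $P_{n,n}$ for large $n$ follows because the RH problem \ref{RHforY2} is solvable once $R$ is. The main obstacle I expect is the verification of the ``lens inequalities'' $\Re(\xi_j-\xi_k)$ having the correct sign on the chosen lens boundaries and on the unbounded contours $C_j^{\pm}$, since this requires a careful global sign analysis of the three $\xi$-functions defined piecewise in the sectors $S_0,S_1,S_2$ via \eqref{xi1}--\eqref{xi3}; the freedom in choosing $C_j^{\pm}$ (noted at the end of Section \ref{subsec:deformation}) is used precisely to arrange these inequalities, and it is the rational parametrization \eqref{parametrization} that makes them explicit.
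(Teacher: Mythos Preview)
Your overall plan follows the standard Deift--Zhou architecture and correctly identifies the role of the vector equilibrium measure, the genus-zero Riemann surface, and the Airy local parametrices at $\omega^j x^*$. However, there is a genuine gap at the very first step, and it is precisely where this problem deviates from the standard template.

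The jump matrices in RH problem~\ref{RHforY2} involve the weights $w_{0,n}, w_{1,n}$, which are \emph{not} of the form $e^{-nQ(z)}$ but are themselves Airy functions times exponentials, see \eqref{weightsdef1}--\eqref{weightsdef3}. Your proposed first transformation $Y\mapsto T$ by diagonal conjugation with $e^{\pm n g_k}$ cannot convert these into a jump of the form $\begin{pmatrix}e^{-2n\varphi_{1,+}}&1\\0&e^{-2n\varphi_{1,-}}\end{pmatrix}\oplus 1$ on $\Sigma_1$: after such a conjugation the $(1,2)$ and $(1,3)$ entries still carry genuine Airy functions of argument $c_n z$, not pure exponentials, and the factorization needed for lens opening is unavailable. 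Invoking the asymptotics \eqref{w0w1asymptotics} does not help here, because the jump condition in a RH problem must be exact; an $O(n^{-1})$ error in a jump matrix that is not already close to $I$ cannot simply be absorbed. The paper resolves this by a preliminary \emph{exact} transformation $Y\mapsto X$ (Section~\ref{subsecfirst}) that right-multiplies $Y$ by a $2\times 2$ block built from the Airy solutions $y_0,\dots,y_5$ and their derivatives, see \eqref{defX1}--\eqref{defX2}; the Wronskian identities \eqref{Wy0y3}--\eqref{Wy0y1} are what make the Airy content cancel and produce the purely exponential jump $e^{\frac{nt_3}{3t_0}z^3}$ on $\Sigma_1$. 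Only after this (and a second clean-up transformation $X\mapsto V$) does the $g$-function normalization become the third step $V\mapsto U$, not the first.

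A second, smaller gap: after the $g$-function step the jump on $\Sigma_2$ acquires a $(1,3)$ entry $(\tfrac12-\tfrac16 i\sqrt3)e^{2n\varphi_{1,-}}$ which is \emph{not} exponentially small near the origin (since $\Re\varphi_{1,-}$ is not negative there). The paper removes it by an additional ``global lens'' transformation $U\mapsto T$ (Section~\ref{subsecfourth}), using the explicit constant $\alpha=\tfrac12+\tfrac16 i\sqrt3$ and the identity \eqref{phi1andphi2onSigma2}; this is not the standard local lens opening and does not fall under your $T\mapsto S$. Finally, no local parametrix is needed at $\omega^j\widehat x$ (you note this) \emph{or} at the origin: the lenses $L_1,L_2$ are opened so that their boundaries stay at positive distance from $0$ and the product of jumps around $0$ is the identity, so the global parametrix already handles it (see RH problem~\ref{RHforM}).
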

We write $M_{1,1}$ since this function will arise as the (1,1) entry of a matrix-valued function $M$,
see Section \ref{subsecglobal}.

\subsection{First transformation} \label{subsecfirst}

We start from the RH problem \ref{RHforY2} for $Y$ with $n$ even. 
The first transformation $Y \mapsto X$ will have a different form in 
each of the three sectors \eqref{sectors}.

We recall the Airy functions $y_j$, $j=0,1,2$ from \eqref{y0y1y2}.
We also introduce
\begin{equation} \label{y3y4y5} 
	\begin{aligned}
	y_3(z) & = 2 \pi i(\omega^2 \Ai(\omega z) - \omega \Ai(\omega^2 z)), \\
	y_4(z) & = \omega y_3(\omega z), \\
	y_5(z) & = \omega^2 y_3(\omega^2 z).
	\end{aligned}
	\end{equation} 
These solutions of the Airy differential equation are chosen such that the Wronskians
$W(y_j,y_k) = y_j y_k' - y_j' y_k$ satisfy
\begin{equation} \label{Wy0y3} 
	\begin{aligned}
	W(y_j, y_{j+3}) & = 1, \qquad j=0,1,2, \\
	W(y_3, y_4) & = W(y_3,y_5) = W(y_4, y_5) = 0.  
	\end{aligned}
	\end{equation}
We obviously have from \eqref{y3y4y5} that $y_3'(0)=0$. Hence, by the uniqueness of the
solution of the initial value problem for the Airy equation at $z=0$,
we obtain that  $y_3(\omega z)=y_3(z)$. This implies that $y_4(z) =\omega
y_3(z)$ and $y_5(z)=\omega^2 y_3(z)$.
	
We also recall the Wronskian relations \cite[formulas 10.4.11--10.4.13]{AbrSte}
\begin{equation} \label{Wy0y1} 
	W(y_0,y_1) = W(y_1, y_2) = W(y_2,y_0) = -\frac{1}{2\pi i}. 
	\end{equation}

\begin{definition}
We define 
\begin{align} \label{defX1} 
	\widetilde{X}(z) & =  
		Y(z)  \times \begin{cases}
			\begin{pmatrix} 1 & 0 & 0 \\ 0 & y_3'(c_n z)  & - y_0'(c_n z) \\ 0  & - y_3(c_n z)  & y_0(c_n z) \end{pmatrix},
			 & \quad z \in S_0, \\
      \begin{pmatrix} 1 & 0 & 0 \\ 0 & y_5'(c_n z) & - y_2'(c_n z) \\ 0  & -y_5(c_n z)  & y_2(c_n z) \end{pmatrix},
			 & \quad z \in S_1, \\
	    \begin{pmatrix} 1 & 0 & 0 \\ 0 & y_4'(c_n z) & - y_1'(c_n z) \\ 0  & -y_4(c_n z) & y_1(c_n z) \end{pmatrix},
	     & \quad z \in S_2,
	     \end{cases}
	\end{align}
and
\begin{align} \label{defX2} 
	X(z) = \begin{pmatrix} 1 & 0 & 0 \\ 0 & (2\pi)^{-1/2} c_n^{-1/4} & 0 \\ 0 & 0 & i (2\pi)^{-1/2} c_n^{1/4} \end{pmatrix}
	\widetilde{X}(z) \begin{pmatrix} 1 & 0 & 0 \\ 0 & 1 & 0 \\ 0 & 0 & -2\pi i \end{pmatrix} 
	\end{align}
where $c_n$ is the constant given by \eqref{constantcn}
\end{definition}

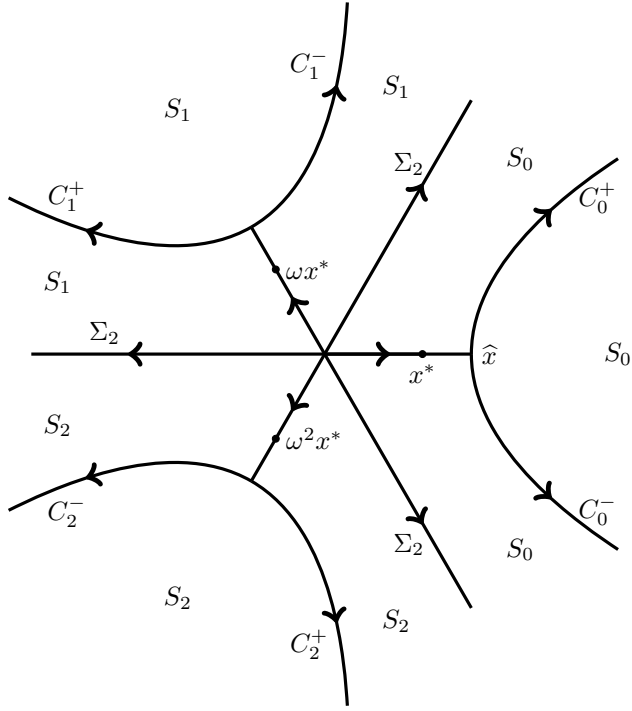
\begin{figure}[t]
\begin{center}
\begin{tikzpicture}[scale=1.3,decoration={markings,mark=at position .67 with {\arrow[black,line width=0.8mm]{>};}}]

\draw[postaction={decorate}, very thick] (0,0)--(1,0) node[below]{$x^*$};
\draw[very thick] (0,0)--(1.5,0) node[right]{$\widehat{x}$};
\draw[postaction={decorate}, very thick, rotate around={120:(0,0)}] (0,0)--(1,0) node[right]{$\omega x^*$};
\draw[very thick, rotate around ={120:(0,0)}] (1,0)--(1.5,0); 
\draw[postaction={decorate}, very thick, rotate around={-120:(0,0)}] (0,0)--(1,0) node[right]{$\omega^2 x^*$};
\draw[very thick, rotate around ={-120:(0,0)}] (1,0)--(1.5,0); 

\draw[postaction={decorate}, very thick] (0,0)--(-3,0) node[near end, above]{$\Sigma_2$};
\draw[postaction={decorate}, very thick, rotate around={120:(0,0)}] (0,0)--(-3,0) node[near end, left]{$\Sigma_2$};
\draw[postaction={decorate}, very thick, rotate around={-120:(0,0)}] (0,0)--(-3,0) node[near end, left]{$\Sigma_2$};

\draw(3,0) node{$S_0$}; \draw(2,2) node{$S_0$};  \draw(2,-2) node{$S_0$}; 
\draw(-1.5,2.5) node{$S_1$}; \draw[rotate around={120:(0,0)}] (2,2) node{$S_1$}; 
\draw[rotate around={120:(0,0)}] (2,-2) node{$S_1$};
\draw(-1.5,-2.5) node{$S_2$}; \draw[rotate around={-120:(0,0)}] (2,2) node{$S_2$}; 
\draw[rotate around={-120:(0,0)}] (2,-2) node{$S_2$};

\draw[postaction={decorate},very thick,rotate around={-90:(1.5,0)}] (1.5,0) parabola (3.5,1.5);
\draw[postaction={decorate},very thick,rotate around={90:(1.5,0)}] (1.5,0) parabola (3.5,-1.5);

\draw[postaction={decorate},very thick,rotate around={30:(-0.75,1.3)}] (-0.75,1.3) parabola (1.25,2.8);
\draw[postaction={decorate},very thick,rotate around={210:(-0.75,1.3)}] (-0.75,1.3) parabola (1.25,-0.2);
\draw[postaction={decorate},very thick,rotate around={-210:(-0.75,-1.3)}] (-0.75,-1.3) parabola (1.25,0.2);
\draw[postaction={decorate},very thick,rotate around={-30:(-0.75,-1.3)}] (-0.75,-1.3) parabola (1.25,-2.8);

\draw(2.5,1.6) node[right]{$C_0^+$}; 
\draw(2.5,-1.6) node[right]{$C_0^-$};

\draw[rotate around={120:(0,0)}] (2.5,1.6) node[above]{$C_1^+$}; 
\draw[rotate around={120:(0,0)}] (2.5,-1.6) node[left]{$C_1^-$};

\draw[rotate around={-120:(0,0)}] (2.5,1.6) node[left]{$C_2^+$}; 
\draw[rotate around={-120:(0,0)}] (2.5,-1.6) node[below]{$C_2^-$};


\filldraw(1,0) circle (1pt);
\filldraw[rotate around={120:(0,0)}](1,0) circle (1pt);
\filldraw[rotate around={-120:(0,0)}](1,0) circle (1pt);
\end{tikzpicture}
\end{center}
\caption{Contour $\Gamma_X$ for the RH problem for $X$ and the sectors $S_0$, $S_1$ and $S_2$.}
\label{fig:contourGammaX}
\end{figure}

Since the formula \eqref{defX1} is different in the three sectors, we will have that $X$
is discontinuous on the boundary of the sectors, which are the rays $\arg z = \pm \pi/3, \pi$
that form the contour $\Sigma_2$ given in \eqref{Sigma2}.
Thus $X$ is defined and analytic in $\mathbb C \setminus \Gamma_X$ where 
$	\Gamma_X = \Gamma \cup  \Sigma_2$,
see Figure \ref{fig:contourGammaX}, with a jump $X_+ = X_- J_X$ on $\Gamma_X$.	
The jump matrix on the intervals $(0, \omega^j \widehat{x}]$  simplifies to
\[ J_X(x) = \begin{pmatrix} 1 & e^{\frac{nt_3}{3t_0} x^3} & 0 \\ 0 & 1 & 0 \\ 0 & 0 & 1 \end{pmatrix}, 
	\qquad x \in \bigcup_j [0, \omega^j \widehat{x}], \]
as a consequence of the definitions \eqref{defX1}--\eqref{defX2} and the Wronskian relation \eqref{Wy0y3}.

The jump matrices on the new contours $\Sigma_2$ are piecewise constant, since they contain Wronskians $W(y_i,y_j)$
of solutions of the Airy equation. Indeed, we have by \eqref{defX1} and \eqref{Wy0y3}--\eqref{Wy0y1},
\begin{align*} 
	J_{\widetilde{X}}(z) & = \begin{pmatrix} 1 & 0 & 0 \\ 0 & y_0(c_n z) & y_0'(c_n z) \\ 0 & y_3(c_n z) & y_3'(c_n z) \end{pmatrix}
	  \begin{pmatrix} 1 & 0 & 0 \\ 0 & y_5'(c_n z) & - y_2'(c_n z) \\ 0 & - y_5(c_n z) & y_2(c_n z) \end{pmatrix} \\
	  & = \begin{pmatrix} 1 & 0 & 0 \\ 0 & W(y_0,y_5) & W(y_2,y_0) \\ 0 & W(y_3,y_5) & W(y_2, y_3) \end{pmatrix} \\
	  & = \begin{pmatrix} 1 & 0 & 0 \\ 0 & \omega^2 & -\frac{1}{2\pi i} \\ 0 & 0 & \omega \end{pmatrix},
	  	\quad \text{for } \arg z = \pi/3. 
	  	\end{align*} 
and then by \eqref{defX2}
\[ J_X(z) = \begin{pmatrix} 1 & 0 & 0  \\ 0 &  \omega^2 & 1 \\ 0 & 0 & \omega \end{pmatrix},
	  	\quad \text{for } \arg z = \pi/3.  \]
The jump matrix $J_X$ turns out to be exactly the same for $\arg z = -\pi/3$ and $\arg z = \pi$. 

On the remaining parts of $\Gamma_X$ (the unbounded contours $C_j$), the jump matrix takes
the form
\[ J_X(z) = \begin{pmatrix} 1 & \alpha e^{\frac{n t_3}{3t_0} z^3} & \beta e^{\frac{n t_3}{3t_0} z^3} \\ 0 & 1 & 0 \\ 0 & 0 & 1 \end{pmatrix}
\]
for certain constants $\alpha$ and $\beta$, that again come from  Wronskians of solutions of
the Airy equation. Let 
\[ y_6(z) = \frac{1}{3}(y_2(z) - y_1(z)), \quad y_7(z) = \omega y_6(\omega z), \quad y_8(z) = \omega^2 y_6(\omega^2 z). \]
Then the constants turn out to be 
\begin{align*} &
	\begin{cases}
	\alpha = W(-y_8,y_3) = \tfrac{1}{2} + \tfrac{1}{6} i \sqrt{3}, \\
	 \beta = 2 \pi i W(-y_8,y_0) = - \tfrac{1}{3}, 
	 \end{cases}  && \text{ on } C_0^+, \\ 
	 & 	 \begin{cases}
	\alpha = W(y_7, y_3) = \tfrac{1}{2} - \tfrac{1}{6} i \sqrt{3}, \\
	\beta  = 2\pi i W(y_7, y_0) = \tfrac{1}{3}, 
	\end{cases} && \text{ on } C_0^-,
	\end{align*}
with the same expressions on the other unbounded parts.

Thus we obtain the following RH problem for $X$.
\begin{rhproblem} \label{RHforX} 
The matrix-valued function $X$ defined by \eqref{defX1}--\eqref{defX2}
is the solution of the following RH problem.
\begin{itemize}
\item $X$ is analytic in $\mathbb C \setminus \Gamma_X$,
\item $X_+ = X_- J_X$ on $\Gamma_X$ with
\begin{equation} \label{JX} 
	J_X(z)  = \begin{cases} \begin{pmatrix} 1 & e^{\frac{n t_3}{3t_0} z^3} & 0 \\ 0 & 1 & 0 \\ 0 & 0 & 1 \end{pmatrix},
 	 & \quad \text{for } z \in \bigcup_j (0, \omega_j \widehat{x}], \\
		\begin{pmatrix} 1 & 0 & 0 \\ 0 & \omega^2 & 1 \\ 0 & 0 & \omega \end{pmatrix},
	  &	\quad \text{for } z \in \Sigma_2, \\
	  \begin{pmatrix} 1 & (\tfrac{1}{2} \pm \tfrac{1}{6} i \sqrt{3}) e^{\frac{n t_3}{3t_0} z^3} &
	  	  \mp \tfrac{1}{3} e^{\frac{n t_3}{3t_0} z^3} \\ 0 & 1 & 0 \\ 0 & 0 & 1
	  	  \end{pmatrix},
	  & \quad \text{for } z \in \bigcup_j C_j^{\pm}.
	  \end{cases} 
	  \end{equation}
\item As $z \to \infty$ we have
\begin{multline} \label{Xasymp}
	 X(z)  =  (I + O(z^{-1})) A(z)  \\
	 \times  \begin{cases}
	 	\begin{pmatrix} z^n & 0 & 0 \\ 0 & z^{-n/2} e^{\frac{2n}{3t_0 \sqrt{t_3}} z^{3/2}} & 0 \\
		 0 & 0 & z^{-n/2} e^{-\frac{2n}{3t_0\sqrt{t_3}} z^{3/2}} \end{pmatrix}, & z \in S_0, \\ 
	 	\begin{pmatrix} z^n & 0 & 0 \\ 0 & z^{-n/2} e^{-\frac{2n}{3t_0 \sqrt{t_3}} z^{3/2}} & 0 \\
		 0 & 0 & z^{-n/2} e^{\frac{2n}{3t_0\sqrt{t_3}} z^{3/2}} \end{pmatrix}, & z \in S_1 \cup S_2,
		 \end{cases}
		 \end{multline}
with
\begin{equation} \label{defAz}
	A(z) =  \begin{pmatrix} 1 & 0 & 0 \\ 0 & z^{1/4} & 0 \\ 0 & 0 & z^{-1/4} \end{pmatrix}
		\times \begin{cases}
	  \begin{pmatrix} 1 & 0 & 0 \\ 0 & \frac{1}{\sqrt{2}} & -\frac{i}{\sqrt{2}} \\ 0 & -\frac{i}{\sqrt{2}} & \frac{1}{\sqrt{2}}
	  \end{pmatrix}, & \text{ for } z  \in S_0, \\	  
	  \begin{pmatrix} 1 & 0 & 0 \\ 0 & \frac{i}{\sqrt{2}} & \frac{1}{\sqrt{2}} \\ 0 & -\frac{1}{\sqrt{2}} & -\frac{i}{\sqrt{2}}
	  \end{pmatrix}, & \text{ for } z \in S_1, \\
	  \begin{pmatrix} 1 & 0 & 0 \\ 0 & -\frac{i}{\sqrt{2}} & -\frac{1}{\sqrt{2}} \\ 0 & \frac{1}{\sqrt{2}} & \frac{i}{\sqrt{2}}
	  \end{pmatrix}, & \text{ for } z \in S_2.
	  \end{cases}
\end{equation}
\end{itemize}
\end{rhproblem}

\begin{remark} \label{remark1}
The asymptotic condition \eqref{Xasymp} for $X$ comes from the asymptotic condition in
the RH problem \ref{RHforY2} for $Y$ combined with 
the asymptotic behavior \eqref{Airyasymptotics} of the Airy functions. 
The condition \eqref{Xasymp}, however, is not valid uniformly as $z \to \infty$ for $z$ close
to $\Sigma_2$. More preceisely, \eqref{Xasymp} holds uniformly as $z \to \infty$ with 
\[ \arg z \in (-\pi + \varepsilon, - \pi/3 - \varepsilon) \cup (-\pi/3 + \varepsilon, \pi/3 - \varepsilon)
	\cup (\pi/3 + \varepsilon, \pi - \varepsilon) \]
for some $\varepsilon > 0$, but not for $\varepsilon =0$. This is due to the fact that
the functions $y_3$, $y_5$ and $y_4$ appearing in the second column of the transformation \eqref{defX1}
(which are dominant solutions of the Airy equation in the respective sectors used in \eqref{defX1})
are not the recessive solutions in the neighboring sectors.

This non-uniformity is a minor issue that will be resolved more or less automatically during
the transformations of the steepest descent analysis, see also \cite{DuiKui, DuKuMo}. To fully
justify the analysis it suffices to supplement the RH problem for $X$ with the following 
condition that controls the behavior as $z \to \infty$ near $\Sigma_2$.
\begin{itemize}
\item As $z \to \infty$ we have
\begin{multline} \label{Xasymp2}
	X(z) = O(1) A(z)  \\
	 \times  \begin{cases}
	 	\begin{pmatrix} z^n & 0 & 0 \\ 0 & z^{-n/2} e^{\frac{2n}{3t_0 \sqrt{t_3}} z^{3/2}} & 0 \\
		 0 & 0 & z^{-n/2} e^{-\frac{2n}{3t_0\sqrt{t_3}} z^{3/2}} \end{pmatrix}, & z \in S_0, \\ 
	 	\begin{pmatrix} z^n & 0 & 0 \\ 0 & z^{-n/2} e^{-\frac{2n}{3t_0 \sqrt{t_3}} z^{3/2}} & 0 \\
		 0 & 0 & z^{-n/2} e^{\frac{2n}{3t_0\sqrt{t_3}} z^{3/2}} \end{pmatrix}, & z \in S_1 \cup S_2,
		 \end{cases}
		 \end{multline}
with $O(1)$ term that is uniform in $z$.
\end{itemize}
\end{remark}

\begin{remark} \label{remark2}
If we put
\[ \Phi(z) = X(z) \begin{pmatrix} e^{\frac{nt_3}{3t_0} z^3} & 0 & 0 \\ 0 & 1 & 0 \\ 0 & 0 & 1 \end{pmatrix} \]
then it is easy to verify that $\Phi$ satisfies a RH problem with piecewise constant jumps. 
Hence $\Phi$ satisfies a differential equation
\[ \frac{d}{dz} \Phi(z) = C(z) \Phi(z) \]
with a polynomial coefficient matrix $C(z)$. In addition there are also differential
equations with respect to $t_0$ and $t_3$, as well as a difference equation in $n$. 
We will not elaborate on this here. Instead we are going to make a different transformation.
\end{remark}

\subsection{Second transformation} \label{subsecsecond}
In the next transformation  the factors $e^{\pm \frac{2n}{3 t_0 \sqrt{t_3}} z^{3/2}}$ are removed
from the asymptotic condition \eqref{Xasymp} in the RH problem for $X$. The transformation also has the effect of 
eliminating the $(1,3)$ entry $\mp \frac{1}{3} e^{\frac{nt_0}{3t_3} z^3}$ from the jump matrix \eqref{JX} 
on the unbounded contours $C_j$.

\begin{definition}
We define
\begin{equation} \label{defV1} 
	\widetilde{V}(z) = X(z) \begin{pmatrix} 1 & 0 & \frac{1}{3} e^{\frac{n t_3}{3t_0} z^3} \\ 0 & 1 & 0 \\ 0 & 0 & 1 \end{pmatrix} 
	\end{equation}
for $z$ in the domain bounded by $C_0 \cup C_1 \cup C_2$ (this is the 
unbounded domain containing $\Sigma_2$ in its interior, see Figure \ref{fig:contourGammaX}),
\begin{equation} \label{defV2} 
	\widetilde{V}(z) = X(z) \qquad \text{elsewhere}, 
	\end{equation}
and
\begin{equation} \label{defV3} 
	V(z) = \widetilde{V}(z) \times
	 \begin{cases} \begin{pmatrix} 1 & 0 & 0 \\ 0 & e^{- \frac{2n}{3t_0 \sqrt{t_3}} z^{3/2}} & 
  0 \\ 0 & 0 & e^{\frac{2}{3} \frac{n}{t_0 \sqrt{t_3}} z^{3/2}}  \end{pmatrix} & \text{ in } S_0, \\
     \begin{pmatrix} 1 & 0 & 0 \\ 0 & e^{\frac{2n}{3t_0 \sqrt{t_3}} z^{3/2}} & 
  0 \\ 0 & 0 & e^{-\frac{2}{3} \frac{n}{t_0 \sqrt{t_3}} z^{3/2}}  \end{pmatrix} & \text{ in } S_1 \cup S_2.
  \end{cases}
\end{equation}
\end{definition}

By straightforward calculations based on the RH problem \ref{RHforX} for $X$ and
the definitions \eqref{defV1}--\eqref{defV3} we find that $V$ is the solution of  the following RH problem.

\begin{rhproblem} \label{RHforV}
The matrix valued function $V : \mathbb C \setminus \Gamma_V \to \mathbb C^{3\times 3}$ satisfies
\begin{itemize}
\item $V$ is analytic in $\mathbb C \setminus \Gamma_V$ where $\Gamma_V = \Gamma_X$.
\item $V_+ = V_- J_V$ on $\Gamma_V$ with
\begin{equation} \label{JV1} 
	J_V(z)  = \begin{cases} \begin{pmatrix} 1 & e^{-\frac{n}{t_0} (\frac{2}{3 \sqrt{t_3}} |z|^{3/2} - \frac{t_3}{3} z^3) } & 0 \\ 0 & 1 & 0 \\ 0 & 0 & 1 \end{pmatrix}, \quad
 	   \hfill{z \in \bigcup_j (0, \omega^j \widehat{x}],} \\
		\begin{pmatrix} 1 & 0 &  (\frac{1}{2} - \frac{1}{6} i \sqrt{3}) e^{-\frac{n}{t_0} (\frac{2}{3 \sqrt{t_3}} z^{3/2} - \frac{t_3}{3} z^3) } \\ 0 & \omega^2 e^{ \frac{4n}{3t_0 \sqrt{t_3}} z^{3/2}}  & 1 \\ 
		0 & 0 & \omega e^{-\frac{4n}{3t_0 \sqrt{t_3}} z^{3/2}} \end{pmatrix}, \\
		\hfill{\arg z  = \pi/3,} \\ 
		\begin{pmatrix} 1 & 0 & (\frac{1}{2} - \frac{1}{6} i \sqrt{3})  e^{\frac{n}{t_0} (\frac{2}{3 \sqrt{t_3}} z^{3/2} + \frac{t_3}{3} z^3) } \\ 0 & \omega^2 e^{- \frac{4n}{3t_0 \sqrt{t_3}} z^{3/2}}  & 1 \\ 
		0 & 0 & \omega e^{\frac{4n}{3 t_0 \sqrt{t_3}} z^{3/2}} \end{pmatrix}, \\
	   \hfill{\arg z = - \pi/3 \text{ or } \arg z = \pi,} 
	 
	  \end{cases}
\end{equation}
and
\begin{equation} \label{JV2}
	J_V(z) = \begin{cases}	  \begin{pmatrix} 1 & (\tfrac{1}{2} \pm \tfrac{1}{6} i \sqrt{3}) e^{-\frac{n}{t_0} (\frac{2}{3 \sqrt{t_3}} z^{3/2} - \frac{t_3}{3} z^3) } 
	  & 0 \\ 0 & 1 & 0 \\ 0 & 0 & 1 \end{pmatrix},
	  \hfill{z \in C_0^{\pm},}  \\
	 \begin{pmatrix} 1 & (\tfrac{1}{2} \pm \tfrac{1}{6} i \sqrt{3}) e^{\frac{n}{t_0} (\frac{2}{3 \sqrt{t_3}} z^{3/2} + \frac{t_3}{3} z^3) } 
	  & 0 \\ 0 & 1 & 0 \\ 0 & 0 & 1 \end{pmatrix},
	  \hfill{z \in C_1^{\pm} \cup C_2^{\pm}.}
	  \end{cases} \end{equation}
\item $ V(z) = (I + O(1/z)) A(z) \begin{pmatrix} z^n & 0 & 0 \\ 0 & z^{-n/2} & 
	0 \\ 0 & 0 & z^{-n/2}   \end{pmatrix}$ as $z \to \infty$, 
	
	where $A(z)$ is given by \eqref{defAz}.
\end{itemize}	 
\end{rhproblem}
In \eqref{JV1} we used the convention that $\arg z^{3/2} = \frac{3}{2} \arg z $, which means that
$z^{3/2} = -i |z|^{3/2}$ for $\arg z  =\pi$. 

The entries $e^{\pm \frac{4 n}{3t_0 \sqrt{t_3}} z^{3/2}}$  in the jump matrix \eqref{JV1} on $\Sigma_2$ 
are oscillatory. Later they will be turned into exponentially decaying entries by opening up unbounded 
lenses around $\Sigma_2$. This indeed leads to exponentially decaying entries because
of the upper triangularity of the jump matrices.

\subsection{Third transformation} \label{subsecthird}
In the third transformation we make use of the minimizer $(\mu_1^*, \mu_2^*)$
of the vector equilibrium problem.
Associated with the measures $\mu_1^*$ and $\mu_2^*$ are the $g$-functions
\begin{equation} \label{gfunctions} 
	g_k(z) = \int \log(z-s) d\mu_k^*(s), \qquad k=1,2, 
	\end{equation}
with appropriately chosen branches of the logarithm. We choose the
branches in such a way that $g_1$ is defined and analytic in $\mathbb C \setminus (\Sigma_1 \cup \mathbb R^-)$
with $g_1(x)$ real for $x$ real and $x > x^*$. Then we have the symmetry
\begin{equation} \label{g1symmetry} 
	\begin{cases}
	g_1(\omega z)  = g_1(z) + 2\pi i/3, \\
	 g_1(\omega^2 z)  = g_1(z) - 2 \pi i/3,
	 \end{cases} \qquad z \in S_0. 
	\end{equation}
The branches of the logarithm in the definition \eqref{gfunctions} of $g_2$ are chosen such that
$g_2$ is defined and analytic in $ \mathbb C \setminus \Sigma_2 = S_0 \cup S_1 \cup S_2$, and $g_2(x)$ is real for real $x > 0$
with the symmetry
\begin{equation} \label{g2symmetry} 
	\begin{cases}
	g_2(\omega z)  = g_2(z) + \pi i/3, \\
	g_2(\omega^2 z) = g_2(z) - \pi i/3,
	\end{cases} \qquad z \in S_0. 
	\end{equation}

The conditions \eqref{EL2}--\eqref{EL1} lead to the following properties of the $g$-functions.
We emphasize that all contours are oriented away from $0$ and towards $\infty$, so that
for $\arg z = \pi$, we have that $g_{2,+}(z)$ is the limiting value from the lower half-plane.

\begin{lemma}
\begin{enumerate}
\item[\rm (a)]
We have, with the constant $\ell$ from \eqref{EL1},
\begin{multline} \label{EL3}
	g_{1,+}(z) + g_{1,-}(z) - g_2(z)  \\ 
		= \begin{cases} \frac{2}{3 t_0 \sqrt{t_3}} z^{3/2} - \frac{t_3}{3t_0} z^3 + \ell, & z \in [0, x^*], \\
	   -\frac{2}{3 t_0 \sqrt{t_3}} z^{3/2} - \frac{t_3}{3t_0} z^3 + \ell + \pi i,  & z \in [0, \omega x^*], \\
	    -\frac{2}{3 t_0 \sqrt{t_3}} z^{3/2} - \frac{t_3}{3t_0} z^3 + \ell - \pi i, & z \in [0, \omega^2 x^*].
	  \end{cases}
\end{multline}
\item[\rm (b)] We have 
\begin{equation} \label{EL4}
\begin{aligned} 
	g_{2,+}(z) + g_{2,-}(z) - g_1(z) & = 0, && \arg z = \pm \pi/3, \\
	g_{2,+}(z) + g_{2,-}(z) - g_{1,\pm}(z) & = \pm \pi i, && \arg z = \pi.
	\end{aligned}
	\end{equation}
\end{enumerate}
\end{lemma}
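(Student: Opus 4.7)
The plan is to prove \eqref{EL3} and \eqref{EL4} on each connected component of $\Sigma_1$ (respectively $\Sigma_2$) by combining three ingredients: differentiation along the contour together with the $\xi$-function identities \eqref{ELforF}; identification of the constant of integration from either \eqref{EL1}--\eqref{EL2} at a reference point or from asymptotics at infinity; and the rotational symmetries \eqref{g1symmetry}--\eqref{g2symmetry} to transfer the result from one component to its $\omega$-rotates.

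For part (a), I would first treat the base segment $[0, x^*]$. Differentiating the left-hand side of \eqref{EL3} along the interval yields $F_{1,+}(x) + F_{1,-}(x) - F_2(x)$, which by the $j=0$ case of the first identity in \eqref{ELforF} equals $\tfrac{1}{t_0}\bigl(\tfrac{1}{\sqrt{t_3}} x^{1/2} - t_3 x^2\bigr)$, exactly the derivative of the right-hand side. Hence LHS $-$ RHS is constant on $(0, x^*)$. The constant is fixed by evaluating at $x = x^*$: Lemma \ref{lemma41} says the density of $\mu_1^*$ vanishes at the endpoint, so $g_{1,+}(x^*) = g_{1,-}(x^*) = g_1(x^*) \in \mathbb R$ by continuity from $x > x^*$, and $g_2(x^*) \in \mathbb R$ by the prescribed branch; the Euler--Lagrange equality \eqref{EL1} at the support point $x^*$ then forces the constant to be zero. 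The two rotated segments are handled via the symmetries. For $z = \omega w$ with $w \in (0, x^*)$, the map $\zeta \mapsto \omega \zeta$ preserves orientation away from $0$ and matches $\pm$-sides, so $g_{1,\pm}(\omega w) = g_{1,\pm}(w) + 2\pi i/3$ and $g_2(\omega w) = g_2(w) + \pi i/3$; the combination gains $\tfrac{4\pi i}{3} - \tfrac{\pi i}{3} = \pi i$. Under the principal-argument convention $\arg z^{3/2} = \tfrac{3}{2} \arg z$, on the rotated ray one has $w^{3/2} = -z^{3/2}$ and $z^3 = w^3$, producing the $-\tfrac{2}{3 t_0 \sqrt{t_3}} z^{3/2}$ term and extra $+\pi i$ in \eqref{EL3}. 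The segment $[0, \omega^2 x^*]$ is handled identically with the opposite sign of the symmetry shifts, yielding the $-\pi i$ version.

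For part (b), the same differentiation on each ray of $\Sigma_2$ shows, via the second identity in \eqref{ELforF}, that the complex derivative of $g_{2,+}(z) + g_{2,-}(z) - g_1(z)$ vanishes, so the left-hand side of \eqref{EL4} is a complex constant on each ray. These constants I would identify from asymptotics at infinity. The symmetry \eqref{g2symmetry} together with $g_2(x) \in \mathbb R$ for $x > 0$ gives $g_2(z) = \tfrac{1}{2} \log z + O(1/z)$ with the principal branch throughout $\mathbb C \setminus \Sigma_2$, while $\int d\mu_1^* = 1$ gives $g_1(z) = \log z + O(1/z)$ with the principal branch on $\mathbb C \setminus (\Sigma_1 \cup \mathbb R^-)$. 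On $\arg z = \pm \pi/3$ the principal $\log z$ is continuous across the ray, so $g_{2,+} + g_{2,-} - g_1 \to \log z - \log z = 0$. On $\mathbb R^-$ the two limits of $g_1$ produce $\log t \pm i\pi$ while $g_{2,+}(-t) + g_{2,-}(-t) \to \log t$, so LHS $\to \mp i\pi$, matching $\pm \pi i$ in \eqref{EL4}.

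The main obstacle is the careful bookkeeping of orientations, $+/-$-labels and principal-branch conventions---in particular verifying the correspondence between ``approach from sector $S_j$'' and the $+/-$ convention induced by the ``oriented away from $0$'' rule, and checking that the jump of $g_1$ across $\mathbb R^-$ is exactly $2\pi i$, consistent with the $\log z$ asymptotics at infinity. Once those conventions are pinned down, the argument collapses to the clean pattern ``vanishing derivative plus one endpoint or asymptotic evaluation,'' applied separately on each piece of $\Sigma_1 \cup \Sigma_2$.
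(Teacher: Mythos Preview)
Your proposal is correct and in part (a) closely parallels the paper's proof: both establish the identity on $[0,x^*]$ from the variational condition \eqref{EL1} and then transport it to the rotated segments via \eqref{g1symmetry}--\eqref{g2symmetry}. The only difference is in how the additive constant is pinned down on $[0,x^*]$: the paper observes directly that both sides of \eqref{EL3} are real there (using implicitly that $g_{1,+}(x)=\overline{g_{1,-}(x)}$ by conjugation symmetry of $\mu_1^*$) so that the equality of real parts coming from \eqref{EL1} is already the full equality, whereas you instead match derivatives via \eqref{ELforF} and then evaluate at the endpoint $x^*$. Both are equally valid.

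For part (b) your route genuinely differs. The paper again argues that the real parts agree by \eqref{EL2} and then identifies the constant on each ray in the same spirit as in part (a). You instead show the derivative vanishes and determine the constant from the large-$|z|$ asymptotics $g_1(z)=\log z+O(z^{-1})$, $g_2(z)=\tfrac12\log z+O(z^{-3/2})$ with the principal branch. This is a clean alternative: it avoids any appeal to reality/reflection arguments on the rays $\arg z=\pm\pi/3$, at the price of the careful $\pm$-side and branch bookkeeping you already flag (in particular, on $\mathbb R^-$ with the ``away from $0$'' orientation the $+$-side is the lower half-plane, which is what makes the signs in \eqref{EL4} come out as stated). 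Either argument suffices.
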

\begin{proof}
(a) The equality of the real parts follows from \eqref{EL1}, since $\supp(\mu_1^*) = \Sigma_1 = \bigcup_j [0, \omega^j x^*]$.
Both sides of \eqref{EL3} are real for $z \in [0, x^*]$ and so the identity \eqref{EL3} holds on $[0, x^*]$.
The identity on the other intervals, then follows from the symmetry \eqref{g1symmetry}. 

(b) The identities in \eqref{EL4} follow in a similar way from \eqref{EL2}.
\end{proof}

With the $g$-functions and the constant $\ell$ we make the next transformation.

\begin{definition}
We define
\begin{equation} \label{defU} 
	U(z) = \begin{pmatrix} e^{-n \ell} & 0 & 0 \\ 0 & 1 & 0 \\ 0 & 0 & 1 \end{pmatrix}
	V(z) \begin{pmatrix} e^{-n (g_1(z)-\ell)} & 0 & 0 \\ 0 & e^{n(g_1(z) - g_2(z))} & 0 \\ 0 & 0 & e^{n g_2(z)} \end{pmatrix}.
	\end{equation}
\end{definition}

It is our next task to state the RH problem for $U$. 
It will be convenient to write the jump matrices in terms of
the two functions $\varphi_1$ and $\varphi_2$ defined as follows. 
Recall that $\xi_1$, $\xi_2$ and $\xi_3$ are defined in \eqref{xi1}--\eqref{xi3}.

\begin{definition}
We define $\varphi_1 : \mathbb C \setminus (\Sigma_1 \cup \Sigma_2) \to \mathbb C$  by
\begin{equation} \label{phi1}
	\varphi_1(z) = \frac{1}{2 t_0} \int_{\omega^j x^*}^z (\xi_1(s) - \xi_2(s)) ds, 
	\quad z \in S_j \setminus [0, \omega^j x^*], \quad j=0,1,2, 
	\end{equation}
with integration along a path lying in $S_j \setminus [0,\omega^j x^*]$ if $z \in S_j \setminus [0,\omega^j x^*]$ 
for $j=0,1,2$, and
$\varphi_2 : \mathbb C \setminus \{ z \in \mathbb C \mid z^3 \in \mathbb R \} \to \mathbb C$ by
\begin{equation} \label{phi2} 
	\varphi_2(z) = \frac{1}{2 t_0} \int_0^z (\xi_2(s) - \xi_3(s)) ds 
	\begin{cases} - \pi i/6, &  0 < \arg z < \pi/3,  \\
		 	 + \pi i/6, & \pi/3 < \arg z < 2\pi/3,  \\
			+ \pi i/3, & 2\pi/3 < \arg z < \pi,  \\  
			 +\pi i/6, &  -\pi/3 <  \arg z < 0,  \\
		 	 - \pi i/6, & -2 \pi/3 < \arg z < - \pi/3,  \\
			 - \pi i/3, & -\pi < \arg z < -2\pi/3,  
			 \end{cases} \end{equation}
	with integration along a path in the sector $k \pi/3 < \arg s < (k+1) \pi/3$
	if $z$ lies in that sector, for $k=-3, \ldots, 2$. 
\end{definition}

The basic properties of $\varphi_1$ and $\varphi_2$ that connect them with the $g$-functions 
are collected in the next lemma.

\begin{lemma}
\begin{enumerate}
\item[\rm (a)]
For $z \in \Sigma_1$, we have 
\begin{equation} \label{phi1onSigma1} 
		g_{1,+}(z) - g_{1,-}(z) = \pm 2 \varphi_{1,\pm} (z).
		\end{equation}
\item[\rm (b)] For $z \in \mathbb C \setminus (\Sigma_1 \cup \Sigma_2)$ we have
\begin{align} \label{phi1inS0S1S2}
	2 g_1(z) - g_2(z) - \ell 
		=	2 \varphi_1(z) \pm \frac{2}{3t_0 \sqrt{t_3}} z^{3/2} - \frac{t_3}{3t_0} z^3
		+\begin{cases} 0 & \text{in } S_0, \\ \pi i & \text{in } S_1, \\ -\pi i & \text{in } S_2, \end{cases}
		\end{align}
		with $+$ in $S_0$ and $-$ in $S_1 \cup S_2$.
\item[\rm (c)]
For $z \in \Sigma_2$, we have 
\begin{equation} \label{phi2onSigma2} 
	\begin{aligned}
		g_{2,+}(z) - g_{2,-}(z) & =  2 \varphi_{2,+} (z) \pm \frac{4}{3t_0 \sqrt{t_3}} z^{3/2} \\
		&		 = - 2 \varphi_{2,-} (z) \pm \frac{4}{3t_0 \sqrt{t_3}} z^{3/2},
		\end{aligned}
		\end{equation}
		where we use $+$ for $\arg z = \pi/3$ and $-$ for $\arg z = -\pi/3$ or $\arg z = \pi$.
\item[\rm (d)]
For $z \in \Sigma_2$, we have 
\begin{multline} \label{phi1onSigma2} 
		g_{1,-}(z) + g_{2,+}(z)  - \ell \\
		= \begin{cases} 2 \varphi_{1,-}(z) + \frac{2}{3t_0 \sqrt{t_3}} z^{3/2} - \frac{t_3}{3t_0} z^3, & \arg z = \pi/3, \\
		2 \varphi_{1,-}(z) - \frac{2}{3t_0 \sqrt{t_3}} z^{3/2} - \frac{t_3}{3t_0} z^3, & \arg z = \pi, \\
		2 \varphi_{1,-}(z) - \frac{2}{3t_0 \sqrt{t_3}} z^{3/2} - \frac{t_3}{3t_0} z^3 - \pi i, & \arg z = -\pi/3.
		\end{cases}
		\end{multline}
\item[\rm (e)] We have for $z \in \Sigma_1$,	  	
\begin{align}	 \label{phi1andphi2onSigma1}
	  2\varphi_{1,+}(z) + 2\varphi_{2,+}(z)  =  2 \varphi_{2,-}(z) 
	  	+ \begin{cases} 0,
	  	 & \text{for } z \in [0, x^*], \\
	  	 \pi i, & \text{for } z \in [0, \omega x^*], \\
	  	- \pi i, & \text{for } z \in [0, \omega^2 x^*].
	  	\end{cases}
	  	\end{align}
and for $z \in \Sigma_2$,
\begin{align} \label{phi1andphi2onSigma2}
	  2\varphi_{1,+}(z) + 2\varphi_{2,+}(z)  =
	  	2\varphi_{1,-}(z) -
	  	\begin{cases}
	  	\pi i, & \text{for } \arg z = \pm \pi /3, \\
	  	 2\pi i, & \text{for } \arg z = \pi, 
	  	\end{cases} 
	  	\end{align}
\end{enumerate}
\end{lemma}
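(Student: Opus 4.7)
The plan is to differentiate each claimed identity, use $g_k' = F_k$ together with the sector-by-sector expressions \eqref{xi1}--\eqref{xi3} of $\xi_1, \xi_2, \xi_3$ in terms of the Cauchy transforms $F_1, F_2$, and then integrate back. The integration constants are pinned down either by evaluating at an endpoint $\omega^j x^*$ (where $\varphi_1$ vanishes by construction and the density of $\mu_1^*$ vanishes, forcing $g_{1,\pm}$ to agree) or by inspecting the boundary values at $z=0$, combined in each case with the Euler--Lagrange identities \eqref{EL3}--\eqref{EL4}. The $2\pi/3$ rotational symmetries \eqref{g1symmetry}, \eqref{g2symmetry} (together with the analogous symmetry of $\varphi_1$ and the sector structure in \eqref{phi2}) reduce every computation to the sector $S_0$ or to the interval $[0, x^*]$.

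For part (a), differentiating gives $(g_{1,+} - g_{1,-})'(z) = t_0^{-1}(\xi_{1,+}(z) - \xi_{1,-}(z))$ by \eqref{xi1}, whereas the crosswise sheet gluing along $\Sigma_1$ yields $\xi_{2,\pm} = \xi_{1,\mp}$, so that $(2\varphi_{1,+})'(z) = t_0^{-1}(\xi_{1,+}(z) - \xi_{2,+}(z)) = t_0^{-1}(\xi_{1,+}(z) - \xi_{1,-}(z))$ by \eqref{phi1}. Since both antiderivatives vanish at $\omega^j x^*$ the $+$ identity follows, and the $-$ identity is then obtained from $\varphi_{1,+} = -\varphi_{1,-}$ on $\Sigma_1$. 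For (b), the relations \eqref{xi1}--\eqref{xi2} give $2F_1(z) - F_2(z) = t_0^{-1}(\xi_1 - \xi_2) \pm t_0^{-1} z^{1/2}/\sqrt{t_3} - t_3 z^2/t_0$, with $+$ in $S_0$ and $-$ in $S_1 \cup S_2$. Integrating from $\omega^j x^*$ along a path in $S_j$ and comparing with \eqref{phi1} produces \eqref{phi1inS0S1S2} modulo an additive constant; evaluating at $\omega^j x^*$ and applying \eqref{EL3} fixes this constant as $\ell$, $\ell + \pi i$, $\ell - \pi i$ in $S_0, S_1, S_2$ respectively.

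For (c), combining \eqref{xi2}--\eqref{xi3} yields $\xi_2 - \xi_3 = \pm 2z^{1/2}/\sqrt{t_3} + t_0(2F_2 - F_1)$ with sign depending on whether one is in the $S_0$ branch or the $S_1 \cup S_2$ branch. Integrating from $0$ and matching with \eqref{phi2}, the piecewise constants $\mp \pi i/6, \pm \pi i/3$ in \eqref{phi2} are precisely what is needed to absorb the logarithmic branches of $g_1, g_2$ at the origin and the $\pm \pi i$ jumps produced by the conventions on $z^{3/2}$ along $\arg z = \pi$ and by $g_{1,\pm}$ across $\mathbb R^-$. Part (d) is then immediate: on the portions of $\Sigma_2$ where $g_1$ is continuous, use \eqref{EL4} to substitute $g_{2,-} = g_1 - g_{2,+}$ in (b) applied on the minus side, and on $\arg z = \pi$ use $g_{2,+} + g_{2,-} = g_{1,\pm} \pm \pi i$ together with the $S_2$ version of (b) and the convention $z^{3/2}|_{\arg z = \pi} = -i|z|^{3/2}$. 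For (e), one combines the jump of $\varphi_1$ across $\Sigma_1$ provided by (a) with the jump of $\varphi_2$ across $\Sigma_1$, which arises because $F_1$ (and hence $\xi_2$ via \eqref{xi2}) jumps across $\Sigma_1$; the second formula \eqref{phi1andphi2onSigma2} is obtained by an analogous combination on $\Sigma_2$, exploiting (c) together with the fact that $\varphi_1$ is defined with different base points $\omega^j x^*$ in different sectors.

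The main obstacle is the meticulous book-keeping of branches of $z^{1/2}$ and $z^{3/2}$ (whose principal branches carry a cut along $\mathbb R^- \subset \Sigma_2$), of the orientation conventions on $\Sigma_1$ and $\Sigma_2$ (all parts oriented away from $0$), of the piecewise additive constants in \eqref{phi2}, and of the $\pm \pi i$ contributions coming from the logarithm branches in the definitions \eqref{gfunctions} of $g_1$ and $g_2$. These integer multiples of $\pi i$ are exactly what produces the discrete additive terms on the right-hand sides of (b)--(e), and keeping them consistent simultaneously across the six $\pi/3$-wide subsectors and the three cuts is where essentially all the effort in the proof lies.
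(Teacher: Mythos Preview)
Your proposal is correct and follows essentially the same route as the paper: differentiate both sides using $g_k'=F_k$ and the sectorwise formulas \eqref{xi1}--\eqref{xi3} for the $\xi_j$, verify the derivatives match, and then fix the integration constants either at the branch points $\omega^j x^*$ via \eqref{EL3} (parts (a)--(b)) or by letting $z\to 0$ (parts (c) and (e)), with (d) obtained from (b) and \eqref{EL4}. The only minor difference is one of emphasis: for (c) and (e) the paper explicitly tabulates the limits of $g_1,g_2,\varphi_1,\varphi_2$ at the origin (the values \eqref{g2at0}, \eqref{g1at0}, \eqref{phi1at0}, \eqref{phi2at0}) and then reads off the constants, whereas you absorb this into the sentence about the piecewise constants in \eqref{phi2} ``being precisely what is needed''; in a full write-up you would want to make that computation explicit, since that tabulation is exactly the book-keeping you flag as the main obstacle.
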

\begin{proof}
By  \eqref{Cauchytransforms} and \eqref{gfunctions} we have that $g_1' = F_1$ and $g_2' = F_2$, 
so that by \eqref{xi1}--\eqref{xi3} we can express the derivatives of the left-hand sides
of \eqref{phi1onSigma1}--\eqref{phi2onSigma2} in terms of the $\xi$-functions.
This yields
\begin{align*}
	g_{1,+}'(z) - g_{1,-}'(z) & = \frac{1}{t_0} (\xi_{1,+}(z) - \xi_{1,-}(z)), & z \in \Sigma_1, \\
	2g_1'(z) - g_2'(z)  & = \frac{1}{t_0} (\xi_1(z) - \xi_2(z)) \pm  \frac{2z^{1/2}}{t_0 \sqrt{t_3}} - \frac{t_3}{t_0} z^2, \\
	g_{2,+}'(z) - g_{2,-}'(z) & = \frac{1}{t_0} (\xi_{3,-}(z) - \xi_{3,+}(z)) \pm \frac{2z^{1/2}}{t_0 \sqrt{t_3}}, & z \in \Sigma_2, 
	\end{align*}  
with the same conventions on $\pm$ signs  as in parts (b) or (c) of the lemma.
In view of the definitions \eqref{phi1}--\eqref{phi2} we have
\begin{equation} \label{phijprime} 
	2 \varphi_j'(z) = \frac{1}{t_0} (\xi_j(z) - \xi_{j+1}(z)), \qquad j=1,2. 
	\end{equation}
Using also that  $\xi_{1,\pm} = \xi_{2,\mp}$ on $\Sigma_1$ and $\xi_{3,\pm} = \xi_{2,\mp}$ on $\Sigma_2$,
we then easily check that the derivatives of the two sides in \eqref{phi1onSigma1}--\eqref{phi2onSigma2}
agree on the respective contours or regions. It follows that \eqref{phi1onSigma1}--\eqref{phi2onSigma2} 
hold up to a possible constant of integration.

Both sides of \eqref{phi1onSigma1} vanish for $z = \omega^j x^*$, $j=0,1,2$, while 
the equality in \eqref{phi1inS0S1S2} for $z = \omega^j x^*$, $j=0,1,2$ follows
from the identities \eqref{EL3}. Hence the equalities in parts (a) and (b) follow. 

To establish the equality in part (c)  we examine the behavior as $z \to 0$.
Because of the symmetry \eqref{g2symmetry} we find that there exists a
constant $\gamma_2$ such that $g_2$ has the following limits a the origin
\begin{align} \label{g2at0}
  \lim_{z \to 0} g_2(z) = \begin{cases} \gamma_2, & z \in S_0, \\
  	\gamma_2 + \pi i/3, & z \in S_1, \\
  	\gamma_2 - \pi i/3, & z \in S_2.
  	\end{cases}
  	\end{align}
From \eqref{g2at0} we find the limits of $g_{2,+}(z) - g_{2,-}(z)$ as $z \to 0$ on 
the three half-rays in $\Sigma_2$. These values correspond to the limits of $\pm 2 \varphi_{2,\pm}(z)$
as $z \to 0$  that we obtain from the definition \eqref{phi2} and part (c) follows.

Using \eqref{EL4}  we find for $z \in \Sigma_2$ that
\begin{align} 
	g_{1,-}(z) +g_{2,+}(z) - \ell  
		\label{g1andg2}
		 = \begin{cases} 2g_1(z) - g_{2,-}(z) - \ell, & \arg z = \pm \pi/3, \\
		  2 g_{1,-}(z) - g_{2,-}(z)  - \ell - \pi i,  & \arg z = \pi.
		  \end{cases}
		  \end{align}
Then \eqref{phi1onSigma2} follows by letting $z$ in \eqref{phi1inS0S1S2} go to the minus-side of $\Sigma_2$,
and inserting this into \eqref{g1andg2}.

Because of \eqref{phijprime} we have
\[ 2(\varphi_1'(z) + \varphi_2'(z)) = \frac{1}{t_0} (\xi_1(z) - \xi_3(z)) \]
Using $\xi_{1,+} = \xi_{2,-}$ on $\Sigma_1$ and $\xi_{3,+} = \xi_{2,-}$ on $\Sigma_2$,
we then find that the derivatives of the two sides of 
\eqref{phi1andphi2onSigma1} and \eqref{phi1andphi2onSigma2} agree. To prove
the two equalities we again examine the behavior as $z \to 0$.
From \eqref{g2at0} and \eqref{EL4} we find the limits of $g_1$ at the origin,
\begin{align} \label{g1at0}
  \lim_{z \to 0} g_1(z) = \begin{cases} 2\gamma_2 + \pi i/3, & 0 < \arg z < 2\pi /3, \\
  	2\gamma_2 + \pi i, & 2 \pi/3 < \arg z < \pi, \\
  	2\gamma_2 - \pi i/3, & -2\pi/3 < \arg z < 0, \\
  	2 \gamma_2 - \pi i, & -\pi < \arg z < -2\pi/3,
  	\end{cases}
  	\end{align}
where $\gamma_2$ is the constant from \eqref{g2at0}. Using \eqref{EL3} we also find
\[ \ell = 3 \gamma_2. \]
Combining \eqref{g2at0} and \eqref{g1at0} with \eqref{phi1onSigma1} and \eqref{phi2onSigma2} 
we obtain the limits of $\varphi_1$  and $\varphi_2$ at the origin.
\begin{align} \label{phi1at0}
  \lim_{z \to 0} 2 \varphi_1(z) & = 
  	\begin{cases} 2\pi i/3, & \text{if } k\pi/3 < \arg z < (k+1) \pi /3
  	\text{ with } k \text{ even}, \\
    -2\pi i/3, & \text{if } k\pi/3 < \arg z < (k+1) \pi /3
  	\text{ with } k \text{ odd}, 
  	\end{cases} \\
  	\label{phi2at0}
  \lim_{z \to 0} 2 \varphi_2(z) & = 
  	\begin{cases} -\pi i/3, & 0 < \arg z < \pi/3 \text{ or } -2\pi/3 < \arg z < -\pi/3, \\
    \pi i/3, & -\pi/3 < \arg z < 0 \text{ or } \pi/3 < \arg z < 2\pi/3, \\
  	2\pi i/3, & 2\pi/3 < \arg z < \pi, \\
  	-2\pi i/3, & -\pi < \arg z < -2\pi/3.
  	\end{cases}
  	\end{align}
The values in \eqref{phi2at0} also follow from the definition \eqref{phi2} of $\varphi_2$.
From \eqref{phi1at0} and \eqref{phi2at0} we check that the left- and right-hand
 sides 	of \eqref{phi1andphi2onSigma1} and \eqref{phi1andphi2onSigma2} have the
 same values as $z \to 0$ and the equalities 
\eqref{phi1andphi2onSigma1} and \eqref{phi1andphi2onSigma2} follow.
\end{proof}
  
Then we have the following.	
\begin{rhproblem} \label{RHforU}
$U$ is the solution of the following RH problem.
\begin{itemize}
\item $U$ is analytic in $\mathbb C \setminus \Gamma_U$, where $\Gamma_U = \Gamma_V$,
\item $U_+ = U_- J_U$ on $\Gamma_U$ with
\begin{equation} \label{JU} 
	J_U(z) = \begin{cases} 
	\begin{pmatrix} e^{-2n \varphi_{1,+}(z)} & 1 & 0 \\ 
	0 & e^{-2n \varphi_{1,-}(z)} & 0 \\ 0 & 0 & 1 \end{pmatrix}, &  z \in \Sigma_1, \\
	\begin{pmatrix} 1 & e^{2n \varphi_1(z)} & 0 \\ 0 & 1 & 0 \\ 0 & 0 & 1 \end{pmatrix}, &
		z \in \bigcup_j [\omega^j x^*, \omega^j \widehat{x}], \\
		\begin{pmatrix} 1 & 0 & (\frac{1}{2} - \frac{1}{6} i \sqrt{3}) e^{2n \varphi_{1,-}(z)} \\ 0 & \omega^2 e^{-2n \varphi_{2,+}(z)}  & 1 \\ 
		0 & 0 & \omega e^{-2n \varphi_{2,-}(z)} \end{pmatrix}, &  z \in \Sigma_2, \\
	  \begin{pmatrix} 1 & (\frac{1}{2} \pm \frac{1}{6} i \sqrt{3}) e^{2n \varphi_1(z)} & 0 \\ 0 & 1 & 0 \\ 0 & 0 & 1 \end{pmatrix},
	  	& z \in \bigcup_j C_j^{\pm}.
	  \end{cases} \end{equation}
\item $U(z) = (I + O(1/z)) A(z)$ as $z\to \infty$, where $A(z)$ is given by \eqref{defAz}.
\end{itemize}
\end{rhproblem}

\begin{proof} 
From \eqref{JV1} on $\bigcup_j (0, \omega^j \widehat{x}]$ and the definition \eqref{defU} we obtain
by the fact that $g_2$ is analytic, that $J_U$ on $\bigcup_j [0,\omega_j \widehat{x}]$
takes the form
\begin{equation} \label{JUonSigma1} 
	\begin{pmatrix} e^{-n(g_{1,+}-g_{1,-})} & 
	e^{n(g_{1,+}+g_{1,-} - g_2 - \ell - \frac{1}{t_0}(\frac{2}{3\sqrt{t_3}} |z|^{3/2} - \frac{t_3}{3} z^3))} & 0 \\
	0 & e^{n(g_{1,+}-g_{1,-})} & 0 \\ 0 & 0 & 1 \end{pmatrix}. 
	\end{equation}
	The $(1,2)$ entry of \eqref{JUonSigma1} is equal to $1$ on $\Sigma_1$ because of \eqref{EL3} (recall that $n$ is even
	by our assumption at the beginning of subsection \ref{subsecfirst}, so that
  the terms $\pm \pi i$ that are in \eqref{EL3} on $[0, \omega x^*]$ and $[0,\omega^2 x^*]$ do not matter). 
  The nontrivial diagonal entries in \eqref{JUonSigma1} reduce on $\Sigma_1$ to the ones as given in \eqref{JU} on $\Sigma_1$
  because of the identities \eqref{phi1onSigma1}.
  On $\bigcup_j [\omega^j x^*, \omega^j \widehat{x}]$ the diagonal entries in \eqref{JUonSigma1} are equal to $1$,
  since $g_{1,+} = g_{1,-}$ is analytic. The $(1,2)$ entry is $e^{2n \varphi_1}$ because of 
  the identity \eqref{phi1inS0S1S2}.
  
 For the jump matrix $J_U$ on $\Sigma_2$ we obtain from \eqref{JV1}, \eqref{defU} and the fact that
 $e^{ng_1}$ is analytic on $\Sigma_2$ 
\begin{equation} \label{JUonSigma2} 
	\begin{pmatrix} 1 & 0 & (\frac{1}{2} - \frac{1}{6} i \sqrt{3}) e^{n (g_{1,-} + g_{2,+} -  \ell  + \frac{1}{t_0} ( \mp \frac{2}{3 \sqrt{t_3}} z^{3/2} + \frac{t_3}{3} z^3))} \\
	0 & \omega^2 e^{-n(g_{2,+}+g_{2,-} \pm \frac{4}{3t_0 \sqrt{t_3}} z^{3/2})} & 0 \\
	0 & 0 & \omega e^{n(g_{2,+}+g_{2,-} \pm \frac{4}{3t_0 \sqrt{t_3}} z^{3/2})} 
	\end{pmatrix} 
	\end{equation}
with $+$ for $\arg z = \pi/3$ and $-$ for $\arg z = - \pi/3$ or $\arg z = \pi$.
The nonconstant entries in \eqref{JUonSigma2} are simplified using \eqref{phi2onSigma2}
and \eqref{phi1onSigma2} which leads to the expressions given in \eqref{JU}.
Again we use the fact that $n$ is even so that the term $\pm \pi i$ in \eqref{phi1onSigma2} 
on $\arg z = -\pi/3$ does not contribute.

For $z \in \bigcup_j C_j^{\pm}$ we find from \eqref{JV2} and \eqref{defU}
that the jump matrix $J_U$ is equal to
\[ \begin{pmatrix} 1 & (\frac{1}{2} \pm \frac{1}{6} i \sqrt{3}) e^{\frac{n}{t_0} (\mp \frac{2}{3 \sqrt{t_3}} z^{3/2} + \frac{t_3}{3} z^3) + n (2g_1(z) - g_2(z) - \ell)} & 0 \\
	0 & 1 & 0 \\ 0 & 0 & 1 \end{pmatrix}. \]
Here we simplify the $(1,2)$ entry by means of \eqref{phi1inS0S1S2} and thereby obtain 
the jump matrix that is given in \eqref{JU} on $\bigcup_j C_j^{\pm}$.

The asymptotic condition in the RH problem \ref{RHforU} follows from 
the asymptotic condition in the RH problem \ref{RHforV} for $V$, the transformation \eqref{defU}
and the fact that
\[ g_1(z) = \log z + O(z^{-3}), \qquad g_2(z) = \tfrac{1}{2} \log z + O(z^{-3/2}) \]
as $z \to \infty$.
\end{proof}

The jump matrices $J_U$ in the RH problem \ref{RHforU} for $U$ have a good form for the opening
of lenses around $\Sigma_1$ and $\Sigma_2$. Indeed, the expressions $\varphi_{j,\pm}(z)$, $j=1,2$, that
appear in the non-trivial diagonal entries in the jump matrices  are purely imaginary
and the diagonal entries $e^{-2n\varphi_{j,\pm}(z)}$ are highly oscillatory if $n$ is large.
The opening of lenses will turn these entries into exponentially decaying off-diagonal entries.

The other non-constant entries in the jump matrix $J_U$ have a factor $e^{2n \varphi_1}$ or $e^{2n \varphi_{1,-}}$.
These are exponentially decaying provided that $\Re \varphi_1 < 0$ on $\bigcup_j ((\omega^j x^*, \omega^j \widehat{x}] \cup C_j)$
and $\Re \varphi_{1,-} < 0$ on $\Sigma_2$. The first of these inequalities indeed holds, but the second
one is not satisfied on the part of $\Sigma_2$ in a neighborhood of the origin. 
Fortunately, we can remove the $(1,3)$ entry in the jump matrix \eqref{JU} on $\Sigma_2$ 
by a transformation  in a larger domain  that is similar to the global opening of lenses in \cite{ApBlKu}.
We will do this in the next subsection.
The  transformation will not affect the jump matrices on $\Sigma_1$ and
$\bigcup_j [\omega^j x^*, \omega^j \widehat{x}]$.

\subsection{Fourth transformation} \label{subsecfourth}

The transformation $U \mapsto T$ is based on the following property of the set where $\Re \varphi_1 < 0$.
Recall that $\varphi_1$, see \eqref{phi1}, is defined and analytic in $\mathbb C \setminus (\Sigma_1 \cup \Sigma_2)$.
From \eqref{phi1onSigma1} and \eqref{gfunctions} it follows that
\begin{equation} \label{Rephi1onSigma1} 
	\varphi_{1,\pm}(z) = \pm \pi i \mu_1^*([z, \omega^j x^*]), \qquad \text{for } z \in [0, \omega^j x^*],
	\quad j=0,1,2,
	\end{equation}
so that  $\Re \varphi_{1,\pm} = 0$ on $\Sigma_1$.
On $\Sigma_2$ we have the identities \eqref{phi1andphi2onSigma2} with $\varphi_{2,+}$ purely
imaginary.  Therefore $\Re \varphi_1$ extends to a continuous function on $\mathbb C$,
which we also denote by $\Re \varphi_1$. With this understanding we define the domain $D$ as follows   
\begin{equation} \label{defD}
	D = \{ z \in \mathbb C  \mid \Re \varphi_1(z) < 0 \}.
	\end{equation}

Since $\Re \varphi_1$ is harmonic in $\mathbb C \setminus (\Sigma_1 \cup \Sigma_2)$,
we have that the boundary $\partial D$ consists of a finite number
of analytic arcs that start and end on $\Sigma_1 \cup \Sigma_2$ or at infinity.

From \eqref{Rephi1onSigma1} it follows that $\pm i \varphi_{1,\pm}$ is real and strictly increasing
with a positive derivative along 
each interval $[0, \omega^j x^*]$ in $\Sigma_1$. Then by the Cauchy-Riemann equations, 
it follows that the closed set $\mathbb C \setminus D$ contains the intervals $[0, \omega^j x^*)$, $j=0,1,2$ 
in its interior. Note that $0$
is also an interior point, since $\mu_1^*$ has a positive density at  $0$.
Near the branch point $x^*$ we have
\[ \varphi_1(z) = - c (z- x^*)^{3/2} \left(1+ O\left((z-x^*)^{-1}\right)\right) \qquad \text{as } z \to x^* \]
with a positive constant $c > 0$. This follows from the fact that $\mu_1^*$
vanishes as a square root at $x^*$. Thus $\Re \varphi_1 < 0$ immediately to the right
of $x^*$. It also follows that $x^* \in \partial D$ and $\partial D$
makes angles $\pm 2\pi/3$ with $[0, x^*]$. The same holds
at $\omega^j x^*$ because of the rotational symmetry.

\begin{figure}
\centering
\begin{overpic}[width=8cm]{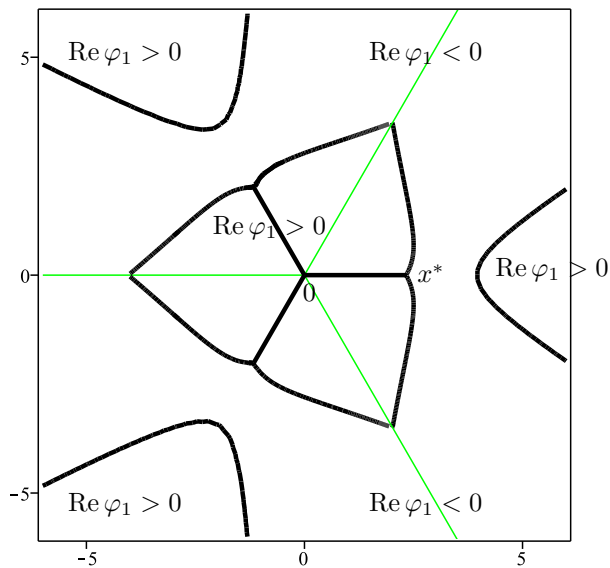}
\put(54,50){$0$}
\put(73,53){$x^*$}
\put(39,61){$\Re \varphi_1 > 0$}
\put(65,90){$\Re \varphi_1 < 0$}
\put(65,15){$\Re \varphi_1 < 0$}
\put(15,15){$\Re \varphi_1 > 0$}
\put(15,90){$\Re \varphi_1 > 0$}
\put(86,54){$\Re \varphi_1 > 0$}
\end{overpic}
\caption{Curves $\Re \varphi_1(z) = 0$ and the domains where $\Re \varphi_1(z) < 0$ and $\Re \varphi_1(z) < 0$.
The figure is for the values $t_0 = 1.8$, $t_3 = 0.25$.}
\label{fig:Rephi1}
\end{figure}

Since
\[ 2\varphi_1(z) = \frac{t_3}{3t_0} z^3 + O(z^{3/2}) \qquad \text{as } z \to \infty, \]
which follows for example from \eqref{phi1inS0S1S2},
the domain $D$ is unbounded and extends to infinity in the three directions
\[ |\arg z - (2k-1) \pi/3| < \pi/6, \qquad k =0,1,2,  \]
around the directions $\pm \pi/3$, $\pi$ of $\Sigma_2$.
The unbounded parts of $\partial D$ then extend to infinity at angles $\pm \pi/6$, $\pm \pi/2$ and
$\pm 5 \pi/6$. From what we now know about $D$ it follows that
$\partial D$ has three unbounded parts. One of them intersects the positive real line
and tends to infinity at angles $\pm \pi/6$. The other two are obtained by rotation over 
angles $\pm 2\pi /3$. 

The analytic arcs in $\partial D$ that start at the branch points $\omega^j x^*$
then necessarily end at points on $\Sigma_2$. The conclusion is that
we have a situation as shown in Figure~\ref{fig:Rephi1}. In particular the
domain $D$ is connected.

We also find that $\Re \varphi_1$ takes on a negative minimum on the interval $(x^*, \infty)$ 
which by \eqref{phi1} and parts (c) and (d) of Lemma \ref{lem:cubic} is taken at $\widehat{x}$.
Thus $\widehat{x} \in D$, and by symmetry also $\omega^j \widehat{x} \in D$ for $j=1,2$. 
We may then also assume that the unbounded contours $C_j$ from \eqref{contoursCjpm}--\eqref{contoursCj} 
(see also Figure \ref{fig:deformedGamma}) satisfy 
\[ C_j \subset D \cap S_j, \qquad j=0,1,2, \]
and that they extend to infinity at asymptotic directions $-\pi/3 + \theta_0$ and $\pi/3 - \theta_0$ (for $C_0$),
$\pi/3 + \theta_0$ and $\pi - \theta_0$ (for $C_1$) and $-\pi + \theta_0$ and $-\pi/3 - \theta_0$ (for $C_2$),
where $\theta_0 \in (0, \pi/6)$ is some fixed angle.
 The above choice makes sure that 
\begin{equation} \label{Rephi1onC1} \Re \varphi_1(z) < 0, \qquad z \in  C_j, \, j=0,1,2, 
\end{equation}
and, with positive constants $c_1, c_2 > 0$,
\begin{equation} \label{Rephi1onC2} 
	\Re \varphi_1(z) \leq -c_1 |z|^3 + c_2, 
	\end{equation}
for $z$ in the domain bounded by $C_0 \cup C_1 \cup C_2$.

After these preparations we define the transformation $U \mapsto T$ as follows.

\begin{definition}
We define $\widetilde{T}$ as 
\[ \widetilde{T}(z) = U(z) \]
for $z$ outside the domain bounded by $C_0 \cup C_1 \cup C_2$,
and for $z$ inside this domain,
\begin{align} \label{defT1}
	\widetilde T(z) = U(z)  \begin{pmatrix} 1 & ((-1)^k \tfrac{1}{2} - \tfrac{1}{6} i \sqrt{3}) e^{2n \varphi_1(z)} & 0 \\ 
	0 & 1 & 0 \\ 0 & 0 & 1 \end{pmatrix} &
		\begin{array}{l} \text{for } \arg z \in (\tfrac{k \pi}{3}, \tfrac{(k+1) \pi}{3}) \\ 
		\text{and } k = -3, \ldots, 2. \end{array} 
\end{align}
Then $T$ is defined as 
\begin{equation} \label{defT2} 
	T(z) = \begin{pmatrix} \tfrac{1}{2} & 0 & 0 \\ 0 & 1 & 0 \\ 0 & 0 & 1 \end{pmatrix}
	\widetilde T(z) \begin{pmatrix} 2 & 0 & 0 \\ 0 & 1 & 0 \\ 0 & 0 & 1 \end{pmatrix}. 
	\end{equation}
\end{definition}

The RH problem for $T$ is the following.
\begin{rhproblem} \label{RHforT}
$T$ is the solution of the following RH problem.
\begin{itemize}
\item $T$ is analytic in $\mathbb C \setminus \Gamma_T$, where $\Gamma_T = \Gamma_U$,
\item $T_+ = T_- J_T$ on $\Gamma_T$ with
\begin{equation} \label{JT} 
	J_T(z) = \begin{cases} 
	\begin{pmatrix} e^{-2n \varphi_{1,+}(z)} & 1 & 0 \\ 
	0 & e^{-2n \varphi_{1,-}(z)} & 0 \\ 0 & 0 & 1 \end{pmatrix}, &  z \in \Sigma_1, \\
		\begin{pmatrix} 1 & 0 & 0 \\ 0 & \omega^2 e^{-2n \varphi_{2,+}(z)}  & 1 \\ 
		0 & 0 & \omega e^{-2n \varphi_{2,-}(z)} \end{pmatrix}, &  z \in \Sigma_2, \\
		\begin{pmatrix} 1 & e^{2n \varphi_1(z)} & 0 \\ 0 & 1 & 0 \\ 0 & 0 & 1 \end{pmatrix},
			& z \in \bigcup_{j=0}^2 [\omega^j x^*, \omega^j \widehat{x}], \\
		\begin{pmatrix} 1 & \frac{1}{2} e^{2n \varphi_1(z)} & 0 \\ 
	0 & 1 & 0 \\ 0 & 0 & 1 \end{pmatrix}, & z \in  \bigcup_{j=0}^2 C_j.
	  \end{cases} \end{equation}
\item $T(z) = (I + O(1/z)) A(z)$ as $z\to \infty$, where $A(z)$ is given by \eqref{defAz}.
\end{itemize}
\end{rhproblem}

\begin{proof}
From \eqref{Rephi1onC2} and \eqref{defT1} we get that $\widetilde{T}(z) = U(z) \left(I + O\left(e^{-c_1|z|^3}\right)\right)$
as $z \to \infty$, which by the asymptotic condition in the RH problem \ref{RHforU} for $U$ leads to
\[ \widetilde{T}(z) = (I + O(1/z)) A(z) \]
as $z \to \infty$. The transformation \eqref{defT2} does not affect this asymptotic
behavior, because of the special form \eqref{defAz} of $A(z)$.

We next verify the jump matrices \eqref{JT}. We write  
\begin{equation} \label{defalpha} 
	\alpha = \frac{1}{2} + \frac{1}{6} i \sqrt{3}.
	\end{equation}

For the jump matrix $J_{\widetilde{T}} = \left(\widetilde{T}_-\right)^{-1} \widetilde{T}_+$ on $\Sigma_1$ we obtain by
\eqref{JU} and \eqref{defT1}
\begin{align*} 
	J_{\widetilde{T}} & = \begin{pmatrix} 1 & \alpha e^{2n \varphi_{1,-}} & 0 \\ 0 & 1 & 0 \\ 0 & 0 & 1 \end{pmatrix}
	J_U \begin{pmatrix} 1 & \bar{\alpha} e^{2n \varphi_{1,+}} &0 \\ 0 & 1 & 0 \\ 0 & 0 & 1 \end{pmatrix} \\
		& = \begin{pmatrix} e^{-2n \varphi_{1,+}} & 1 + \alpha + \bar{\alpha} & 0 \\ 0 & e^{-2n \varphi_{1,-}} & 0 \\ 0 & 0 & 1 \end{pmatrix}
		= \begin{pmatrix} e^{-2n \varphi_{1,+}} & 2 & 0 \\ 0 & e^{-2n \varphi_{1,-}} & 0 \\ 0 & 0 & 1 \end{pmatrix}.
		\end{align*}
The transformation \eqref{defT2} has the effect of dividing the $(1,2)$ entry in $J_{\widetilde{T}}$ by $2$
and we obtain the jump matrix $J_T$ on $\Sigma_1$ as given in \eqref{JT}.
Similar calculations lead to the jump matrices $J_T$  on $\bigcup_j [\omega^j x^*, \omega^j \widehat{x}]$ and  on
$\bigcup_j C_j^{\pm}$.

For the jump matrix $J_{\widetilde{T}}$ on $\Sigma_2$ we obtain by
\eqref{JU} and \eqref{defT1}
\begin{align} \nonumber 
	J_{\widetilde{T}} & =
	\begin{pmatrix} 1 & - \bar{\alpha} e^{2n \varphi_{1,-}} & 0 \\ 0 & 1 & 0 \\ 0 & 0 & 1 \end{pmatrix}
	J_U \begin{pmatrix} 1 & - \alpha e^{2n \varphi_{1,+}} &0 \\ 0 & 1 & 0 \\ 0 & 0 & 1 \end{pmatrix} \\
	& = \begin{pmatrix} 1 & - \alpha e^{2n \varphi_{1,+}} - \bar{\alpha} \omega^2 e^{2n(\varphi_{1,-} - \varphi_{2,+})} & 0 \\
		0 & \omega^2 e^{-2n \varphi_{2,+}} & 1 \\ 0 & 0 & \omega e^{-2n \varphi_{2,-}}.
		\end{pmatrix} \label{JtildeTonSigma2}
\end{align}
By \eqref{phi1andphi2onSigma2} we have $e^{2n (\varphi_{1,-} - \varphi_{2,+})} = e^{2n \varphi_{1,+}}$ on $\Sigma_2$,
and so the $(1,2)$ entry in \eqref{JtildeTonSigma2} reduces to
\[ (-\alpha - \bar{\alpha} \omega^2) e^{2n \varphi_{1,+}} = 0, \]
because of the value \eqref{defalpha} we have for $\alpha$. Thus the $(1,2)$ entry 
in \eqref{JtildeTonSigma2} vanishes. The transformation \eqref{defT2} then gives $J_T = J_{\tilde T}$
and it is equal to the jump matrix \eqref{JT}  on $\Sigma_2$. 
\end{proof}

Since $\Re \varphi_1 < 0$ on $\bigcup_j ((\omega^j x^*, \omega^j \widehat{x}] \cup C_j^{\pm})$
we see from \eqref{JT} that the jump matrices $J_T$ on these parts of $\Gamma_T$
tend to the identity matrix as $n \to \infty$.
What remains are oscillatory jump matrices on $\Sigma_1$ and $\Sigma_2$.

\subsection{Fifth transformation} \label{subsecfifth}
We open up lenses around the three intervals in $\Sigma_1$ and around $\Sigma_2$. 
We use 
\begin{equation} \label{defL1} 
	L_1 = L_1^+ \cup L_1^- \end{equation}
to denote the lens around $\Sigma_1$ where
$L_1^+$ ($L_1^-$) is the part that lies on the $+$ side ($-$ side) of $\Sigma_1$. 

The boundary $\partial L_1$ of $L_1$ meets $\Sigma_1$  at the endpoints $\omega^j x^*$, $j=0,1,2$
but not at $0$.  The boundary  meets $\Sigma_2$ at points at a positive distance $\delta_1 > 0$
from the origin, see Figure \ref{fig:lenses}, and we choose $L_1$ such that
 $\partial L_1 \setminus \{ \omega^j x^* \}$ is contained in the region  where $\Re \varphi_1 > 0$,
 see also Figures~\ref{fig:Rephi1} and \ref{fig:lenses}.

Similarly we write
\begin{equation} \label{defL2} 
	L_2 = L_2^+ \cup L_2^- \end{equation}
for the lens around $\Sigma_2$. The boundary $\partial L_2$ lies in
the region where $\Re \varphi_2 > 0$  and they are
bounded by infinite rays that meet $\Sigma_1$ at points at a positive distance $\delta_2 > 0$ from
the origin. The rays have an asymptotic angles $\pm \pi/3 \pm \varepsilon$, 
$\pi \pm \varepsilon $ for some small $\varepsilon > 0$, see also Figure \ref{fig:lenses}.
We can indeed open the lense around $\Sigma_2$ this way, since $\mu_2^*$ has a  positive
density, also at $0$.

\begin{figure}[t]
\begin{center}
\begin{tikzpicture}[scale=1.3,decoration={markings,mark=at position .67 with {\arrow[black,line width=0.8mm]{>};}}]

\draw[very thick] (0,0)--(1,0) node[below]{$x^*$};
\draw[very thick] (0,0)--(1.5,0); 
\draw[very thick, rotate around ={120:(0,0)}] (0,0)--(1.5,0); 
\draw[very thick, rotate around ={-120:(0,0)}] (0,0)--(1.5,0); 

\draw[postaction={decorate}, very thick] (0,0)--(-3,0);
\draw[postaction={decorate}, very thick, rotate around={120:(0,0)}] (0,0)--(-3,0);
\draw[postaction={decorate}, very thick, rotate around={-120:(0,0)}] (0,0)--(-3,0);

\draw[postaction={decorate},very thick,rotate around={-90:(1.5,0)}] (1.5,0) parabola (3.5,1.5);
\draw[postaction={decorate},very thick,rotate around={90:(1.5,0)}] (1.5,0) parabola (3.5,-1.5);

\draw[postaction={decorate},very thick,rotate around={30:(-0.75,1.3)}] (-0.75,1.3) parabola (1.25,2.8);
\draw[postaction={decorate},very thick,rotate around={210:(-0.75,1.3)}] (-0.75,1.3) parabola (1.25,-0.2);
\draw[postaction={decorate},very thick,rotate around={-210:(-0.75,-1.3)}] (-0.75,-1.3) parabola (1.25,0.2);
\draw[postaction={decorate},very thick,rotate around={-30:(-0.75,-1.3)}] (-0.75,-1.3) parabola (1.25,-2.8);

\draw(2.5,1.6) node[right]{$C_0^+$}; 
\draw(2.5,-1.6) node[right]{$C_0^-$};
\draw[rotate around={120:(0,0)}] (2.5,1.6) node[above]{$C_1^+$}; 
\draw[rotate around={120:(0,0)}] (2.5,-1.6) node[left]{$C_1^-$};

\draw[rotate around={-120:(0,0)}] (2.5,1.6) node[left]{$C_2^+$}; 
\draw[rotate around={-120:(0,0)}] (2.5,-1.6) node[below]{$C_2^-$};

\draw[postaction={decorate},very thick](0.25,0.4)--(0.4,0.4)--(0.6,0.35)--(0.8,0.2)--(1,0);
\draw[postaction={decorate},very thick](0.25,-0.4)--(0.4,-0.4)--(0.6,-0.35)--(0.8,-0.2)--(1,0);
\draw[postaction={decorate},very thick, rotate around={120:(0,0)}](0.25,0.4)--(0.4,0.4)--(0.6,0.35)--(0.8,0.2)--(1,0);
\draw[postaction={decorate},very thick, rotate around={120:(0,0)}](0.25,-0.4)--(0.4,-0.4)--(0.6,-0.35)--(0.8,-0.2)--(1,0);
\draw[postaction={decorate},very thick, rotate around={-120:(0,0)}](0.25,0.4)--(0.4,0.4)--(0.6,0.35)--(0.8,0.2)--(1,0);
\draw[postaction={decorate},very thick, rotate around={-120:(0,0)}](0.25,-0.4)--(0.4,-0.4)--(0.6,-0.35)--(0.8,-0.2)--(1,0);

\draw[postaction={decorate},  very thick] (0.2,0)--(2,2.4);
\draw[postaction={decorate}, very thick] (0.2,0)--(2,-2.4);
\draw[postaction={decorate},  very thick, rotate around={120:(0,0)}] (0.2,0)--(2,2.4);
\draw[postaction={decorate},  very thick, rotate around={120:(0,0)}] (0.2,0)--(2,-2.4);
\draw[postaction={decorate},  very thick, rotate around={-120:(0,0)}] (0.2,0)--(2,2.4);
\draw[postaction={decorate},  very thick, rotate around={-120:(0,0)}] (0.2,0)--(2,-2.4);

\draw(0.3,0.17) node[right]{\tiny $L_1^+$};
\draw(0.3,-0.18) node[right]{\tiny $L_1^-$};
\draw(-0.4,0.17) node[above]{\tiny $L_1^+$};
\draw(-0.1,0.35) node[above]{\tiny $L_1^-$};
\draw(-0.15,-0.35) node[below]{\tiny $L_1^+$};
\draw(-0.4,-0.17) node[below]{\tiny $L_1^-$};

\draw(1.4,2.5) node[left]{$L_2^+$};
\draw(1.8,2.2) node[left]{$L_2^-$};
\draw(-2.8,0.0) node[below]{$L_2^+$};
\draw(-2.8,0.5) node[below]{$L_2^-$};
\draw(1.8,-2.2) node[left]{$L_2^+$};
\draw(1.4,-2.5) node[left]{$L_2^-$};
\filldraw(1,0) circle (1pt);
\filldraw[rotate around={120:(0,0)}](1,0) circle (1pt);
\filldraw[rotate around={-120:(0,0)}](1,0) circle (1pt);
\end{tikzpicture}
\end{center}
\caption{Lenses $L_1$ and $L_2$ around $\Sigma_1$ and $\Sigma_2$ and the contours $\Sigma_S$ in
the RH problem for $S$.}
\label{fig:lenses}
\end{figure}
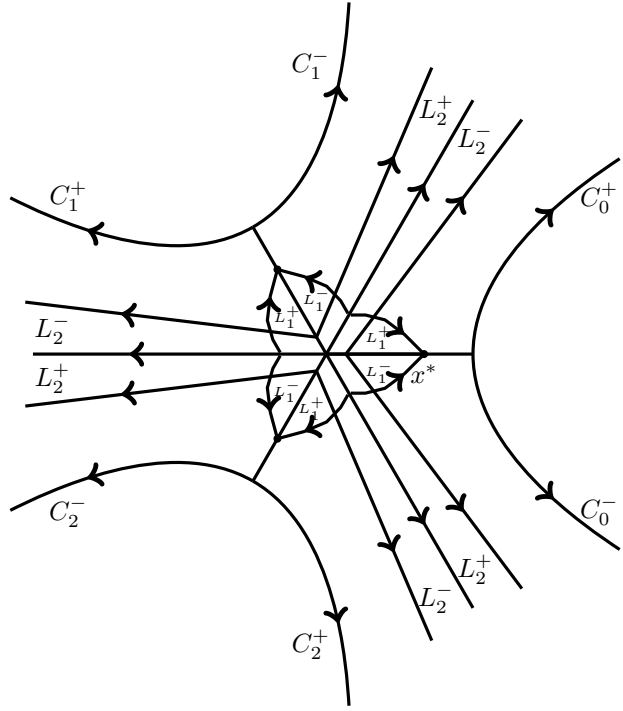

Note that the lenses $L_1$ and $L_2$ have non-empty intersection.
More precisely, $L_1^+ \cap L_1^- \neq \emptyset$ and $L_1^- \cap L_2^+ \neq \emptyset$.

\begin{definition}
We define 
\[ S(z) = T(z) \] 
for $z$ outside of the lenses $L_1$ and $L_2$.
For $z$ inside the lenses we define
\begin{align} \label{defS} 
	S(z) = T(z) \times
	\begin{cases} \begin{pmatrix} 1 & 0 & 0  \\ \mp e^{-2n \varphi_1(z)} & 1 & 0 \\ 0 & 0 & 1 \end{pmatrix},
	& z \in L_1^{\pm} \setminus L_2,  \\
			\begin{pmatrix} 1 & 0 & 0 \\ 0 & 1 & 0 \\ 0 & \mp \omega^{\mp} e^{-2n \varphi_2(z)} & 1 \end{pmatrix}, 
	& z \in L_2^{\pm} \setminus L_1,  \\
			\begin{pmatrix} 1 & 0 & 0  \\ \mp e^{-2n \varphi_1(z)} & 1 & 0 \\ 
			\omega^{\mp} e^{-2n(\varphi_1(z) + \varphi_2(z))} & \pm \omega^{\pm} e^{-2n \varphi_2(z)}  & 1 \end{pmatrix},
	& z \in L_1^{\pm} \cap L_2^{\mp},  
		\end{cases}
		\end{align}
where $\omega^+ = \omega$ and $\omega^- = \omega^{-1} = \omega^2$.	
\end{definition}

Then $S$ satisfies a RH problem on a union of contours $\Gamma_S$ that consists
of $\Gamma_T$ and the lips $\partial L_1$ and $\partial L_2$ of the lenses.
We continue to follow the convention that the contours are oriented away from the origin, and towards infinity.
For $\partial L_1$ and $\partial L_2$ this means that each part in $\partial L_1$ is 
oriented towards a branch point $\omega^j x^*$, and each part in $\partial L_2$
is oriented towards infinity, as indicated in Figure~\ref{fig:lenses}.

\begin{rhproblem} \label{RHforS}
\begin{itemize}
\item $S$ is analytic in $\mathbb C \setminus \Gamma_S$,
\item $S_+ = S_- J_S$ on $\Gamma_S$ with $J_S$ given as follows on $\Gamma_T$,
\begin{align} \label{JS1} 
	J_S(z) = \begin{cases} 
	\begin{pmatrix} 0 & 1 & 0 \\ 
	-1 & 0 & 0 \\ 0 & 0 & 1 \end{pmatrix}, &
		\quad z \in \Sigma_1, \\
	\begin{pmatrix} 1 & 0 & 0 \\ 0 & 0 & 1 \\ 0 & -1 & 0 \end{pmatrix}, &
		\quad z \in \Sigma_2, \\
		\begin{pmatrix} 1 & e^{2n \varphi_1(z)} & 0 \\ 0 & 1 & 0 \\ 0 & 0 & 1 \end{pmatrix},
			& z \in \bigcup_{j=0}^2 [\omega^j x^*, \omega^j \widehat{x}], \\
		\begin{pmatrix} 1 & \frac{1}{2} e^{2n \varphi_1(z)} & 0 \\ 
	0 & 1 & 0 \\ 0 & 0 & 1 \end{pmatrix}, & z \in  \bigcup_{j=0}^2 C_j, 
		\end{cases}
		\end{align}
and on the lips of the lenses by
\begin{align} \label{JS2} J_S(z) = 
	\begin{cases} \begin{pmatrix} 1 & 0 & 0  \\ e^{-2n \varphi_1(z)} & 1 & 0 \\ 0 & 0 & 1 \end{pmatrix},
	& z \in \partial L_1 \setminus L_2,  \\
			\begin{pmatrix} 1 & 0 & 0 \\ 0 & 1 & 0 \\ 0 & \omega^{\mp} e^{-2n \varphi_2(z)} & 1 \end{pmatrix}, 
	& z \in \partial L_2^{\pm} \setminus L_1,  \\
			\begin{pmatrix} 1 & 0 & 0  \\ e^{-2n \varphi_1(z)} & 1 & 0 \\ \pm  e^{-2n(\varphi_1(z) + \varphi_2(z))} & 0 & 1 \end{pmatrix},
	& z \in \partial L_1^{\pm} \cap L_2,  \\
    \begin{pmatrix} 1 & 0 & 0 \\ 0 & 1 & 0 \\ \omega^{\pm} e^{-2n(\varphi_1(z) + \varphi_2(z))} & \omega^{\mp} e^{-2n \varphi_2(z)} & 1 \end{pmatrix},
	& z \in \partial L_2^{\pm} \cap  L_1.
		\end{cases}
\end{align}
\item $ S(z) = (I + O(1/z)) A(z)$ as $z \to \infty$ where $A(z)$ is given by \eqref{defAz}.
\end{itemize}
\end{rhproblem}

\begin{proof}
The jump matrices $J_S$ arise from combining the jump condition \eqref{JT} for $T$ with the
definition \eqref{defS}. As an example we show how the jump matrix \eqref{JS2} on $\partial L_1^+ \cap L_2$
arises.

To compute the jump matrix on $\partial L_1^+ \cap L_2$ we
need the definition \eqref{defS} of $S$ in the parts $L_1^+ \cap L_2$ (which is on the $-$side of  $\partial L_1^+ \cap L_2$)
and $L_2^- \setminus L_1$ (which is on the right).  Since $T$ has no jump we find from \eqref{defS} that,
\begin{align*} 
	J_S(z) & = 
	\begin{pmatrix} 1 & 0 & 0 \\ - e^{-2n \varphi_1(z)} & 1 & 0 \\ \omega^2 e^{-2n (\varphi_1(z) + \varphi_2(z))} & \omega e^{-2n \varphi_2(z)} & 1 \end{pmatrix}^{-1}
	\begin{pmatrix} 1 & 0 & 0 \\ 0 & 1 &0 \\ 0 & \omega e^{-2n \varphi_2(z)} & 1 \end{pmatrix} \\
	& = 	\begin{pmatrix} 1 & 0 & 0 \\ e^{-2n \varphi_1(z)} & 1 & 0 \\ e^{-2n (\varphi_1(z) + \varphi_2(z))} & -\omega e^{-2n \varphi_2(z)} & 1 \end{pmatrix}
	\begin{pmatrix} 1 & 0 & 0 \\ 0 & 1 &0 \\ 0 & \omega e^{-2n \varphi_2(z)} & 1 \end{pmatrix},
\end{align*} 
since $1+ \omega + \omega^2 = 0$. This then easily reduces to the jump matrix $J_S$ as given in \eqref{JS2}
on $\partial L_1^+ \cap L_2$. The other jumps follow in a similar way.

The asymptotic condition for $S$ outside of the lens $L_2$ follows trivially from the asymptotic condition
in the RH problem \ref{RHforT} for $T$. Inside $L_2$, we note that by  \eqref{phi2onSigma2} we have 
\[ \varphi_{2}(z) = \pm \frac{2}{3t_0 \sqrt{t_3}} z^{3/2} + O(\log z) \qquad \text{as } z \to \infty, \, z \in L_2, \]
where the $\pm$ signs are such that $\Re \varphi_2(z) > 0$ for $z$ large enough in  $L_2$.
Then it follows from \eqref{defS} that the asymptotic condition is also valid in the lens $L_2$.
\end{proof}

We are now in a situation where all jump matrices $J_S$ tend to the identity matrix 
as $n \to \infty$, except for those on $\Sigma_1$ and $\Sigma_2$, which are constant.
We already saw this for the jumps on $\bigcup_j (\omega^j x^*, \omega_j \widehat{x}] \cup C_j)$, see  \eqref{JS1},
since $\Re \varphi_1 < 0$ on these parts of the contour. Also $L_1$ is contained in
the region where $\Re \varphi_1 > 0$, so that the entries $e^{-2n \varphi_1(z)}$ that appear in
\eqref{JS2} are indeed decaying as $n \to \infty$.

Regarding $\varphi_2$, we note that by \eqref{gfunctions} and \eqref{phi2onSigma2} we have that
$\varphi_{2,\pm}(z)$ is purely imaginary for $z \in \Sigma_2$, with 
\begin{equation} \label{phi2increasing}
\begin{aligned}  i \varphi_{2,+}(z) & = - i \varphi_{2,-}(z) \\
	  & = \pi \mu_2^*([0, z]) + \frac{2}{3 t_0 \sqrt{t_3}} |z|^{3/2} 
		+ \begin{cases} - \pi/6,  & \arg z = \pm \pi/3, \\
		  \pi/3, & \arg z = \pi.
		  \end{cases} 
		  \end{aligned}
		  \end{equation}
Thus $i \varphi_{2,+}$ and $-i \varphi_{2,-}$ are strictly increasing along each of the
rays in $\Sigma_2$. Then by the Cauchy-Riemann equations, we have that $\Re \varphi_2 > 0$
both to the left and to the right of $\Sigma_2$. We may (and do) assume that the lips of
the lens $L_2$ is contained in the region where $\Re \varphi_2 > 0$ and  it indeed follows that the jump matrices \eqref{JS2} 
tend to the identity matrix as $n \to \infty$.

\begin{remark} \label{remark3}
The asymptotic condition in the RH problem \ref{RHforS} for $S$ is valid uniformly as $z \to \infty$
in any direction in the complex plane.
This is in contrast to what happens in the RH problems  for $X$, $V$, $U$ and $T$, where
the asymptotic condition is not valid uniformly in the directions $\arg z = \pm \pi/3$, $\pi$ that
correspond to $\Sigma_2$, see Remark \ref{remark1}.
 
The term $e^{-2n \varphi_2(z)}$ that appears in  the transformation \eqref{defS} for $z \in L_2^{\pm} \setminus L_1$
is small for $z$ away from $\Sigma_2$, but becomes of order $1$ as  $z$ approaches $\Sigma_2$. 
On $\Sigma_2$ a certain cancellation takes place, which results in the asymptotic condition being valid uniformly. 
One can establish this rigorously by looking back at the transformations $Y \mapsto X \mapsto V \mapsto U \mapsto 
T \mapsto S$ near $\Sigma_2$, but we will not go into that here. 

If we do not want to rely on this detailed analysis, then we can only guarantee at this stage that
the asymptotic condition in the RH problem \ref{RHforS} for $S$ is valid uniformly as $z \to \infty$ with
\[ \arg z \in  (-\pi + \varepsilon, - \pi/3 - \varepsilon) \cup (-\pi/3 + \varepsilon, \pi/3 - \varepsilon)
	\cup (\pi/3 + \varepsilon, \pi - \varepsilon) \]
and that
\begin{equation} \label{Sasymp2} 
	S(z) = O(1) A(z) \qquad \text{ uniformly  as } z \to \infty.
\end{equation}
The latter condition is a consequence of \eqref{Xasymp2} in Remark \ref{remark1} and the transformations
leading from $X$ to $S$.

%
\end{remark}

\subsection{Global parametrix}\label{subsecglobal}

The global parametrix $M$ is a solution of the following model RH problem,
which we obtain from the RH problem \ref{RHforS} for $S$ by dropping the jump matrices
that tend to the identity matrix as $n \to \infty$.

\begin{rhproblem} \label{RHforM}
\begin{itemize}
\item $M$ is analytic in $\mathbb C \setminus \Gamma_M$ where $\Gamma_M = \Sigma_1 \cup \Sigma_2$,
\item $M_+ = M_- J_M$ on $\Gamma_M$ with
\[ J_M(z) = \begin{cases} 
	\begin{pmatrix} 0 & 1 & 0 \\ 
	-1 & 0 & 0 \\ 0 & 0 & 1 \end{pmatrix}, &
		\quad z \in \Sigma_1, \\
	\begin{pmatrix} 1 & 0 & 0 \\ 0 & 0 & 1 \\ 0 & -1 & 0 \end{pmatrix}, &
		\quad z \in \Sigma_2,
		\end{cases} \]  
\item $ M(z) = (I + O(1/z)) A(z)$ as $z \to \infty$, where $A(z)$ is given by \eqref{defAz}.  
\item $M(z) = O\left((z - \omega^j x^*)^{-1/4} \right)$ as $z \to \omega^j x^*$ for $j=0,1,2$.
\item $M(z)$ remains bounded as $z \to 0$.
\end{itemize}
\end{rhproblem}
Note that the product of the jump matrices around $0$ is the identity matrix 
and so we can indeed require that $M$ remains bounded near the origin.
Also observe that the asymptotic condition is compatible with the jumps on the unbounded contours,
since 
\begin{equation} \label{jumpA} 
	A_+ = A_- J_M \qquad \text{ on } \Sigma_2, 
	\end{equation}
which may be checked directly from the definition \eqref{defAz} of $A(z)$.

To construct $M$ we use a meromorphic differential $\Omega$ on the Riemann surface $\mathcal R$,
which we specify by its poles and residues, see also \cite{DuKuMo, KuiMo, Mo}. 
We require that $\Omega$ has simple poles at the branch points
$\omega^j x^*$, $j=0,1,2$ and at $\infty_2$ (the point at infinity that is common to the second
and third sheets) with
\[ \Res(\Omega, \omega^j x^*) = - \tfrac{1}{2}, \qquad \Res(\Omega, \infty_2) = \tfrac{3}{2} \]
and $\Omega$ is holomorphic elsewhere. The residues add up to $0$, and therefore the
meromorphic differential exists and it is also unique since the  genus is zero.

We use $\infty_1$ as base point for integration of $\Omega$ and define
\begin{equation} \label{defuj} 
	u_j(z) = \int_{\infty_1}^z \Omega, \qquad z \in \mathcal R_j, \quad j=1,2,3, 
	\end{equation}
where the path of integration is chosen according to the following rules:
\begin{itemize}
\item The path for $u_1(z)$ stays on the first sheet.
\item The path for $u_2(z)$ starts on
the first sheet and passes once through $\Sigma_1$ to go to the second sheet
and then stays on the second sheet.
\item The path for $u_3(z)$ starts on the first sheet, passes once through 
$\Sigma_1$ to go to the second sheet, and then passes once through $\Sigma_2$
to go the third sheet and then stays on the third sheet.
\item All passages from one sheet to the next  go via the $-$-side on the upper
sheet to the $+$-side on the lower sheet.
\end{itemize}

Then the functions $u_j$ are well-defined, $u_1$ is analytic on $\mathbb C \setminus \Sigma_1$,
$u_2$ is analytic on $\mathbb C \setminus(\Sigma_1 \cup \Sigma_2)$, and $u_3$ is analytic
on $\mathbb C \setminus \Sigma_2$. The functions satisfy
\begin{align*}
	u_{3,+} = u_{3,-} \qquad u_{1,-} = u_{2,+}, \qquad u_{1,+} = u_{2,-} \pm \pi i, \qquad  \text{on } \Sigma_1, \\
	u_{1,+} = u_{1,-}, \qquad u_{2,-} = u_{3,+}, \qquad u_{2,+} = u_{3,-} \pm \pi i, \qquad \text{on } \Sigma_2. 
\end{align*}
We put 
\begin{equation} \label{defvj} 
	v_j = e^{u_j}
	\end{equation} 
	and then we have
\begin{align*}
	\begin{pmatrix} v_1 & v_2 & v_3 \end{pmatrix}_+
		= 	\begin{pmatrix} v_1 & v_2 & v_3 \end{pmatrix}_- J_M \qquad \text{on } \Sigma_1 \cup \Sigma_2,
		\end{align*}
Since $v_1(z) = 1 + O(1/z)$, 
$v_2(z) = O(z^{-3/4})$, $v_3(z) = O(z^{-3/4})$ as $z \to \infty$, the vector $(v_1, v_2, v_3)$
satisfies the conditions for the first row of $M$.

The vector space of holomorphic functions on $\mathcal R \setminus \{\infty_2\}$ 
with at most a double pole at $\infty_2$ has dimension $3$. Let $f^{(1)} \equiv 1,f^{(2)}, f^{(3)}$
be a basis of this vector space. We use $f^{(i)}_j$ to denote the restriction of $f^{(i)}$ to
the sheet $\mathcal R_j$. Then it is easy to see that 
\[ B = \begin{pmatrix} v_1 & v_2 & v_3 \\
	v_1 f^{(2)}_1 & v_2 f^{(2)}_2 & v_3 f^{(2)}_3 \\
	v_1 f^{(3)}_1 & v_2 f^{(3)}_2 & v_3 f^{(3)}_3 
\end{pmatrix} \]
is defined and analytic in $\mathbb C \setminus \Gamma_M$ with jump $B_+ = B_- J_M$. 
Also $B(z) = O(z^{1/4})$ as $z \to \infty$. It then easily follows that $\det B \equiv {\rm const}$
and the constant is non-zero, since the functions $f^{(1)} \equiv 1$, $f^{(2)}$, $f^{(3)}$ are independent.

We already noted that $A_+ = A_- J_M$ on $\Sigma_2$. Since $A$ and $B$ have the same
jumps on $\Sigma_2$, it follows that $B A^{-1}$ is analytic in $\mathbb C \setminus \Sigma_1$,
and therefore has a Laurent expansion at infinity. Since both $B(z) = O(z^{1/4})$ and $A(z) = O(z^{1/4})$
as $z \to \infty$, we have  
\[ B(z) A^{-1}(z) = C + O(z^{-1}) \qquad \text{as } z \to \infty \]
for some constant matrix $C$. Since $\det B \equiv {\rm const} \neq 0$ and $\det A \equiv 1$,
we see that $\det C \neq 0$, and so $C$ is invertible. Then
\[ M(z) = C^{-1} B(z) \]
satisfies
\[ M(z) = (I + O(z^{-1})) A(z) \qquad \text{as } z \to \infty. \]
It also satisfies the jump condition $M_+ = M_- J_M$, it is bounded at $0$ and it has
at most fourth root singularities at the branch points. Thus $M$ is the global parametrix we
are looking for. 
 
\begin{remark} \label{remarkM11}
From the above construction it follows that
\[ M_{1,1}(z) = v_1(z) = e^{u_1(z)}, \qquad z \in \mathbb C \setminus \Sigma_1, \]
which in particular means that $M_{1,1}(z) \neq 0$ for $z \in \mathbb C \setminus \Sigma_1$.
\end{remark}

\subsection{Local parametrices} \label{subseclocal}

The local parametrix $P$ is defined in the disks
\[ D(\omega^j x^*, \delta) = \{ z \in \mathbb C \mid |z-\omega^j x^*| < \delta \} \]
where $\delta > 0$ is taken sufficiently small.

\begin{rhproblem} \label{RHforP}
\begin{itemize}
\item $P$ is continuous on $(\bigcup_j D(\omega^j x^*, \delta) ) \setminus \Gamma_S$ and
analytic in $(\bigcup_j D(\omega^j x^*, \delta) ) \setminus \Gamma_S$,
\item $P_+ = P_- J_S$ on $\Gamma_S \cap \bigcup_j D(\omega^j x^*, \delta)$,
where $J_S$ is as in \eqref{JS1}--\eqref{JS2}
\item $P(z)$ matches with the global parametrix $M(z)$ in the sense that
\begin{equation} \label{matching} 
	P(z) = M(z) \left(I + O(n^{-1})\right) \quad \text{ as } n \to \infty,
	\end{equation}
 uniformly for $z \in \bigcup_j \partial D(\omega_j x^*,\delta)$.
\end{itemize}
\end{rhproblem}

In the noncritical case $t_0 < t_{0,crit}$ the density of $\mu_1$ vanishes as a square root at $\omega^j x^*$,
for $j=0,1,2$, which means that we can build the local parametrix $P$ out of Airy functions in a small disk
around each of these endpoints.  The construction is done in a standard way and we do not give details here.

\subsection{Sixth transformation} \label{subsecsixth}

Now we make the final transformation. We choose a small disk $D(\omega^j, x^*,\delta)$ around each of the branch
points, and define
\begin{definition}
	We define
	\begin{equation} \label{defR} 
		R(z) = \begin{cases} 
	S(z) P(z)^{-1} & \text{ in the disks around the endpoints } \omega^j x^*, \, j =0,1,2, \\
	S(z) M(z)^{-1} & \text{ outside of the disks.}
	\end{cases} \end{equation}
\end{definition}

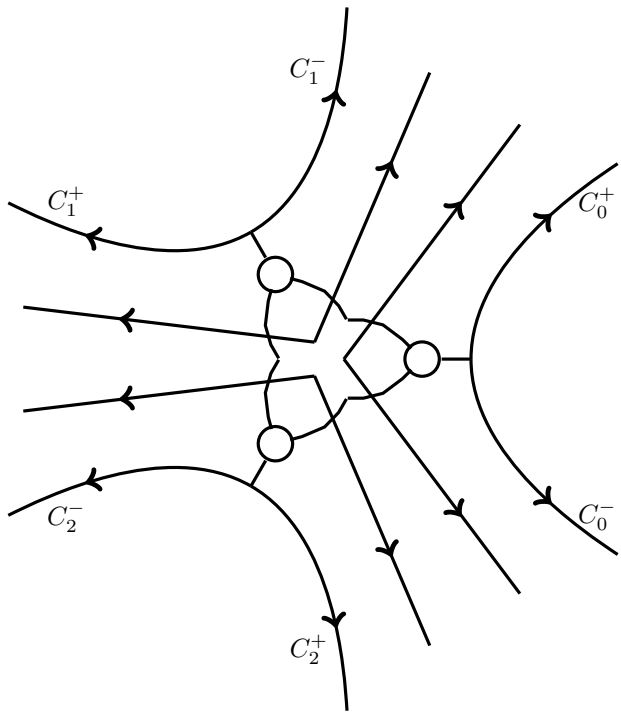
\begin{figure}[t]
\begin{center}
\begin{tikzpicture}[scale=1.3,decoration={markings,mark=at position .67 with {\arrow[black,line width=0.8mm]{>};}}]

\draw[very thick] (1.2,0)--(1.5,0); 
\draw[very thick, rotate around ={120:(0,0)}] (1.2,0)--(1.5,0); 
\draw[very thick, rotate around ={-120:(0,0)}] (1.2,0)--(1.5,0); 


\draw[postaction={decorate},very thick,rotate around={-90:(1.5,0)}] (1.5,0) parabola (3.5,1.5);
\draw[postaction={decorate},very thick,rotate around={90:(1.5,0)}] (1.5,0) parabola (3.5,-1.5);

\draw[postaction={decorate},very thick,rotate around={30:(-0.75,1.3)}] (-0.75,1.3) parabola (1.25,2.8);
\draw[postaction={decorate},very thick,rotate around={210:(-0.75,1.3)}] (-0.75,1.3) parabola (1.25,-0.2);
\draw[postaction={decorate},very thick,rotate around={-210:(-0.75,-1.3)}] (-0.75,-1.3) parabola (1.25,0.2);
\draw[postaction={decorate},very thick,rotate around={-30:(-0.75,-1.3)}] (-0.75,-1.3) parabola (1.25,-2.8);

\draw(2.5,1.6) node[right]{$C_0^+$}; 
\draw(2.5,-1.6) node[right]{$C_0^-$};
\draw[rotate around={120:(0,0)}] (2.5,1.6) node[above]{$C_1^+$}; 
\draw[rotate around={120:(0,0)}] (2.5,-1.6) node[left]{$C_1^-$};

\draw[rotate around={-120:(0,0)}] (2.5,1.6) node[left]{$C_2^+$}; 
\draw[rotate around={-120:(0,0)}] (2.5,-1.6) node[below]{$C_2^-$};

\draw[very thick](0.24,0.4)--(0.4,0.4)--(0.6,0.35)--(0.8,0.2)--(0.89,0.11); 
\draw[very thick](0.24,-0.4)--(0.4,-0.4)--(0.6,-0.35)--(0.8,-0.2)--(0.89,-0.11); 
\draw[very thick, rotate around={120:(0,0)}](0.24,0.4)--(0.4,0.4)--(0.6,0.35)--(0.8,0.2)--(0.89,0.11); 
\draw[very thick, rotate around={120:(0,0)}](0.24,-0.4)--(0.4,-0.4)--(0.6,-0.35)--(0.8,-0.2)--(0.89,-0.11); 
\draw[very thick, rotate around={-120:(0,0)}](0.24,0.4)--(0.4,0.4)--(0.6,0.35)--(0.8,0.2)--(0.89,0.11); 
\draw[very thick, rotate around={-120:(0,0)}](0.24,-0.4)--(0.4,-0.4)--(0.6,-0.35)--(0.8,-0.2)--(0.89,-0.11); 

\draw[postaction={decorate},  very thick] (0.2,0)--(2,2.4);
\draw[postaction={decorate}, very thick] (0.2,0)--(2,-2.4);
\draw[postaction={decorate},  very thick, rotate around={120:(0,0)}] (0.2,0)--(2,2.4);
\draw[postaction={decorate},  very thick, rotate around={120:(0,0)}] (0.2,0)--(2,-2.4);
\draw[postaction={decorate},  very thick, rotate around={-120:(0,0)}] (0.2,0)--(2,2.4);
\draw[postaction={decorate},  very thick, rotate around={-120:(0,0)}] (0.2,0)--(2,-2.4);


\draw[very thick](1,0) circle (5pt);
\draw[very thick,rotate around={120:(0,0)}](1,0) circle (5pt);
\draw[very thick,rotate around={-120:(0,0)}](1,0) circle (5pt);
\end{tikzpicture}
\end{center}
\caption{Contour $\Gamma_R$ for the RH problem for $R$.}
\label{fig:contourR}
\end{figure}

Then $R$ is defined and analytic outside of the union of $\Gamma_S$ with
the  circles around the branch points.
Since the jump matrices of $S$ and $M$ agree on $\Sigma_1 \cup \Sigma_2$, we see that $R$
has an analytic continuation across $\Sigma_2$ and across the part of $\Sigma_1$ that
is outside of the three disks. Similarly, the jump matrices on $S$ and $P$ agree inside
the disks, and so $R$ also has analytic continuous inside the three disks.
It follows that $R$ is analytic in $\mathbb C \setminus \Gamma_R$,
where $\Gamma_R$ is the contour shown in Figure \ref{fig:contourR}. It consists
of the part of $\Sigma_S \setminus (\Sigma_1 \cup \Sigma_2)$ that is outside of
the three disks, together with the three circles around the branch points.

The RH problem for $R$ is thus as follows.
\begin{rhproblem} \label{RHforR}
\begin{itemize}
\item $R$ is analytic in $\mathbb C \setminus \Gamma_R$,
\item $R_+ = R_- J_R$ on $\Gamma_R$ where
\begin{align} \label{JR}
	J_R(z) & = \begin{cases} M(z)^{-1} P(z) & \text{for $z$ on the circles}, \\
	  M(z)^{-1} J_S(z) M(z) & \text{elsewhere on $\Gamma_R$.}
	  \end{cases}
\end{align}
\item $R(z) = I + O(z^{-1})$ as $z \to \infty$.
\end{itemize}
\end{rhproblem}

\begin{remark} \label{remark4}
Following Remarks \ref{remark1} and \ref{remark3} we note that the asymptotic 
condition in the RH problem \ref{RHforR}  is valid uniformly as $z \to \infty$ 
with
\[ \arg z \in (-\pi + \varepsilon, - \pi/3 - \varepsilon) \cup (-\pi/3 + \varepsilon, \pi/3 - \varepsilon)
	\cup (\pi/3 + \varepsilon, \pi - \varepsilon) \]
for any $\varepsilon > 0$.
From Remark \ref{remark3} and the transformation \eqref{defR} it also follows
that $R(z) = O(1)$ as $z \to \infty$ uniformly.
Since the jump matrix $J_R(z)$ for $z \in \Gamma_R$ tends
to the identity matrix as $z \to \infty$, see also \eqref{JRestimate2} below, it is then
easy to show that the asymptotic condition $R(z) = I + O(z^{-1})$ as $z \to \infty$
is in fact valid uniformly.
\end{remark}

All jump matrices $J_R$ in the RH problem \ref{RHforR} tend 
to the identity matrix as $n \to \infty$. Indeed, because of the
matching condition \eqref{matching} we have
\begin{equation} \label{JRestimate1} 
	J_R(z) = I + O(n^{-1}), \qquad \text{for } z \in \partial D(\omega^j x^*, \delta) 
	\end{equation}
while the jump matrices on the remaining parts of $\Gamma_R$ are
exponentially close to 
\begin{equation} \label{JRestimate2} 
	J_R(z) = I + O(e^{- c n |z|^3}), \qquad \text{elsewhere on $\Gamma_R$.} 
	\end{equation}

We conclude two things from \eqref{JRestimate1}--\eqref{JRestimate2}.
The first thing is that the RH problem \ref{RHforR} has a solution
if $n$ is large enough. This indeed follows from \eqref{JRestimate1}--\eqref{JRestimate2}
since for jump matrices $J_R$ close enough to the identity matrix the RH problem \ref{RHforR}
has a unique solution which can be written down as a Neumann series. 
Since the transformations 
\begin{equation} \label{transformations}
	Y \mapsto X \mapsto V \mapsto U \mapsto T \mapsto S \mapsto R
	\end{equation}
	are invertible it then also follows that the solution $Y$ to the
	RH problem \ref{RHforY2} uniquely exists. So in particular the $(1,1)$
	entry exists, which is the orthogonal polynomial $P_{n,n}$. 

The second thing we obtain from \eqref{JRestimate1}--\eqref{JRestimate2}
is that the solution $R$ of the RH problem \ref{RHforR} not only exists
but is also close to the identity matrix as $n \to \infty$. There is an estimate
\begin{equation} \label{Restimate} 
	R(z) = I + O \left( \frac{1}{n(1 + |z|)} \right)  \qquad \text{as } n \to \infty,
	\end{equation}
uniformly for $z \in \mathbb C \setminus \Gamma_R$.
The estimate \eqref{Restimate} completes the steepest descent analysis of the
RH problem.

\subsection{Proof of Lemma \ref{lemmaRH}}

The proof of Lemma \ref{lemmaRH} now follows by unravelling the transformations
\eqref{transformations} and then use \eqref{Restimate}
to see the effect on the orthogonal polynomial, since 
\[ P_{n,n}(z) = Y_{1,1}(z), \]
see \eqref{Y11}.

\begin{proof}

From the definitions \eqref{defX1}--\eqref{defX2}, \eqref{defV1}--\eqref{defV3}, \eqref{defU},
	\eqref{defT1}--\eqref{defT2} we easily see that 
\begin{equation} \label{T11}
	\begin{aligned} 
	P_{n,n}(z) & = X_{1,1}(z) = V_{1,1}(z)  = U_{1,1}(z) e^{n g_1(z)} \\
		& = T_{1,1}(z) e^{n g_1(z)}, \qquad z \in \mathbb C \setminus \Sigma_1. 
		\end{aligned}
	\end{equation}
Using \eqref{T11} and \eqref{defS} we find 
\begin{equation} \label{S11} 
	P_{n,n}(z) = S_{1,1}(z) e^{n g_1(z)}, \qquad z \in \mathbb C \setminus L_1. 
	\end{equation}
For $z$ outside of the disks $D(\omega^j x^*, \delta)$ we have $S = RM$ by \eqref{defR} 
and so by \eqref{Restimate}
\[ S_{1,1}(z) = (1 + O(1/n)) M_{1,1}(z) + O(1/n) \]
which, since $M_{1,1}(z)$ is analytic with no zeros in $\overline{\mathbb C} \setminus \Sigma_1$, see Remark \ref{remarkM11},
can also be written as
\[ S_{1,1}(z) = (1 + O(1/n)) M_{1,1}(z), \qquad z \in  \mathbb C \setminus (L_1 \cup \bigcup_j D(\omega^j x^*, \delta)), \]
and the $O$-term is uniform.
Inserting this into \eqref{S11} we obtain \eqref{Pnnasymptotics} 
uniformly for $z \in \mathbb C \setminus (L_1 \cup \bigcup_j D(\omega^j x^*,\delta))$.
The lense $L_1$ around $\Sigma_1$ and the disks $D(\omega^j x^, \delta)$ around the branch points 
can be made as small as we like.  
It follows that \eqref{Pnnasymptotics} holds uniformly for $z$ in compact subsets of $\overline{ \mathbb C} \setminus \Sigma_1$
and the lemma is proved.
\end{proof}

\subsection{Proofs of Theorems \ref{theorem2} and \ref{theorem3}} \label{subsec:theorem23proof}

\begin{proof}
For $t_0, t_3 > 0$ with $t_0 < t_{0,crit}$ we proved that the orthogonal polynomial
$P_{n,n}$ exists for large enough $n$ and satisfies \eqref{Pnnasymptotics}.
Since $M_{1,1}(z) \neq 0$ for $z \in \mathbb C \setminus \Sigma_1$, the zeros of $P_{n,n}$
accumulate on $\Sigma_1$ as $n \to \infty$, and this proves Theorem~\ref{theorem2}.

From \eqref{Pnnasymptotics} and \eqref{gfunctions} it follows that, uniformly for $z \in \mathbb C \setminus \Sigma_1$,
\[ \lim_{n \to \infty} \frac{1}{n} \log |P_{n,n}(z)| = \Re g_1(z) = \int \log |z-s| d\mu_1^*(s). \]
Standard arguments from logarithmic potential theory, see e.g.\ \cite[Theorem III.4.1]{SafTot} then imply
that $\mu_1^*$ is the weak limit of the normalized zero counting measures of the polynomials
$P_{n,n}$ as $n \to \infty$. By construction, $\mu_1^*$ is the first component of the
minimizer of the vector equilibrium problem, see also Lemma~\ref{lemma41}, and Theorem~\ref{theorem3}
follows.
\end{proof}

\subsection*{Acknowledgements}
The first author is supported in part by the National Science Foundation (NSF) Grants DMS-0652005
and DMS-0969254.

The second author is supported in part by FWO-Flanders projects G.0427.09 and G.0641.11, by K.U.~Leuven
research grant OT/08/33, by the Belgian Interuniversity Attraction Pole P06/02, and by grant MTM2008-06689-C02-01 of the
Spanish Ministry of Science and Innovation.

This work was started when both authors were visiting the Mathematical Sciences Research Institute in
Berkeley in the fall of 2010 during the program ``Random Matrices, Interacting Particle Systems and
Integrable Systems''. We thank the MSRI for its hospitality and wonderful research conditions.

\end{document}